\newif\ifarxiv
\arxivtrue

\ifarxiv
  \documentclass[11pt,a4paper]{article}
\else
  \RequirePackage{fix-cm}
  \documentclass[smallextended,envcountsect,envcountsame]{svjour3}
\fi

\usepackage[T1]{fontenc}
\usepackage[utf8]{inputenc}
\usepackage{amsmath,amssymb,mathtools}
\usepackage{booktabs}
\usepackage{graphicx}
\usepackage[expansion=false]{microtype}
\usepackage[square,numbers,sort&compress]{natbib}

\ifarxiv
  \usepackage{amsthm}
  \usepackage[margin=1.05in]{geometry}
  
  \theoremstyle{plain}
  \newtheorem{theorem}{Theorem}[section]
  \newtheorem{proposition}[theorem]{Proposition}
  \newtheorem{lemma}[theorem]{Lemma}
  \newtheorem{corollary}[theorem]{Corollary}
  \theoremstyle{definition}
  \newtheorem{definition}[theorem]{Definition}
  \newtheorem{assumption}[theorem]{Assumption}
  \newtheorem{remark}[theorem]{Remark}
  
\else
  \smartqed
  \journalname{Finance and Stochastics}
  \providecommand{\proofname}{Proof}
  \renewenvironment{proof}[1][\proofname]{%
    \par\addvspace{\medskipamount}%
    \noindent\emph{#1.}\ \ignorespaces}%
   {\qed\par\addvspace{\medskipamount}}
  \spnewtheorem{assumption}[theorem]{Assumption}{\bfseries}{\rmfamily}
  \def\JELclassname{{\bfseries JEL Classification}\enspace}
  \def\JELclass#1{\par\addvspace\medskipamount{\rightskip=0pt plus1cm
  \def\and{\ifhmode\unskip\nobreak\fi\ $\cdot$
  }\noindent\JELclassname\ignorespaces#1\par}}
\fi

\usepackage[hidelinks]{hyperref}

\newcommand{\R}{\mathbb{R}}
\newcommand{\E}{\mathbb{E}}
\newcommand{\Prob}{\mathbb{P}}
\newcommand{\Qmeas}{\mathbb{Q}}

\newcommand{\Lop}{\mathcal{L}}
\newcommand{\vk}{\mathbf{v}_\kappa}

\newcommand{\papertitle}{Optimal Liquidation with Support and Resistance
Levels under Multi-Skew Brownian Motion}
\newcommand{\disclaimershort}{Any errors are the author's own.}
\newcommand{\disclaimerlong}{This paper reports the author's personal research
and is unrelated to his employment; the ideas presented here have no
connection with Mizuho Securities Co., Ltd. Any errors are the author's own.}

\ifarxiv
  \title{\bfseries \papertitle}
  \author{Jun Maeda\thanks{\disclaimershort}\\[4pt]
  \normalsize \texttt{jun.maeda@warwickgrad.net}}
  \date{September 2026}
\else
  \title{\papertitle\thanks{\disclaimerlong}}
  \titlerunning{Optimal liquidation with support and resistance levels}
  \author{Jun Maeda}
  \authorrunning{J. Maeda}
  \institute{J. Maeda \at
             Mizuho Securities Co., Ltd., Tokyo, Japan \\
             \email{j.maeda@mizuho-sc.com} \\
             Corresponding author; proofs to this address.}
  \date{Received: date / Accepted: date}
\fi

\begin{document}

\ifarxiv
  \pagestyle{myheadings}
  \markright{\small Optimal liquidation with support and resistance levels}
  \maketitle
  \thispagestyle{plain}
\else
  \maketitle
\fi

\begin{abstract}
We solve the perpetual liquidation problem for a geometric multi-skew
Brownian motion carrying local-time pushes upward at a support level and
downward at a resistance level, a model of technical analysis that is Markov
in the price alone. Three geometries arise, separated by a closed-form
criterion: the continuation band lies below resistance, straddles it, or pins
its upper edge there. Selling into resistance is a conclusion rather than an
assumption. Identification is asymmetric: exercise behaviour determines the
support parameters exactly, while the strength of resistance is recoverable
only on an interval with a closed-form endpoint.
\ifarxiv\else
\keywords{technical analysis \and support and resistance \and multi-skew
Brownian motion \and optimal stopping \and identification}
\subclass{60G40 \and 91G80}
\JELclass{C61 \and G11}
\fi
\end{abstract}

\ifarxiv
\noindent\textbf{Keywords:} technical analysis; support and resistance;
multi-skew Brownian motion; optimal stopping; identification.

\medskip\noindent\textbf{JEL:} C61; G11. \quad
\textbf{MSC 2020:} 60G40; 91G80.
\fi

\section{Introduction}
\label{sec:intro}

Technical analysis holds that certain price levels are special. A
\emph{support} level is one at which a falling asset tends to be bought, a
\emph{resistance} level one at which a rising asset tends to be sold. The
claim has a precise probabilistic content: the law of the price is not
invariant under translation of the state space, and the asymmetry is
concentrated on a set of Lebesgue measure zero. It is not only folklore:
\citet{Osler00} finds that the support and resistance levels published by
foreign-exchange firms do predict intraday reversals, and \citet{Osler03}
attributes the effect to the clustering of limit and stop-loss orders at
round numbers --- a mechanism that acts at a level rather than over an
interval.

Skew Brownian motion, introduced by \citet{ItoMcKean63} and characterised as
the unique strong solution of an equation with a local-time term by
\citet{HarrisonShepp81}, is the minimal modification of Brownian motion that
makes one level special; see \citet{Lejay06} for a survey of its
constructions. \citet{AlvarezSalminen17} solved the perpetual stopping problem
for it and found two phenomena that recur throughout the present paper: the
skew point always lies in the continuation region for a large class of
pay-offs, and the exercise region can be disconnected even for a linear
pay-off.

Continuous-time treatments of technical trading itself are fewer.
\citet{Lorig19} take the rules as given and compute the expected pay-off of
moving-average strategies under a general diffusion. \citet{DeAngelisPeskir16}
leave the price a geometric Brownian motion and place the level in the
objective rather than in the dynamics: the holder has an aspiration level and
stops so as to minimise the expected distance of the price from it, so that
support and resistance are targets to be predicted, not features of the
law. In both the level or the rule is an input. Here it is an output: the
levels are written into the dynamics, and the trading rule is what the
stopping problem returns.

The existing models of support and resistance carry a memory. In
\citet{JackaMaeda20} a flag records whether the asset is in a rising or a
falling regime, and switches when the price crosses thresholds on either side
of a reference level; \citet{Henderson26} use the same device, with a
partially reflecting barrier at the level whose coefficients depend on the
regime. A single level can then act as support in one regime and as
resistance in the other, and it changes role as the price moves through a band
around it. This role reversal --- broken support becomes resistance --- is
path-dependent by nature: whether the level supports or resists depends on
where the price has been.

This paper studies the complementary situation, in which support and
resistance are distinct levels with fixed roles, both active at once. We keep
the local-time mechanism and place a permanent upward push at a support level
$L$ and a permanent downward push at a resistance level $H>L$. The resulting
process is a geometric multi-skew Brownian motion in the sense of
\citet{Ramirez11}, Markov in the price alone, with one drift and one
volatility and no auxiliary state. We ask what this configuration implies for
the holder's liquidation decision, and what that decision reveals about the
two levels.

\subsection{Why the physical measure}
\label{sec:whyP}

A price process carrying local time in its returns is not a risk-neutral
object. \citet{Rossello12} showed that such models admit arbitrage, a point
the option-pricing literature largely set aside, and
\citet{Torricelli26} argue that the process used in that literature is not the
It\^o--McKean one. Proposition~\ref{prop:noEMM} records the elementary reason:
local time is singular with respect to Lebesgue measure and survives every
equivalent change of measure, so $e^{-rt}S_t$ cannot be made a local
martingale unless the skew vanishes.

We therefore make no valuation claim. The problem is the \emph{timing} problem
of an agent holding one unit of the asset, posed under the physical measure
with a subjective discount rate $r>0$ representing opportunity cost or
funding. No replication, measure change or derivative price appears below, and
the admissible strategies are single stopping times: the arbitrage of
\citet{Rossello12} requires continuous trading through a skew point, which a
stopping time cannot execute.

Working under $\mathbb P$ is not a precaution but a necessity, for two
independent reasons. The first is Proposition~\ref{prop:noEMM}: no equivalent
measure turning $e^{-rt}S_t$ into a local martingale exists once $\beta\ne0$.
The second holds even if one sets that aside. Under such a measure the drift
would be pinned at $r$, so $e^{-rt}S_t$ would be a martingale, and optional
sampling would give $\E\bigl[e^{-r\tau}S_\tau\bigr]=S_0$ for every stopping
time $\tau$: every liquidation rule would be optimal and the problem would
carry no information. A timing problem exists at all only because the drift is
free of $r$, which is to say only under the physical measure.

\subsection{Outline}

Section~\ref{sec:model} sets up the model. Section~\ref{sec:prelim} gives the
fundamental solutions, the crossing matrix and the smooth-fit lemma, together
with the one-zero property of the fundamental pair
(Lemma~\ref{lem:onezero}) on which the later comparison arguments rest.
Section~\ref{sec:fb} solves the free boundary and establishes the three
geometries, the exact irrelevance of $\beta_H$ in the first
(Theorem~\ref{thm:Airrelevant}), the closed-form criterion for the third
(Theorem~\ref{thm:corner}) and the continuity of the transition
(Proposition~\ref{prop:transition}). Section~\ref{sec:verify} proves
dominance and completes the verification. Section~\ref{sec:ident} contains the
identification results: resistance is identified only on an explicit interval
(Theorem~\ref{thm:ident}), support exactly and in closed form
(Theorem~\ref{thm:inversion}). Section~\ref{sec:samesign} treats two skews of the same
sign: the problem decouples into two bands with a three-component stopping
region (Proposition~\ref{prop:decouple}), the bands merge continuously at an
explicit critical separation (Proposition~\ref{prop:merger}), and dominance on
the merged band --- where the argument of Section~\ref{sec:verify} breaks
down --- is recovered by comparison with the single-skew problem
(Theorem~\ref{thm:domsame}), making the criterion sharp. Both skew parameters
are then identified. Section~\ref{sec:othersigns} disposes of the remaining
two sign pairs: with both skews negative there is no continuation region, and
with $\beta_L<0<\beta_H$ the geometry of Sections~\ref{sec:fb}--\ref{sec:verify}
recurs with the two levels exchanged, the corner becoming a stop-loss at $L$.
Section~\ref{sec:disc} concludes, summarising the results and discussing the
mechanism behind them and open
directions. Appendix~\ref{app:proofs} carries two long proofs and
Appendix~\ref{app:num} the numerical validation.

\section{The model}
\label{sec:model}

\subsection{Geometric multi-skew Brownian motion}

Let $(\Omega,\mathcal F,(\mathcal F_t),\Prob)$ carry a standard Brownian
motion $W$. Fix levels $0<L<H$, parameters $\mu\in\R$, $\sigma>0$, and skew
coefficients
\begin{equation}
\beta_L\in(0,1),\qquad \beta_H\in(-1,0).
\label{eq:betasigns}
\end{equation}
Sections~\ref{sec:samesign} and~\ref{sec:othersigns} relax
\eqref{eq:betasigns} and treat the other three sign pairs; everything before
them assumes it.
The price solves
\begin{equation}
dS_t=\mu S_t\,dt+\sigma S_t\,dW_t
 +\frac{\beta_L}{L}S_t\,d\ell^L_t(S)+\frac{\beta_H}{H}S_t\,d\ell^H_t(S),
\qquad S_0=x>0,
\label{eq:SDE}
\end{equation}
with $\ell^L,\ell^H$ the symmetric local times of $S$. Equivalently, with
$X=\log S$, $\nu=\mu-\tfrac12\sigma^2$, $\lambda=\log L$, $\eta=\log H$,
\begin{equation}
dX_t=\nu\,dt+\sigma\,dW_t+\beta_L\,d\ell^\lambda_t(X)
     +\beta_H\,d\ell^\eta_t(X),
\label{eq:SDElog}
\end{equation}
so $X$ is a multi-skew Brownian motion with drift and $S=e^X$.

The sign convention \eqref{eq:betasigns} is the whole of the economics:
$\beta_L>0$ pushes the price up at $L$, which is what a support level does,
and $\beta_H<0$ pushes it down at $H$. There are no regimes: a level's
character is fixed once and for all, and the process has no memory of where it
has been. Remark~\ref{rem:signload} shows the convention is load-bearing
rather than cosmetic.

\begin{proposition}[Well-posedness]
\label{prop:wellposed}
For $|\beta_L|,|\beta_H|\le1$ equation \eqref{eq:SDElog} admits a unique
strong solution, and $S=e^X$ is a regular strong Markov diffusion on
$(0,\infty)$.
\end{proposition}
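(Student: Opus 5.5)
The plan is to remove the local-time terms by a change of scale, quote a pathwise-uniqueness theorem for the transformed equation, and read the diffusion properties off its scale function and speed measure. Since $S=e^X$ and $x\mapsto e^x$ is a homeomorphism of $\R$ onto $(0,\infty)$, everything can be proved for $X$ and transported. The symmetric-local-time conventions in \eqref{eq:SDE} and \eqref{eq:SDElog} match by It\^o--Tanaka, because $d\langle X\rangle=\sigma^2dt$ and $\ell^{L}(S)=L\,\ell^{\lambda}(X)$.

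\emph{Step 1: the interior case $|\beta_L|,|\beta_H|<1$.} I would take the piecewise-linear map $\varphi$ with slopes $1$ on $(-\infty,\lambda)$, $c_L=(1+\beta_L)/(1-\beta_L)$ on $(\lambda,\eta)$, and $c_Lc_H$ on $(\eta,\infty)$, where $c_H=(1+\beta_H)/(1-\beta_H)$. The symmetric It\^o--Tanaka formula shows that the slope ratio $(1-\beta)/(1+\beta)$ at each kink exactly cancels the $d\ell$ term. Hence $Y=\varphi(X)$ solves
\[
dY_t=\nu\varphi'(\varphi^{-1}(Y_t))\,dt+\sigma\varphi'(\varphi^{-1}(Y_t))\,dW_t ,
\]
an equation with bounded drift and a diffusion coefficient that is piecewise constant and bounded away from $0$ and $\infty$. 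Existence of a weak solution follows from Krylov/Engelbert--Schmidt. Pathwise uniqueness follows from Le Gall's (or Nakao's) criterion, since the diffusion coefficient has bounded variation and is bounded below. Yamada--Watanabe then gives a unique strong solution $Y$, and $X=\varphi^{-1}(Y)$ solves \eqref{eq:SDElog}. Any solution of \eqref{eq:SDElog} maps to a solution of the $Y$-equation, so uniqueness transfers back. This reproduces \citet{Ramirez11}. Strong uniqueness makes the solution a strong Markov process with continuous paths. Its scale density is $e^{-2\nu x/\sigma^2}$ times a step function with jumps at $\lambda,\eta$, which is strictly positive, and its speed measure charges every open interval. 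Every point therefore reaches every other point with positive probability, and the diffusion is regular on $\R$, hence $S$ is regular on $(0,\infty)$.

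\emph{Step 2: the boundary case $|\beta|=1$ at one or both levels.} Here $\varphi$ degenerates, because one slope becomes $0$ or $\infty$. Following \citet{HarrisonShepp81}, I would treat this case as Skorokhod reflection. Say $\beta_L=1$. Below $\lambda$ the process is a drifted Brownian motion until it first hits $\lambda$. After that, with $\beta_L=1$ the equation forces $X\ge\lambda$, and $\beta_L\,d\ell^\lambda$ is exactly the minimal regulator. Existence and uniqueness on $[\lambda,\infty)$ then follow by solving the Skorokhod problem at $\lambda$, combined with Step 1's argument at $\eta$ if $|\beta_H|<1$. The argument runs on each excursion interval between successive hitting times of $\{\lambda,\eta\}$, and the strong solution is obtained by concatenating with the strong Markov property. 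If both $|\beta|=1$ with the pushes pointing into $[\lambda,\eta]$, this is doubly reflected drifted Brownian motion, which is classical. Uniqueness is again pathwise, since the reflection map is Lipschitz in the sup norm.

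\emph{Main obstacle.} The delicate point is the meaning of ``regular'' in the boundary case. With $\beta_L=1$ the set $[L,\infty)$ is absorbing from inside, so regularity must be read in the Feller sense: on the state space actually charged from any starting point, the scale function is strictly increasing and continuous, the speed measure is positive and finite on compacts, and $L$ or $H$ is an instantaneously reflecting, accessible point. I would verify this by computing $s$ and $m$ as limits of the Step 1 formulas as $\beta\to\pm1$. In that limit the speed mass on the far side of the level vanishes while the near side stays nondegenerate. I would then check that hitting probabilities among points of the charged interval are positive. This limiting computation is where I expect the real work to lie.
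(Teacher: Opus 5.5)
Your Step~1 is correct and more self-contained than the paper's argument. The paper argues by citation: Harrison--Shepp for one local-time term, Le Gall and Ram\'irez for the multi-skew equation. It then concatenates single-skew solutions across excursions between $\lambda$ and $\eta$, and reads regularity off a scale function that is continuous and strictly increasing with a kink at each level. You instead remove both local-time terms at once with a global piecewise-linear change of scale. This reduces \eqref{eq:SDElog} to an equation with piecewise-constant coefficients bounded away from $0$ and $\infty$, so Nakao/Le Gall pathwise uniqueness and Yamada--Watanabe apply directly. For $|\beta_L|,|\beta_H|<1$ no concatenation is needed, and the scale function comes out explicitly.

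One formula must be fixed: your slopes are inverted. With slope $1$ below $\lambda$ and $c_L$ above, the symmetric It\^o--Tanaka formula gives $d\ell^\lambda$ the coefficient $\tfrac12\beta_L(c_L+1)+\tfrac12(c_L-1)$. This vanishes only for $c_L=(1-\beta_L)/(1+\beta_L)$; your $c_L=(1+\beta_L)/(1-\beta_L)$ leaves $2\beta_L/(1-\beta_L)\neq0$. The map must itself satisfy the transmission condition \eqref{eq:transmission}, so the slopes are $1$, $\frac{1-\beta_L}{1+\beta_L}$ and $\frac{(1-\beta_L)(1-\beta_H)}{(1+\beta_L)(1+\beta_H)}$. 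That is what your phrase ``slope ratio $(1-\beta)/(1+\beta)$'' says if read as above over below. The same correction goes into the step function of your scale density; nothing else in Step~1 changes.

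On the endpoint case your suspicion is right, but it counts against the statement rather than calling for a limiting computation. With symmetric local time and $\beta_L=1$, any solution stays in $[\lambda,\infty)$ once it hits $\lambda$: the left local time at $\lambda$ vanishes, since $(1-\beta_L)\,\ell^{\lambda+}=(1+\beta_L)\,\ell^{\lambda-}$ for right and left local times. Hence $\Prob_x(T_y<\infty)=0$ whenever $y<L<x$, and the mirror statement holds for $\beta_L=-1$ and at $H$. So $S$ is simply not regular on $(0,\infty)$, and reading ``regular'' on the sub-interval actually charged proves a different proposition.

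The paper's proof has the same blind spot: at $|\beta|=1$ its scale function has derivative $0$ or $\infty$ on one side of the level, so it is not strictly increasing. Nothing downstream needs the endpoint, though: \eqref{eq:betasigns} and \eqref{eq:samesign} are open conditions, and Lemma~\ref{lem:crossing} assumes $\beta\in(-1,1)$. The clean repair is to keep existence and uniqueness for $|\beta|\le1$ (your Step~2, or Harrison--Shepp) and assert regularity only for $|\beta_L|,|\beta_H|<1$.
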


\begin{proof}
\citet{HarrisonShepp81} give strong existence and uniqueness for a single
local-time term with $|\beta|\le1$ (and non-existence for $|\beta|>1$);
\citet{LeGall84} covers one-dimensional equations involving the local time of
the unknown process with drift, and \citet{Ramirez11} treats finitely many
skew points. Since $\lambda<\eta$ the two local-time terms are supported on
disjoint sets and the solution is obtained by concatenating single-skew
solutions across excursions between the levels. Regularity and the strong
Markov property are inherited from the scale and speed characterisation, the
scale function being continuous and strictly increasing with a kink at each
skew point.
\end{proof}

\begin{lemma}[Transmission conditions]
\label{lem:transmission}
Write
\[
\Lop u(x)=\tfrac12\sigma^2x^2u''(x)+\mu xu'(x),\qquad x\notin\{L,H\}.
\]
A continuous function $u$ that is $C^2$ on each of $(0,L]$, $[L,H]$ and
$[H,\infty)$ --- that is, $C^2$ off $\{L,H\}$ with $u'$ and $u''$
admitting finite one-sided limits at $L$ and $H$ --- lies in the domain of the
generator of \eqref{eq:SDE} if and only if
\begin{equation}
(1+\beta_L)u'(L^+)=(1-\beta_L)u'(L^-),
\qquad
(1+\beta_H)u'(H^+)=(1-\beta_H)u'(H^-).
\label{eq:transmission}
\end{equation}
\end{lemma}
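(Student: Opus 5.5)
The plan is to characterise the generator through the scale--speed (Feller) description from Proposition~\ref{prop:wellposed}, and to read off the gluing conditions at $L$ and $H$ from the Itô--Tanaka formula.

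\textbf{Sufficiency via Itô--Tanaka.} Work in log coordinates: put $v(y)=u(e^y)$, so $v$ is piecewise $C^2$ with kinks at $\lambda,\eta$. Apply the Itô--Tanaka formula for the semimartingale $X$ of \eqref{eq:SDElog}, writing $v$ as a $C^2$ function plus $\tfrac12[v'(\lambda^+)-v'(\lambda^-)]|y-\lambda|$ plus the analogous term at $\eta$. Using symmetric local times, the $dX$ term integrates the symmetric derivative $\tfrac12(v'(\lambda^+)+v'(\lambda^-))$ against $\beta_L\,d\ell^\lambda_t$. The kink contributes $\tfrac12(v'(\lambda^+)-v'(\lambda^-))\,d\ell^\lambda_t(X)$ through $d|X-\lambda|$. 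Here I must check the normalisation: the local time of $X$ relative to the continuous part is absorbed by the convention that makes \eqref{eq:SDElog} well-posed, as in \citet{HarrisonShepp81}. The total coefficient of $d\ell^\lambda_t$ is then proportional to
\[
\tfrac12\bigl[(1+\beta_L)v'(\lambda^+)-(1-\beta_L)v'(\lambda^-)\bigr],
\]
and the same holds at $\eta$. Since $v'(\lambda^\pm)=Lu'(L^\pm)$, the factor $L$ cancels and yields exactly \eqref{eq:transmission}. When \eqref{eq:transmission} holds, the local-time terms vanish and $u(S_t)-u(x)-\int_0^t\Lop u(S_s)\,ds$ is a local martingale. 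The occupation time of $\{L,H\}$ is Lebesgue-null, so the value of $\Lop u$ there is irrelevant. With the boundedness or localisation implicit in the domain, Dynkin's formula then gives $u\in\mathcal D$ with generator $\Lop u$.

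\textbf{Necessity.} Suppose $u\in\mathcal D$. Then $u(S_t)-\int_0^t\mathcal Gu(S_s)\,ds$ is a martingale, while the Itô--Tanaka decomposition above exhibits an additional finite-variation term $c_L\ell^L_t+c_H\ell^H_t$. A continuous local martingale of finite variation is constant, and $\ell^L$ is singular with respect to $ds$ but nonzero with positive probability when started at $L$ (regularity, Proposition~\ref{prop:wellposed}). Hence $c_L=c_H=0$.

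\textbf{Alternative check via scale and speed.} The same conclusion can be cross-checked through the Feller characterisation: the domain consists of functions with $du/ds$ continuous, where $s$ is the scale function. The scale density jumps at $\lambda$ by the ratio $(1-\beta_L)/(1+\beta_L)$, the known skew Brownian motion scale, and continuity of $u'/s'$ gives the same condition. I will present this as the main route, taking the scale from \citet{Ramirez11}, since it avoids the normalisation issue.

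The main obstacle is the local-time normalisation. The symmetric local time of $X$ versus that of its continuous part, and the factor $\tfrac12$, must combine to give precisely the weights $1\pm\beta$ and not, say, $1\pm\beta/2$. I would settle this by fixing the convention through the scale function $s'(y)\propto(1-\beta_L)$ below $\lambda$ and $\propto(1+\beta_L)$ above, locally ignoring drift, which follows from requiring $s(X)$ to have no local-time term.
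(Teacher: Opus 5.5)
Your It\^o--Tanaka argument is correct and is the paper's proof. The paper does the same bookkeeping directly in price coordinates: the symmetric derivative against $\beta_z\,d\ell^z$ from the $dS$ integral, plus $\tfrac12(u'(z^+)-u'(z^-))\,d\ell^z$ from the kink, totalling $\tfrac12\Delta_z$. It uses the same finite-variation argument for necessity, and your passage through $v'(\lambda^\pm)=Lu'(L^\pm)$ is Remark~\ref{rem:expfree}.

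Your normalisation worry is already settled by that computation. Equation \eqref{eq:SDElog} is written with the symmetric local time of $X$ itself. This is also the local time that Tanaka's formula for $|X-\lambda|$ produces when $\widehat{\operatorname{sgn}}(0)=0$. With that convention, $\int\widehat{\operatorname{sgn}}(X_s-\lambda)\,dX_s$ picks up nothing from $\beta_L\,d\ell^\lambda_s$. The coefficient is therefore exactly, not merely proportionally, $\tfrac12[(1+\beta_L)v'(\lambda^+)-(1-\beta_L)v'(\lambda^-)]$. You should keep this as the main route.

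The scale--speed route you propose to promote instead is where the proposal goes wrong, in three ways.

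First, the scale density in your last paragraph is reversed. Requiring $s(X)$ to carry no local-time term is the transmission condition applied to $u=s$, namely $(1+\beta_L)s'(\lambda^+)=(1-\beta_L)s'(\lambda^-)$. So $s'\propto1+\beta_L$ below $\lambda$ and $\propto1-\beta_L$ above. That is what your earlier ratio $(1-\beta_L)/(1+\beta_L)$ says, and what an upward exit probability $(1+\beta_L)/2$ requires. With your stated version, continuity of $u'/s'$ gives $(1-\beta_L)u'(L^+)=(1+\beta_L)u'(L^-)$, which is the condition for $-\beta_L$.

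Second, the route does not avoid the normalisation question. The scale is produced by the very same local-time bookkeeping, or else imported from a source whose local-time convention must be matched to the symmetric one.

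Third, Feller's description of the generator needs more than continuity of $du/ds$: it requires $\frac{d}{dm}\frac{du}{ds}$ to exist as a continuous function. For piecewise-$C^2$ $u$ this adds $\Lop u(L^+)=\Lop u(L^-)$, and likewise at $H$, which the lemma does not assume. The ``if'' direction holds only in the local-martingale (extended-generator) sense the paper uses, and that is exactly what the It\^o--Tanaka route delivers.
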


\begin{proof}
Throughout the proof $z$ denotes either skew level, $z\in\{L,H\}$, and
$\beta_z$ the corresponding coefficient, so $\beta_z=\beta_L$ when $z=L$ and
$\beta_z=\beta_H$ when $z=H$. First, $u$ is a difference of convex functions
on a neighbourhood of each such $z$. Indeed $u'$ is $C^1$ on each side of $z$ up to $z$, with a finite
jump there, so its total variation near $z$ is bounded by $\int|u''|$ on the
two sides plus $|u'(z^+)-u'(z^-)|$, all finite. A continuous function whose
derivative has locally bounded variation is a difference of convex functions:
write $u'=g_1-g_2$ with $g_1,g_2$ non-decreasing and integrate. (Mere
boundedness of $u'$ near $z$ would not suffice, since a bounded derivative can
oscillate without bounded variation; this is why the one-sided $C^2$
hypothesis is imposed.) The It\^o--Tanaka--Meyer formula therefore applies: with $\ell^z$ the
symmetric local time and $u'$ read as the symmetric derivative
$\tfrac12(u'(z^+)+u'(z^-))$ on $\{S=z\}$,
\[
\begin{split}
u(S_t)&=u(S_0)+\int_0^t u'(S_s)\,dS_s
+\tfrac12\int_0^tu''(S_s)\,d\langle S\rangle_s\\
&\qquad+\tfrac12\sum_{z\in\{L,H\}}\bigl(u'(z^+)-u'(z^-)\bigr)\ell^z_t .
\end{split}
\]
We collect the coefficient of $d\ell^z_s$, which arises from two sources.
First, substituting \eqref{eq:SDE} into $\int u'(S_s)\,dS_s$: the local-time
part of $dS_s$ is $(\beta_z/z)S_s\,d\ell^z_s$, and since $\ell^z$ increases
only on $\{S_s=z\}$ this equals $\beta_z\,d\ell^z_s$; multiplied by the
symmetric derivative it contributes
$\tfrac12\beta_z\bigl(u'(z^+)+u'(z^-)\bigr)\,d\ell^z_s$. Second, the jump term
of the It\^o--Tanaka--Meyer formula contributes
$\tfrac12\bigl(u'(z^+)-u'(z^-)\bigr)\,d\ell^z_s$ directly. The total
coefficient of $d\ell^z_s$ is therefore
\begin{equation}
\begin{split}
\tfrac12\beta_z\bigl(u'(z^+)+u'(z^-)\bigr)
+\tfrac12\bigl(u'(z^+)-u'(z^-)\bigr)&=\tfrac12\Delta_z,\\
\Delta_z:=(1+\beta_z)u'(z^+)-(1-\beta_z)u'(z^-)&.
\end{split}
\label{eq:localtimecoef}
\end{equation}
Hence
$u(S_t)-\int_0^t\Lop u(S_s)\,ds
=\text{local martingale}+\tfrac12\sum_z\Delta_z\ell^z_t$.
If every $\Delta_z=0$ the right-hand side is a local martingale, so $u$ is in
the domain and the generator acts as $\Lop$. Conversely, $S$ is regular
(Proposition~\ref{prop:wellposed}), so from any starting point each $\ell^z$ is
a non-constant increasing additive functional; a non-zero $\tfrac12\Delta_z
\ell^z$ is then of finite variation and not identically zero, so it cannot be
a local martingale, and $u$ is in the domain only if $\Delta_z=0$. That the
resulting operator is the full generator, rather than a restriction of it, is
the single-skew statement of \citet{Lejay06} applied on each of the two levels
separately, which is legitimate because $\lambda<\eta$ and the two local times
have disjoint support; see also \citet{Ramirez11}.
\end{proof}

The terminology is from elliptic theory, where transmission conditions join
solutions across an interface of discontinuous coefficients; here the
interface is the skew point and the discontinuity is the local-time push.
Note what \eqref{eq:transmission} constrains: the function itself is
continuous at $L$ and $H$, and it is the \emph{slope} that jumps, in the
fixed ratio $u'(z^-)/u'(z^+)=(1+\beta_z)/(1-\beta_z)$ --- a quantity that will
be named the permeability of the level in \eqref{eq:permeability}. This should not be confused with the other derivative
condition in the paper. Smooth fit is a condition \emph{we impose}, at
the free boundaries $a$ and $b$: it makes $V'$ agree with $g'$ there, and it
is the optimality requirement that serves to determine those unknown
boundaries. Transmission, by contrast, is not a choice: by
Lemma~\ref{lem:transmission} every function in the domain of the generator
must satisfy it at the fixed levels $L$ and $H$. Whenever $\beta_z\ne0$ and
$V'(z)\ne0$ it makes $V'$ discontinuous at $z$; when $\beta_z=0$ the ratio is
$1$ and there is no kink. The counting works out: two unknowns $(a,b)$, and two
equations from smooth fit, once transmission has been used to glue the
branches together.

\paragraph{Terminology.} Let $u$ be continuous and piecewise $C^2$ as in
Lemma~\ref{lem:transmission}, and let $I\subset(0,\infty)$ be an open
interval. By the computation in that proof,
\[
d\bigl(e^{-rt}u(S_t)\bigr)=e^{-rt}(\Lop-r)u(S_t)\,dt
+\tfrac12e^{-rt}\sum_{z\in\{L,H\}}\Delta_z\,d\ell^z_t+dN_t
\]
with $N$ a local martingale and $\Delta_z$ as in \eqref{eq:localtimecoef}.
We call $u$ \emph{$r$-harmonic} on $I$ if $(\Lop-r)u=0$ on $I\setminus\{L,H\}$
and $\Delta_z=0$ at every skew point $z\in I$, that is, if $u$ solves
$(\Lop-r)u=0$ on $I$ in the sense of Lemma~\ref{lem:transmission}; then
$e^{-rt}u(S_t)$ is a local martingale while $S$ stays in $I$. We call $u$
\emph{$r$-superharmonic} on $I$ if instead $(\Lop-r)u\le0$ on
$I\setminus\{L,H\}$ and $\Delta_z\le0$ at every skew point in $I$; then
$e^{-rt}u(S_t)$ is a local supermartingale while $S$ stays in $I$. A
non-negative function $u$ is \emph{$r$-excessive} if $e^{-rt}u(S_t)$ is a
supermartingale under every $\Prob_x$, equivalently
$\E_x[e^{-r\tau}u(S_\tau)]\le u(x)$ for all $x$ and all stopping times $\tau$.
A non-negative $u$ that is $r$-superharmonic on all of $(0,\infty)$ is
$r$-excessive, since a non-negative local supermartingale is a
supermartingale. We say that $u$ is $r$-excessive at a point, or away from the
skew points, when the local conditions above hold there. Finally, the value
function $v$ is the least $r$-excessive majorant of $g$
\citep{DayanikKaratzas03}.

\begin{remark}[the exponential change of variable is free]
\label{rem:expfree}
For skew Brownian motion the condition at $\lambda$ is
$(1+\beta_L)\tilde u'(\lambda^+)=(1-\beta_L)\tilde u'(\lambda^-)$ with
$\tilde u(y)=u(e^y)$. Since $\tilde u'(y)=e^yu'(e^y)$ the factor $L$ appears on
both sides and cancels, so \eqref{eq:transmission} takes the same form in
price coordinates as in log coordinates.

The reason is that both one-sided derivatives are rescaled by the same
factor. If $S=f(X)$ for an increasing $C^1$ map $f$ and $z=f(\lambda)$, then
$u'(z^\pm)=\tilde u'(\lambda^\pm)/f'(\lambda)$, with one and the same
$f'(\lambda)$ on both sides because $f'$ is continuous. Condition
\eqref{eq:transmission} constrains only the ratio $u'(z^+)/u'(z^-)$, in which
this common factor cancels. The transmission condition is therefore a property
of the level itself, not of the coordinate in which the level is expressed.

That the coefficient is the \emph{same} $\beta$ in both coordinates is a
separate matter, and it is here that local time enters. Symmetric local time
rescales as $\ell^{f(\lambda)}(f(X))=f'(\lambda)\,\ell^\lambda(X)$, so with
$f=\exp$ one has $\ell^z(S)=z\,\ell^\lambda(X)$, and the local-time term
$(\beta_z/z)S\,d\ell^z(S)$ of \eqref{eq:SDE} equals $\beta_z z\,d\ell^\lambda(X)$
on $\{S=z\}$. This is exactly the term that It\^o's formula for $S=e^X$
produces from the local-time term of \eqref{eq:SDElog}, which is why
\eqref{eq:SDE} and \eqref{eq:SDElog} describe the same process with the same
$\beta_z$.
\end{remark}

\subsection{The stopping problem}

\begin{assumption}
\label{ass:drift}
$0<r$ and $\mu<r$.
\end{assumption}

Both halves are used, and for different purposes. Positivity of $r$ is what
makes the indicial roots straddle zero in Lemma~\ref{lem:exponents}, since the
product of the roots is $-r/(\tfrac12\sigma^2)$; the whole exponent structure,
and with it the fundamental pair of Definition~\ref{def:fund}, depends on it.
The inequality $\mu<r$ is what makes $(\Lop-r)g=(\mu-r)x$ strictly negative,
so that the pay-off is $r$-excessive away from the skew points.

Thus $e^{-rt}S_t$ is a supermartingale away from the skew points: absent the
local-time mechanism the holder would sell immediately, and every
continuation region below is generated by the skew alone.

With $g(x)=x$ the value function is
\begin{equation}
v(x):=\sup_{\tau\in\mathcal T}\E_x\!\left[e^{-r\tau}S_\tau\right],
\label{eq:value}
\end{equation}
where $\mathcal T$ is the set of stopping times of the filtration of $S$ and
$e^{-r\tau}S_\tau:=0$ on $\{\tau=\infty\}$.

\begin{proposition}[No equivalent martingale measure]
\label{prop:noEMM}
If $\beta_L\neq0$ or $\beta_H\neq0$ there is no $\Qmeas\sim\Prob$ under which
$(e^{-rt}S_t)_{t\ge0}$ is a local martingale.
\end{proposition}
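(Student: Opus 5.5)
The plan is to argue by contradiction using the semimartingale decomposition of $e^{-rt}S_t$ and Girsanov's theorem. Suppose $\Qmeas\sim\Prob$ makes $Y_t:=e^{-rt}S_t$ a $\Qmeas$-local martingale, and say $\beta_L\ne0$; the case $\beta_H\ne0$ is the same with $L$ replaced by $H$. Under $\Prob$, \eqref{eq:SDE} and integration by parts give
\[
dY_t=Y_t\bigl((\mu-r)\,dt+\sigma\,dW_t\bigr)+e^{-rt}\beta_L\,d\ell^L_t+e^{-rt}\beta_H\,d\ell^H_t ,
\]
using $S_t\,d\ell^z_t=z\,d\ell^z_t$ as in the proof of Lemma~\ref{lem:transmission}. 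Since $\Qmeas\sim\Prob$ on each $\mathcal F_t$, the density process $Z_t=d\Qmeas/d\Prob|_{\mathcal F_t}$ is a strictly positive $\Prob$-martingale. Semimartingale decompositions are preserved under equivalent change of measure (Girsanov--Meyer), so $Y$ is a continuous $\Qmeas$-semimartingale. Its $\Qmeas$-canonical decomposition is $Y=Y_0+M^{\Qmeas}+A^{\Qmeas}$ with
\[
A^{\Qmeas}_t=\int_0^t Y_s(\mu-r)\,ds+\int_0^t\frac{d\langle Z,\textstyle\int\sigma Y\,dW\rangle_s}{Z_{s}}+\sum_z\beta_z\int_0^te^{-rs}\,d\ell^z_s .
\]

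The key step is to show that the local-time part of $A^{\Qmeas}$ cannot be cancelled. By the martingale representation theorem (the filtration is generated by $W$, or at least $Z$ admits the form $Z=Z_0+\int\theta Z\,dW+N$ with $N\perp W$), the Girsanov correction equals $\int_0^t\sigma Y_s\theta_s\,ds$. This term is absolutely continuous in $t$, and so is the $(\mu-r)$ drift. The local-time terms, by contrast, are carried by $\{s:S_s\in\{L,H\}\}$, which has Lebesgue measure zero a.s. by the occupation-times formula (since $\langle S\rangle$ has density $\sigma^2S^2>0$). If $Y$ were a $\Qmeas$-local martingale, then $A^{\Qmeas}\equiv0$, so
\[
\beta_L\int_0^te^{-rs}\,d\ell^L_s+\beta_H\int_0^te^{-rs}\,d\ell^H_s
\]
would be absolutely continuous. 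Taking its Lebesgue decomposition, the singular part must vanish. Restricting to times when $S$ is near $L$ (the supports of $\ell^L$ and $\ell^H$ are disjoint), this forces $\beta_L\ell^L\equiv0$.

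It remains to check that this is impossible. As in the converse part of Lemma~\ref{lem:transmission}, regularity (Proposition~\ref{prop:wellposed}) ensures that $\ell^L$ is not identically zero with positive $\Prob$-probability, for instance when starting from $x=L$ or after the hitting time of $L$, which is finite with positive probability. Since $\Qmeas\sim\Prob$, the same event has positive $\Qmeas$-probability, which contradicts $\beta_L\ne0$.

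The main obstacle is justifying that the Girsanov drift correction is absolutely continuous without assuming the filtration is Brownian. If representation is unavailable, I would use instead the general fact that $\langle Z,M\rangle\ll\langle M\rangle$ (Kunita--Watanabe) with $\langle M\rangle_t=\int\sigma^2Y^2\,ds$. This gives absolute continuity in all cases, so the argument does not depend on the representation theorem.
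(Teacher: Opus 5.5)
Your argument is correct and is essentially the paper's: both write the finite-variation part of $e^{-rt}S_t$ under $\Qmeas$ as a $dt$-absolutely continuous drift plus the local-time terms. Both then conclude that each $\beta_z$ must vanish, because $d\ell^L$ and $d\ell^H$ are carried by the Lebesgue-null time sets $\{S=L\}$ and $\{S=H\}$ and so are singular with respect to $dt$.

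Your version is more careful than the paper's in two places. The Kunita--Watanabe bound $\langle Z,M\rangle\ll\langle M\rangle$ justifies the paper's "Girsanov kernel" without assuming a Brownian filtration. The regularity check that $\ell^L\not\equiv0$ with positive probability makes explicit a step the paper leaves implicit.
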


\begin{proof}
Let $\Qmeas\sim\Prob$ with Girsanov kernel $\theta$, so
$dW_t=dW^\Qmeas_t-\theta_t\,dt$. Local time is a pathwise limit of
quadratic-variation functionals, hence invariant under an equivalent change of
measure, and the local-time terms of \eqref{eq:SDE} are unchanged. Then
\[
d\!\left(e^{-rt}S_t\right)=e^{-rt}S_t(\mu-\sigma\theta_t-r)\,dt
+e^{-rt}S_t\sigma\,dW^\Qmeas_t
+e^{-rt}S_t\Bigl(\tfrac{\beta_L}{L}d\ell^L_t+\tfrac{\beta_H}{H}d\ell^H_t\Bigr).
\]
The finite-variation part must vanish. The measures $d\ell^L,d\ell^H$ are
carried by $\{t:S_t=L\}$ and $\{t:S_t=H\}$, both Lebesgue-null, so they are
singular with respect to $dt$ and cannot cancel against the drift term; each
must vanish separately, forcing $\beta_L=\beta_H=0$.
\end{proof}

\section{Fundamental solutions and the crossing map}
\label{sec:prelim}

\subsection{Exponents}

Solutions $x^\alpha$ of $(\Lop-r)u=0$ require
\begin{equation}
\tfrac12\sigma^2\alpha(\alpha-1)+\mu\alpha-r=0,
\label{eq:indicial}
\end{equation}
with roots $\alpha_1<\alpha_2$.

\begin{lemma}
\label{lem:exponents}
Under Assumption~\ref{ass:drift} the roots of \eqref{eq:indicial} satisfy
$\alpha_1<0<1<\alpha_2$. Consequently
\begin{equation}
\kappa_1:=\frac{\alpha_2-1}{\alpha_2-\alpha_1},\qquad
\kappa_2:=\frac{1-\alpha_1}{\alpha_2-\alpha_1},\qquad
\kappa_1+\kappa_2=1,
\label{eq:kappa}
\end{equation}
are strictly positive.
\end{lemma}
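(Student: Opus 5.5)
The plan is to study the quadratic
\[
p(\alpha):=\tfrac12\sigma^2\alpha(\alpha-1)+\mu\alpha-r,
\]
which is the left-hand side of \eqref{eq:indicial}, and to locate its roots by evaluating $p$ at the two points $0$ and $1$. Since $\sigma>0$, $p$ is a strictly convex parabola with leading coefficient $\tfrac12\sigma^2>0$. Hence $p$ is negative exactly on the open interval between its roots $\alpha_1<\alpha_2$ and positive outside the closed interval. It therefore suffices to show $p(0)<0$ and $p(1)<0$. Both points then lie strictly inside $(\alpha_1,\alpha_2)$, which gives $\alpha_1<0<1<\alpha_2$; real distinct roots are automatic once $p$ takes a negative value.

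The two evaluations use the two halves of Assumption~\ref{ass:drift} separately, matching the remark after it. First, $p(0)=-r<0$ because $r>0$; equivalently, the product of the roots $-r/(\tfrac12\sigma^2)$ is negative, so the roots straddle $0$. Second, $p(1)=\mu-r<0$ because $\mu<r$. This is the same sign computation as $(\Lop-r)g=(\mu-r)x$ for $g(x)=x$.

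For the $\kappa$'s, note that $\alpha_2-\alpha_1>0$, that $\alpha_2-1>0$ because $\alpha_2>1$, and that $1-\alpha_1>0$ because $\alpha_1<0$ (indeed $\alpha_1<1$ already suffices). Hence $\kappa_1,\kappa_2>0$. The identity $\kappa_1+\kappa_2=\bigl((\alpha_2-1)+(1-\alpha_1)\bigr)/(\alpha_2-\alpha_1)=1$ is direct.

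There is no real obstacle. The only point needing care is to state explicitly that convexity of $p$ converts the sign conditions at $0$ and $1$ into the strict interlacing, without appealing to the explicit root formula.
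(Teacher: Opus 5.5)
Your proof is correct and follows essentially the same route as the paper, which obtains $\alpha_1<0<\alpha_2$ from the negative product of the roots $-r/(\tfrac12\sigma^2)$ and $\alpha_2>1$ from the sign $\mu-r<0$ of \eqref{eq:indicial} at $\alpha=1$; this is exactly your evaluation at $0$ and $1$. Your explicit appeal to the convexity of the parabola only makes precise the step, left implicit in the paper, from $p(1)<0$ to $\alpha_2>1$.
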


\begin{proof}
The product of the roots is $-r/(\tfrac12\sigma^2)$, which is negative
because $r>0$; this gives $\alpha_1<0<\alpha_2$. Evaluating
\eqref{eq:indicial} at $\alpha=1$ gives $\mu-r<0$, so $\alpha_2>1$. Positivity of $\kappa_1,\kappa_2$ is immediate.
\end{proof}

On each of $(0,L)$, $(L,H)$, $(H,\infty)$ the general $r$-harmonic function is
$Ax^{\alpha_1}+Bx^{\alpha_2}$, and we identify it with its coefficient pair
$(A,B)^{\!\top}$. Coefficient pairs are columns throughout, so that the
crossing matrices of Lemma~\ref{lem:crossing} act on them from the left.
Evaluation is by the row vectors
\begin{equation}
e_x:=\bigl(x^{\alpha_1},x^{\alpha_2}\bigr),
\qquad
e_x':=\bigl(\alpha_1x^{\alpha_1},\alpha_2x^{\alpha_2}\bigr),
\qquad x>0,
\label{eq:evalvectors}
\end{equation}
which read off, from a pair $c$, the value $c\cdot e_x$ of the corresponding
function at $x$ and $x$ times its derivative $c\cdot e_x'$ there. These are
the only row vectors in the paper.

\subsection{The crossing matrix}

\begin{lemma}[Crossing]
\label{lem:crossing}
Let $\beta\in(-1,1)$ and $z>0$. Suppose $u=A^ux^{\alpha_1}+B^ux^{\alpha_2}$
above $z$ and $u=A^dx^{\alpha_1}+B^dx^{\alpha_2}$ below $z$, continuous at $z$
with $(1+\beta)u'(z^+)=(1-\beta)u'(z^-)$. Then
\begin{equation}
\begin{gathered}
\begin{pmatrix}A^d\\B^d\end{pmatrix}
=T_\beta(z)\begin{pmatrix}A^u\\B^u\end{pmatrix},\\[6pt]
T_\beta(z)=\frac{1}{(\alpha_2-\alpha_1)(1-\beta)}
\begin{pmatrix}
(\alpha_2-\alpha_1)-(\alpha_1+\alpha_2)\beta & -2\alpha_2\beta z^{\alpha_2-\alpha_1}\\[3pt]
2\alpha_1\beta z^{\alpha_1-\alpha_2} & (1+\beta)\alpha_2-(1-\beta)\alpha_1
\end{pmatrix}.
\end{gathered}
\label{eq:T}
\end{equation}
Moreover
\begin{equation}
\det T_\beta=\frac{1+\beta}{1-\beta},
\qquad
T_\beta^{-1}=T_{-\beta},
\qquad
T_0=I,
\label{eq:Tprops}
\end{equation}
where the argument $z$ is suppressed when it is clear from the context.
\end{lemma}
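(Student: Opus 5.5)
The plan is to write the two matching conditions at $z$ as a linear system and solve it by Cramer's rule. Continuity at $z$ reads $e_z(A^d,B^d)^{\top}=e_z(A^u,B^u)^{\top}$. Multiplying the transmission condition by $z$ turns it into $(1-\beta)\,e_z'(A^d,B^d)^{\top}=(1+\beta)\,e_z'(A^u,B^u)^{\top}$. Stacking the rows gives
\[
M_z\begin{pmatrix}A^d\\B^d\end{pmatrix}=D_\beta M_z\begin{pmatrix}A^u\\B^u\end{pmatrix},
\qquad
M_z=\begin{pmatrix}z^{\alpha_1}&z^{\alpha_2}\\ \alpha_1z^{\alpha_1}&\alpha_2z^{\alpha_2}\end{pmatrix},
\quad
D_\beta=\begin{pmatrix}1&0\\0&\tfrac{1+\beta}{1-\beta}\end{pmatrix}.
\]
Here $1-\beta\neq0$ because $\beta\in(-1,1)$. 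The Wronskian-type determinant is $\det M_z=(\alpha_2-\alpha_1)z^{\alpha_1+\alpha_2}$, which is nonzero since $\alpha_1<\alpha_2$ (Lemma~\ref{lem:exponents}). Hence the lower pair is uniquely determined by the upper one, and
\[
T_\beta(z)=M_z^{-1}D_\beta M_z .
\]

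The entries of \eqref{eq:T} then follow by direct multiplication, using
\[
M_z^{-1}=\frac{1}{(\alpha_2-\alpha_1)z^{\alpha_1+\alpha_2}}\begin{pmatrix}\alpha_2z^{\alpha_2}&-z^{\alpha_2}\\-\alpha_1z^{\alpha_1}&z^{\alpha_1}\end{pmatrix}.
\]
The $(1,1)$ entry comes out as $\bigl[\alpha_2(1-\beta)-\alpha_1(1+\beta)\bigr]/\bigl[(\alpha_2-\alpha_1)(1-\beta)\bigr]$. Its numerator $\alpha_2-\alpha_1-(\alpha_1+\alpha_2)\beta$ is exactly the one displayed. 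The off-diagonal entries carry $z^{\alpha_2-\alpha_1}$ and $z^{\alpha_1-\alpha_2}$ with coefficients $-\alpha_2\bigl[(1+\beta)-(1-\beta)\bigr]=-2\alpha_2\beta$ and $2\alpha_1\beta$, and the $(2,2)$ entry follows in the same way. The only real work is this bookkeeping of powers of $z$ and signs, and it is where I expect any discrepancy with \eqref{eq:T} to surface. For that reason I will organise the computation through the factorisation rather than by solving the system entry by entry.

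The properties \eqref{eq:Tprops} fall out of the similarity form. First, $\det T_\beta=\det D_\beta=(1+\beta)/(1-\beta)$. Next, $T_0=M_z^{-1}IM_z=I$. Finally, $D_{-\beta}=D_\beta^{-1}$ because $\tfrac{1-\beta}{1+\beta}$ is the reciprocal of $\tfrac{1+\beta}{1-\beta}$, so $T_{-\beta}=M_z^{-1}D_\beta^{-1}M_z=T_\beta^{-1}$. The last identity can also be read directly from the conditions: exchanging the roles of the upper and lower pairs converts the transmission condition with $\beta$ into the same condition with $-\beta$. There is no genuine analytic obstacle in this lemma; it is linear algebra once $\det M_z\neq0$ is noted.
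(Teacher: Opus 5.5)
Your proof is correct: I checked all four entries of $T_\beta(z)=M_z^{-1}D_\beta M_z$ against the displayed matrix. For the entries, the paper does the same linear algebra by hand. It normalises to $a=A^uz^{\alpha_1}$, $b=B^uz^{\alpha_2}$, which is your $M_z$ with the diagonal scaling $\mathrm{diag}(z^{\alpha_1},z^{\alpha_2})$ split off, and then eliminates by substitution.

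You genuinely differ on the three properties. The paper gets $\det T_\beta$ by expanding the entries, which relies on the identity $[(\alpha_2-\alpha_1)-(\alpha_1+\alpha_2)\beta][(\alpha_2-\alpha_1)+(\alpha_1+\alpha_2)\beta]+4\alpha_1\alpha_2\beta^2=(\alpha_2-\alpha_1)^2(1-\beta^2)$. It gets $T_\beta^{-1}=T_{-\beta}$ by observing that the transmission condition with $\beta$, read from the other side, is the condition with $-\beta$, with direct multiplication as a check.

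Your similarity form makes all three properties one-liners and shows more. $T_\beta$ has eigenvalues $1$ and $\pi(\beta)$: in the coordinates (value, $z$ times slope), crossing the level is simply $\mathrm{diag}(1,\pi(\beta))$. The paper uses exactly this structure later, without naming it, when it contracts the master system with $e_H$ and $e_H'$ to obtain $A=G_H$ and $B=\pi D_H$. That contraction matrix is your $M_z$ at $z=H$, with the same determinant $(\alpha_2-\alpha_1)H^{\alpha_1+\alpha_2}$.

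The paper's route, in turn, gives $T_\beta^{-1}=T_{-\beta}$ a direct meaning as a symmetry of the transmission condition itself, independent of the explicit formula. You note this as an alternative as well.
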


\begin{proof}
Write $a:=A^uz^{\alpha_1}$, $b:=B^uz^{\alpha_2}$, $\tilde a:=A^dz^{\alpha_1}$,
$\tilde b:=B^dz^{\alpha_2}$. Continuity reads $a+b=\tilde a+\tilde b$. Multiplying the
transmission condition by $z$ and using $z\,u'(z^\pm)=\alpha_1a+\alpha_2b$
respectively $\alpha_1\tilde a+\alpha_2\tilde b$,
\[
\alpha_1\tilde a+\alpha_2\tilde b=k(\alpha_1a+\alpha_2b),\qquad k:=\frac{1+\beta}{1-\beta}.
\]
Substituting $\tilde a=a+b-\tilde b$ gives
$(\alpha_2-\alpha_1)\tilde b=\alpha_1(k-1)a+(k\alpha_2-\alpha_1)b$, and with
$k-1=2\beta/(1-\beta)$ and
$k\alpha_2-\alpha_1=\bigl((1+\beta)\alpha_2-(1-\beta)\alpha_1\bigr)/(1-\beta)$
this is the second row of \eqref{eq:T} after dividing by $z^{\alpha_2}$. For
the first row,
\[
\tilde a=a\,\frac{(\alpha_2-\alpha_1)-\alpha_1(k-1)}{\alpha_2-\alpha_1}
 +b\,\frac{\alpha_2(1-k)}{\alpha_2-\alpha_1},
\]
and $(\alpha_2-\alpha_1)-\alpha_1(k-1)
=\bigl((\alpha_2-\alpha_1)-(\alpha_1+\alpha_2)\beta\bigr)/(1-\beta)$ while
$\alpha_2(1-k)=-2\alpha_2\beta/(1-\beta)$.

For \eqref{eq:Tprops}, expanding,
\begin{align*}
\det T_\beta
&=\frac{\bigl[(\alpha_2-\alpha_1)-(\alpha_1+\alpha_2)\beta\bigr]
        \bigl[(1+\beta)\alpha_2-(1-\beta)\alpha_1\bigr]
        +4\alpha_1\alpha_2\beta^2}
       {(\alpha_2-\alpha_1)^2(1-\beta)^2}\\
&=\frac{(\alpha_2-\alpha_1)^2(1-\beta^2)}{(\alpha_2-\alpha_1)^2(1-\beta)^2}
 =\frac{1+\beta}{1-\beta}.
\end{align*}
For $T_\beta^{-1}=T_{-\beta}$, note that the transmission condition
$(1+\beta)u'(z^+)=(1-\beta)u'(z^-)$ can be rewritten as
$(1+(-\beta))u'(z^-)=(1-(-\beta))u'(z^+)$: it is the condition with parameter
$-\beta$ and the roles of the two sides exchanged. The map from the
coefficients below $z$ to those above is therefore $T_{-\beta}$, and it
inverts $T_\beta$. (Direct multiplication confirms $T_\beta T_{-\beta}=I$.) Finally $T_0=I$ is immediate.
\end{proof}

The involution $T_\beta^{-1}=T_{-\beta}$ expresses the obvious symmetry:
crossing a skew point upwards with parameter $\beta$ is crossing it downwards
with $-\beta$. It halves the bookkeeping, and the determinant will turn out in
Theorem~\ref{thm:corner} to be exactly the quantity governing whether
resistance binds. We call
\begin{equation}
\pi(\beta):=\det T_\beta=\frac{1+\beta}{1-\beta}\in(0,\infty)
\label{eq:permeability}
\end{equation}
the \emph{permeability} of the level: $\pi>1$ for a support-type push,
$\pi<1$ for a resistance-type push, $\pi=1$ for no skew.

A solution of $(\Lop-r)u=0$ in the sense of Lemma~\ref{lem:transmission} is
determined on each of the three intervals $(0,L)$, $(L,H)$, $(H,\infty)$ by
its own coefficient pair; write these $c_0,c_1,c_2$ respectively. Because
$L<H$ the middle interval is non-degenerate and the two crossings act one at a
time, so Lemma~\ref{lem:crossing} gives $c_1=T_{\beta_H}(H)c_2$ and then
$c_0=T_{\beta_L}(L)c_1$. The linear map $c_2\mapsto c_0$ carrying the
behaviour above resistance to the behaviour below support is therefore
\begin{equation}
\begin{split}
c_0&=T_{\beta_L}(L)\,T_{\beta_H}(H)\,c_2,\\
\det\bigl(T_{\beta_L}T_{\beta_H}\bigr)
&=\pi(\beta_L)\pi(\beta_H)
=\frac{(1+\beta_L)(1+\beta_H)}{(1-\beta_L)(1-\beta_H)}.
\end{split}
\label{eq:compose}
\end{equation}
Every calculation below is an instance of reading \eqref{eq:compose}, or one
of its two factors, in one direction or the other.

\subsection{Smooth fit}

\begin{lemma}[Smooth fit]
\label{lem:smoothfit}
If $u(x)=Ax^{\alpha_1}+Bx^{\alpha_2}$ satisfies $u(m)=m$ and $u'(m)=1$ at some
$m>0$, then $(A,B)^{\!\top}=\vk(m)$ where
\begin{equation}
\vk(m):=\bigl(\kappa_1m^{1-\alpha_1},\ \kappa_2m^{1-\alpha_2}\bigr)^{\!\top}.
\label{eq:vkappa}
\end{equation}
\end{lemma}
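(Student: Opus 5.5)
The smooth-fit conditions form a $2\times2$ linear system in $(A,B)$, and I would solve it directly, reusing the normalisation from the proof of Lemma~\ref{lem:crossing}. Multiply the derivative condition by $m$ and write $a:=Am^{\alpha_1}$, $b:=Bm^{\alpha_2}$. Using the evaluation vectors \eqref{eq:evalvectors}, the conditions $u(m)=m$ and $m\,u'(m)=m$ become
\[
a+b=m,\qquad \alpha_1a+\alpha_2b=m .
\]
This system has determinant $\alpha_2-\alpha_1$, which is non-zero because $\alpha_1<\alpha_2$. Hence it has the unique solution
\[
b=\frac{(1-\alpha_1)m}{\alpha_2-\alpha_1}=\kappa_2m,\qquad a=m-b=\frac{(\alpha_2-1)m}{\alpha_2-\alpha_1}=\kappa_1m .
\]
Here I use the definitions \eqref{eq:kappa} and the identity $\kappa_1+\kappa_2=1$.

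Dividing back by $m^{\alpha_1}$ and $m^{\alpha_2}$ gives $A=\kappa_1m^{1-\alpha_1}$ and $B=\kappa_2m^{1-\alpha_2}$, which is exactly $\vk(m)$ in \eqref{eq:vkappa}. Uniqueness comes from the non-vanishing determinant, so the statement holds as an "if" without further hypotheses. Lemma~\ref{lem:exponents} additionally gives $\kappa_1,\kappa_2>0$, so both coefficients are positive; this is not needed for the statement but is worth recording for later use.

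There is no real obstacle here. The only step needing care is the bookkeeping of the factors $m^{\alpha_i}$ when passing between the scaled pair $(a,b)$ and the coefficient pair $(A,B)$. I would close the argument by checking directly that $\vk(m)\cdot e_m=(\kappa_1+\kappa_2)m=m$ and $\vk(m)\cdot e_m'=(\alpha_1\kappa_1+\alpha_2\kappa_2)m=m$. The second identity holds because $\alpha_1(\alpha_2-1)+\alpha_2(1-\alpha_1)=\alpha_2-\alpha_1$.
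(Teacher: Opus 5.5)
Your proposal is correct and is essentially the paper's proof: the same rescaling $a=Am^{\alpha_1}$, $b=Bm^{\alpha_2}$ turns the conditions into $a+b=m$, $\alpha_1a+\alpha_2b=m$, which solve to $b=\kappa_2m$, $a=\kappa_1m$. Your explicit note that the determinant $\alpha_2-\alpha_1$ is non-zero, and the back-substitution check, are harmless additions to what the paper leaves implicit.
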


\begin{proof}
With $a=Am^{\alpha_1}$, $b=Bm^{\alpha_2}$ the conditions read $a+b=m$ and
$\alpha_1a+\alpha_2b=m$, whence $b=\kappa_2m$, $a=\kappa_1m$.
\end{proof}

\subsection{The derivative identity}

\begin{lemma}[Derivative identity]
\label{lem:deriv}
With $\kappa_1,\kappa_2$ as in \eqref{eq:kappa} one has
$\kappa_1(1-\alpha_1)=-\kappa_2(1-\alpha_2)=:c$, where
$c=\frac{(\alpha_2-1)(1-\alpha_1)}{\alpha_2-\alpha_1}>0$, and hence
\begin{equation}
\frac{d\vk}{dm}(m)=c\,\bigl(m^{-\alpha_1},\,-m^{-\alpha_2}\bigr)^{\!\top}.
\label{eq:vkprime}
\end{equation}
\end{lemma}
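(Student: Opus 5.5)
The identity is a direct computation from the definitions in \eqref{eq:kappa}, and I would carry it out in two steps.

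First, the scalar identity. Substituting the definitions gives
\[
\kappa_1(1-\alpha_1)=\frac{(\alpha_2-1)(1-\alpha_1)}{\alpha_2-\alpha_1},
\qquad
-\kappa_2(1-\alpha_2)=\frac{(1-\alpha_1)(\alpha_2-1)}{\alpha_2-\alpha_1}.
\]
These agree, and their common value is $c$. Positivity of $c$ then follows from Lemma~\ref{lem:exponents}: there $\alpha_1<0<1<\alpha_2$, so $\alpha_2-1>0$, $1-\alpha_1>0$ and $\alpha_2-\alpha_1>0$.

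Second, the derivative. I would differentiate \eqref{eq:vkappa} componentwise in $m$:
\[
\frac{d}{dm}\kappa_1m^{1-\alpha_1}=\kappa_1(1-\alpha_1)m^{-\alpha_1}=c\,m^{-\alpha_1},
\qquad
\frac{d}{dm}\kappa_2m^{1-\alpha_2}=\kappa_2(1-\alpha_2)m^{-\alpha_2}=-c\,m^{-\alpha_2}.
\]
This is \eqref{eq:vkprime}.

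There is no genuine obstacle. The only point needing care is the sign bookkeeping in the second component: $1-\alpha_2<0$, so $\kappa_2(1-\alpha_2)=-c$, and this is exactly what produces the minus sign in \eqref{eq:vkprime}.
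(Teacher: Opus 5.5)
Your proposal is correct and follows the paper's proof: substitute the definitions in \eqref{eq:kappa} to see that both products equal $(\alpha_2-1)(1-\alpha_1)/(\alpha_2-\alpha_1)$, take positivity from $\alpha_1<0<1<\alpha_2$, and differentiate \eqref{eq:vkappa} componentwise. The only difference is that you write out the componentwise differentiation, which the paper leaves implicit.
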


\begin{proof}
Both products equal $(\alpha_2-1)(1-\alpha_1)/(\alpha_2-\alpha_1)$ up to sign,
and positivity follows from $\alpha_1<0<1<\alpha_2$.
\end{proof}

Identity \eqref{eq:vkprime} is what makes the free boundary tractable. The
weights $\kappa_1,\kappa_2$ do not vanish from derivatives but collapse into a
single constant: both components of $d\vk/dm$ carry the same factor $c$, with
opposite signs. Since $c>0$ it factors out of every derivative, never affects
a sign, and cancels in any ratio, so that what remains depends on the
exponents alone. This is why the sign arguments below reduce to comparing
$t^{\alpha_1}$ with $t^{\alpha_2}$.

\subsection{The fundamental pair and the one-zero property}

\begin{definition}[fundamental pair]
\label{def:fund}
Let $\psi_r$ be the solution of $(\Lop-r)u=0$ in the sense of
Lemma~\ref{lem:transmission} with $\psi_r=x^{\alpha_2}$ on $(0,L]$, extended
upward by $T_{-\beta_L}(L)$ and then $T_{-\beta_H}(H)$; and let $\varphi_r$ be
the solution with $\varphi_r=x^{\alpha_1}$ on $[H,\infty)$, extended downward
by $T_{\beta_H}(H)$ and then $T_{\beta_L}(L)$.
\end{definition}

Each of these is well defined, and the choices are forced up to a positive
scale. On $(0,L]$ the general solution is $Ax^{\alpha_1}+Bx^{\alpha_2}$; since
$\alpha_1<0<\alpha_2$, the first term blows up at $0$ and the second vanishes
there, so the solution vanishing at $0$ --- the increasing fundamental
solution --- has $A=0$, and $B=1$ merely fixes the scale. By
Lemma~\ref{lem:crossing} the coefficient pair on one side of a skew point
determines the pair on the other, so this choice on $(0,L]$ fixes $\psi_r$ on
all of $(0,\infty)$. Symmetrically, the solution vanishing at $\infty$ must be
a multiple of $x^{\alpha_1}$ on $[H,\infty)$, and its extension downward is
likewise unique. Thus $\psi_r$ and $\varphi_r$ are the two distinguished
elements of the two-dimensional solution space; they are linearly independent,
since their Wronskian never vanishes (proof of Lemma~\ref{lem:onezero} below),
so every solution is a combination $A\varphi_r+B\psi_r$.

\begin{lemma}[one zero]
\label{lem:onezero}
$\psi_r$ is positive and strictly increasing, $\varphi_r$ is positive and
strictly decreasing, and $\varphi_r/\psi_r$ is strictly decreasing on
$(0,\infty)$. Consequently, if $u\not\equiv0$ solves $(\Lop-r)u=0$ in the
sense of Lemma~\ref{lem:transmission} and $u(x_0)=0$ for some $x_0>0$, then
$x_0$ is the only zero of $u$ in $(0,\infty)$, and $u$ is strictly monotone.
\end{lemma}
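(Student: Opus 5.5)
Positivity and monotonicity of $\psi_r$ come from propagating the sign of the derivative across the skew points. On $(0,L]$ we have $\psi_r=x^{\alpha_2}$ with $\alpha_2>0$, so $\psi_r>0$ and $\psi_r'>0$ there. On each open interval $(L,H)$ and $(H,\infty)$ the function is $Ax^{\alpha_1}+Bx^{\alpha_2}$, a solution of $(\Lop-r)u=0$, and the maximum principle does the rest. If such a $u$ has $u'(x_1)=0$ at an interior point where $u(x_1)>0$, then $\tfrac12\sigma^2x_1^2u''(x_1)=ru(x_1)>0$, so $x_1$ is a strict local minimum. A function starting with $u'>0$ can therefore never reach a critical point while positive: at the first critical point it would have been increasing just before, contradicting a strict minimum. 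At a skew point the transmission condition gives
\[
u'(z^+)=\frac{1-\beta_z}{1+\beta_z}\,u'(z^-)=\pi(\beta_z)^{-1}u'(z^-),
\]
and $\pi>0$ preserves the sign of the derivative, while continuity preserves positivity of the value. Inducting across $L$ and then $H$ shows $\psi_r>0$ and $\psi_r'>0$ on all of $(0,\infty)$, with one-sided derivatives at the skew points. The mirror argument, started from $\varphi_r=x^{\alpha_1}$ on $[H,\infty)$ with $\alpha_1<0$ and run downward, gives $\varphi_r>0$ and $\varphi_r'<0$. The only care needed is to phrase the ``first critical point'' argument on half-open intervals including the skew endpoint, using the one-sided limits.

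For the ratio I would use the Wronskian in the scale-adjusted form. Define $w(x):=p(x)\bigl(\varphi_r'\psi_r-\varphi_r\psi_r'\bigr)(x)$, where $p(x)=x^{2\mu/\sigma^2}$ is the integrating factor for $\tfrac12\sigma^2x^2u''+\mu xu'$. Away from the skew points, Abel's identity makes $p\,W$ constant, since both functions solve the same equation. At a skew point the one-sided derivatives of both functions are multiplied by the same factor $\pi(\beta_z)^{-1}$ while the values are continuous, so $W(z^+)=\pi(\beta_z)^{-1}W(z^-)$. Hence $w$ is piecewise constant with multiplicative jumps by positive factors. In particular $w$ never vanishes and has constant sign, and $w<0$ because $\varphi_r'<0<\psi_r'$ and both functions are positive. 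Therefore
\[
(\varphi_r/\psi_r)'=W/\psi_r^2<0
\]
off $\{L,H\}$. Since $\varphi_r/\psi_r$ is continuous, it is strictly decreasing on $(0,\infty)$. This also supplies the linear independence used after Definition~\ref{def:fund}.

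For the consequence, write $u=A\varphi_r+B\psi_r$, which is possible since the solution space is two-dimensional and spanned by this pair. Suppose $u(x_0)=0$ with $u\not\equiv0$. If $B=0$ then $u=A\varphi_r$ with $A\neq0$ has no zero, so $B\ne0$, and likewise $A\neq0$. Then
\[
u=\psi_r\bigl(A\,\varphi_r/\psi_r+B\bigr),
\]
and because $\varphi_r/\psi_r$ is strictly monotone the bracket vanishes exactly once, at $x_0$. For strict monotonicity I expect the main obstacle: writing $u=\psi_r\cdot h$ with $h$ monotone does not by itself make $u$ monotone. Instead I will apply the first-paragraph argument to $\pm u$. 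On the side where $u>0$, a critical point would have to be a strict local minimum, and on the side where $u<0$ a strict local maximum. So $u$ cannot turn back on either side of $x_0$, and the sign of $u'$ propagates through $x_0$ and through the skew points, again because $\pi>0$. To finish, I must check that $u'(x_0)\neq0$ (or that the one-sided derivatives are nonzero if $x_0\in\{L,H\}$). This holds because $u(x_0)=u'(x_0)=0$ would force $u\equiv0$ by uniqueness for the ODE together with the transmission conditions, via Lemma~\ref{lem:crossing}.
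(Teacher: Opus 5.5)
Your proof is correct, and its skeleton matches the paper's: first the signs of the fundamental pair, then the sign of the Wronskian, then the representation $u=A\varphi_r+B\psi_r$. The local arguments differ in three places.

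\emph{Monotonicity on a skew-free piece.} The paper solves for the coefficients directly. The conditions $u(z)>0$ and $u'(z)\ge0$ force $B>0$, and then $x^{1-\alpha_1}u'(x)=A\alpha_1+B\alpha_2x^{\alpha_2-\alpha_1}$ is strictly increasing. Your critical-point argument ($\tfrac12\sigma^2x^2u''=ru>0$ at a critical point of a positive solution) is the one the paper uses later in Lemma~\ref{lem:incr}. It has the advantage of not relying on the explicit power solutions.

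\emph{The Wronskian.} Your Abel-identity step is correct: $x^{2\mu/\sigma^2}W$ is constant on each piece and jumps by $\pi(\beta_z)^{-1}$. It is, however, redundant. The pointwise signs $\varphi_r'\psi_r<0<\varphi_r\psi_r'$, which you invoke anyway, already give $W<0$ at every point, and that is all the paper uses.

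\emph{Strict monotonicity of $u$.} The obstacle you anticipate is not there. From $u(x_0)=0$ you get $B=-A\varphi_r(x_0)/\psi_r(x_0)$, so $A$ and $B$ have opposite signs. Hence $u'=A\varphi_r'+B\psi_r'$ is a sum of two terms of the same strict sign; equivalently, $A\varphi_r$ and $B\psi_r$ are monotone in the same direction. The paper closes the lemma in this one line. Your maximum-principle propagation on each side of $x_0$, together with the non-degeneracy check $u'(x_0)\ne0$ via Lemma~\ref{lem:crossing}, is valid, but it re-derives what the representation you had just written down gives for free.
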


\begin{proof}
\emph{Monotonicity on one piece.} Let $u=Ax^{\alpha_1}+Bx^{\alpha_2}$ satisfy
$u(z)=v>0$ and $u'(z)=s\ge0$ at some $z>0$. Solving for the coefficients,
\[
A=\frac{\alpha_2v-zs}{(\alpha_2-\alpha_1)z^{\alpha_1}},\qquad
B=\frac{zs-\alpha_1v}{(\alpha_2-\alpha_1)z^{\alpha_2}},
\]
so $B>0$ because $\alpha_1<0$. Now write
$u'(x)=x^{\alpha_1-1}\bigl(A\alpha_1+B\alpha_2x^{\alpha_2-\alpha_1}\bigr)$. The
bracket is strictly increasing in $x$, since $B\alpha_2>0$ and
$\alpha_2>\alpha_1$, and at $x=z$ it equals $z^{1-\alpha_1}s\ge0$. Hence
$u'>0$ on $(z,\infty)$, and $u>0$ there as well.

\emph{The function $\psi_r$.} On $(0,L]$, $\psi_r=x^{\alpha_2}$ has
$\psi_r>0$ and $\psi_r'>0$. At $L$ the transmission condition gives
$\psi_r'(L^+)=\psi_r'(L^-)/\pi(\beta_L)>0$, so the previous paragraph applies
at $z=L$ and gives $\psi_r>0$, $\psi_r'>0$ on $(L,H]$; the same argument at
$z=H$ extends this to $(H,\infty)$. Thus $\psi_r>0$ and $\psi_r'>0$
everywhere, one-sided derivatives being meant at $L$ and $H$.

\emph{The function $\varphi_r$.} The argument is the mirror image. If
$u(z)=v>0$ and $u'(z)=s<0$, then $A=(\alpha_2v-zs)/((\alpha_2-\alpha_1)
z^{\alpha_1})>0$, and writing
$u'(x)=x^{\alpha_2-1}\bigl(A\alpha_1x^{\alpha_1-\alpha_2}+B\alpha_2\bigr)$,
the bracket is strictly increasing in $x$ (because $A\alpha_1<0$ and
$x^{\alpha_1-\alpha_2}$ is decreasing) and equals $z^{1-\alpha_2}s<0$ at
$x=z$. Hence $u'<0$ on $(0,z)$, and $u>0$ there since $u$ increases as $x$
decreases from $z$. Starting from $\varphi_r=x^{\alpha_1}$ on $[H,\infty)$ and
using $\varphi_r'(z^-)=\pi(\beta_z)\varphi_r'(z^+)<0$ at $z=H$ and then $z=L$,
we obtain $\varphi_r>0$ and $\varphi_r'<0$ everywhere.

\emph{The Wronskian.} Let $\mathcal W:=\varphi_r\psi_r'-\varphi_r'\psi_r$. By
the two previous paragraphs, at every point
\[
\mathcal W=\underbrace{\varphi_r\,\psi_r'}_{>0}\;-\;
\underbrace{\varphi_r'\,\psi_r}_{<0}\;>\;0,
\]
so $\mathcal W$ cannot vanish anywhere. In particular $\varphi_r$ and
$\psi_r$ are linearly independent, and
$(\varphi_r/\psi_r)'=-\mathcal W/\psi_r^2<0$, so the ratio is strictly
decreasing.

\emph{One zero.} Every solution is therefore $u=A\varphi_r+B\psi_r$, and
$u(x_0)=0$ gives $B=-A\,\varphi_r(x_0)/\psi_r(x_0)$, so that
\[
u=\lambda\bigl(\psi_r(x_0)\varphi_r-\varphi_r(x_0)\psi_r\bigr),
\qquad \lambda:=A/\psi_r(x_0).
\]
Here $\psi_r(x_0)>0$, and $A\ne0$ because $A=0$ would force $B=0$ and hence
$u\equiv0$; so $\lambda\ne0$. Its zeros are the solutions of
$\varphi_r/\psi_r=\varphi_r(x_0)/\psi_r(x_0)$, and strict monotonicity of the
ratio leaves only $x_0$. Differentiating the same expression,
$u'=\lambda\bigl(\psi_r(x_0)\varphi_r'-\varphi_r(x_0)\psi_r'\bigr)$, and both
terms in the bracket are strictly negative, so $u'$ never vanishes either.
\end{proof}

\subsection{Sensitivity to the lower boundary}

For $0<a<L$ write $U_a$ for the solution of $(\Lop-r)u=0$ on $(a,\infty)$, in
the sense of Lemma~\ref{lem:transmission}, determined by the smooth fit
$U_a(a)=a$, $U_a'(a)=1$; by Lemma~\ref{lem:smoothfit} its coefficients on
$(a,L)$ are $\vk(a)$. Every candidate value function in this paper is $U_a$
restricted to its continuation band, so the following single lemma serves both
Section~\ref{sec:fb} and Section~\ref{sec:samesign}.

The restriction $a<L$ costs nothing, since every lower boundary arising in the
paper lies below support: in regime~A by \eqref{eq:aLb}, in regime~C by
Theorem~\ref{thm:corner}, along the regime-B branch because $a$ moves
monotonically between those two values, and in Section~\ref{sec:samesign}
by Proposition~\ref{lem:acomp}. It is in any case one of exposition only.
For any $a>0$ the argument below applies verbatim on the skew-free interval
immediately above $a$, where smooth fit again fixes the coefficients as
$\vk(a)$; so Lemma~\ref{lem:sensneg} holds for every $a>0$.

\begin{lemma}[the candidate falls as the lower boundary rises]
\label{lem:sensneg}
Let $w_a:=\partial U_a/\partial a$. Then $w_a(a)=0$,
$w_a'(a)=c(\alpha_1-\alpha_2)/a<0$, and
\[
w_a<0,\qquad w_a'<0 \qquad\text{throughout }(a,\infty).
\]
In particular $a\mapsto U_a(x)$ is strictly decreasing for each fixed $x>a$.
\end{lemma}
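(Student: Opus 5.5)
The plan is to differentiate the construction of $U_a$ in $a$ and then use the one-zero property of Lemma~\ref{lem:onezero}.

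\emph{Step 1: $w_a$ is $r$-harmonic.} On $(a,L)$ the coefficients of $U_a$ are $\vk(a)$. Above $L$ they are obtained by applying the fixed matrices $T_{-\beta_L}(L)$ and then $T_{-\beta_H}(H)$ (Lemma~\ref{lem:crossing}). These matrices do not depend on $a$. Hence $\partial_a$ commutes with the crossings, and $w_a$ is the solution of $(\Lop-r)u=0$, in the sense of Lemma~\ref{lem:transmission}, whose coefficient pair on $(a,L)$ is $d\vk/da$. By \eqref{eq:vkprime} that pair is $c\,(a^{-\alpha_1},-a^{-\alpha_2})^{\!\top}$. Concretely, on $(a,L)$,
\[
w_a(x)=c\bigl((x/a)^{\alpha_1}-(x/a)^{\alpha_2}\bigr).
\]
For every $x\ge L$ the same expression describes the continuation of this function by transmission.

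\emph{Step 2: values at $a$.} The formula gives $w_a(a)=c(1-1)=0$ directly. Differentiating, $w_a'(a)=c(\alpha_1-\alpha_2)/a$, which is negative since $c>0$ and $\alpha_1<\alpha_2$.

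\emph{Step 3: sign on $(a,\infty)$.} The function $w_a$ is a nonzero solution in the sense of Lemma~\ref{lem:transmission} and vanishes at $x_0=a$. By Lemma~\ref{lem:onezero}, $a$ is its only zero and $w_a$ is strictly monotone. Since $w_a'(a)<0$, $w_a$ is strictly decreasing. Together with $w_a(a)=0$ this gives $w_a<0$ and $w_a'<0$ on $(a,\infty)$; one-sided derivatives are understood at $L$ and $H$.

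The main obstacle is that Lemma~\ref{lem:onezero} concerns solutions on all of $(0,\infty)$, while $U_a$ is defined only on $(a,\infty)$. I would handle this by extending $w_a$ below $a$ with the same power formula. For $a<L$ there is no skew point in $(0,a]$, so the extension is a global solution and the lemma applies.

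For $a\ge L$, mentioned in the remark preceding the lemma, the extension downward crosses skew points. I would extend by the inverse crossing matrices, so the result is still a global solution and the argument is unchanged.

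If one wants to avoid the global lemma altogether, an elementary route works on $(a,L)$. There $(x/a)^{\alpha_1}<1<(x/a)^{\alpha_2}$, so $w_a<0$ directly. Similarly $x\,w_a'=c\bigl(\alpha_1(x/a)^{\alpha_1}-\alpha_2(x/a)^{\alpha_2}\bigr)<0$. Across each skew point the transmission condition rescales the derivative by the positive factor $\pi(\beta_z)^{-1}$, so negativity of the derivative persists. On the next piece, apply the mirror form of the monotonicity-on-one-piece argument from the proof of Lemma~\ref{lem:onezero}, with value $<0$ and slope $<0$, applied to $-w_a$. Monotonicity of $U_a(x)$ in $a$ then follows from $w_a(x)<0$.
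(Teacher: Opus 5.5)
Your proof is correct and is essentially the paper's. You compute $w_a=c\bigl[(x/a)^{\alpha_1}-(x/a)^{\alpha_2}\bigr]$ on $(a,L)$ from \eqref{eq:vkprime}, observe that the $a$-independent crossing matrices make $w_a$ a solution in the sense of Lemma~\ref{lem:transmission}, and let Lemma~\ref{lem:onezero} carry the signs of $w_a$ and $w_a'$ from just above $a$ to all of $(a,\infty)$. Your explicit extension of $w_a$ below $a$, which lets Lemma~\ref{lem:onezero} (stated for solutions on $(0,\infty)$) apply literally, spells out a step the paper leaves implicit, and your piece-by-piece elementary alternative is also sound.
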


\begin{proof}
On $(a,L)$ the coefficients of $U_a$ are $\vk(a)$, so by
Lemma~\ref{lem:deriv} and \eqref{eq:vkprime},
\[
w_a(x)=c\,a^{-\alpha_1}x^{\alpha_1}-c\,a^{-\alpha_2}x^{\alpha_2}
=c\Bigl[(x/a)^{\alpha_1}-(x/a)^{\alpha_2}\Bigr],\qquad c>0,
\]
which vanishes at $x=a$, is negative for $x>a$ because
$\alpha_1<0<1<\alpha_2$, and has $w_a'(a)=c(\alpha_1-\alpha_2)/a<0$. The
crossing matrices $T_{\pm\beta}(z)$ do not depend on $a$, so differentiation
in $a$ commutes with them and $w_a$ is itself a solution in the sense of
Lemma~\ref{lem:transmission}. By Lemma~\ref{lem:onezero} its only zero is $a$
and $w_a'$ never vanishes, so both keep the signs they have just above $a$.
\end{proof}

\section{The free boundary}
\label{sec:fb}

We look for a continuation region of the form $(a,b)$ with
$0<a<L<b$, so that the stopping region is $(0,a]\cup[b,\infty)$; the ordering
$a<L<b$ is imposed here only to fix notation, and Theorem~\ref{thm:Airrelevant}
shows it is forced.

A word on terminology, since three different things could be called a
solution. A \emph{solution of the equation} is a function $u$ with
$(\Lop-r)u=0$ in the sense of Lemma~\ref{lem:transmission}. A \emph{solution
of the system} --- of \eqref{eq:master}, \eqref{eq:Asystem} or
\eqref{eq:Ceq}, as the regime dictates --- is a pair of numbers $(a,b)$
satisfying the smooth-fit and matching conditions; this is what ``the regime-A
solution'' and the like will mean below. A \emph{solution of the stopping
problem} is the value function $v$ together with an optimal rule. The first
two are algebra and the third is the object of interest; Section~\ref{sec:verify}
is what connects them, and Proposition~\ref{prop:selection} shows the second
does not by itself deliver the third. The candidate is
\begin{equation}
V(x)=\begin{cases}
x, & x\in(0,a]\cup[b,\infty),\\
\text{$r$-harmonic, glued by \eqref{eq:transmission}}, & x\in(a,b).
\end{cases}
\label{eq:candidate}
\end{equation}
That $L$ must lie inside the band, rather than being an assumption, is
confirmed by Theorem~\ref{thm:verification} and by every case in
Appendix~\ref{app:num}.

\begin{figure}[t]
\centering
\includegraphics[width=\textwidth]{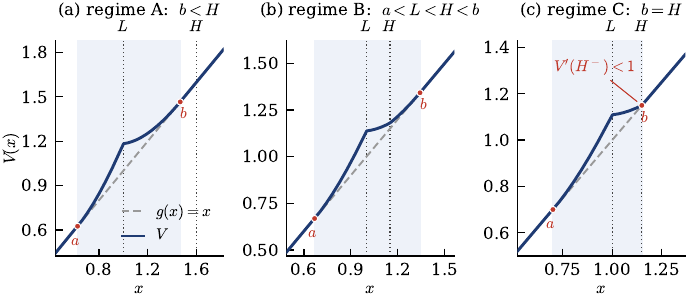}
\caption{The three geometries, with $r=0.05$, $\mu=0.01$, $\sigma^2=0.04$,
$L=1$ and $\beta_L=0.9$, so that the $L$-band is $(0.625,1.465)$. (a)
$H=1.60$ lies above the $L$-band: $b<H$, and the solution is that of the
single-skew problem at $L$ whatever the value of $\beta_H$ (here
$\beta_H=-0.30$). (b) $H=1.15$, $\beta_H=-0.20$: the band straddles $H$,
with smooth fit at both boundaries and a downward break of slope at each
skew point, by the factor $\pi(\beta_z)^{-1}$ of
\eqref{eq:transmission}. (c) $H=1.15$, $\beta_H=-0.50$, below
$\beta_H^\star=-0.392$: the upper boundary pins at $H$, where $V$ meets $g$
with slope $V'(H^-)<1$ and there is no smooth fit. Shaded: the continuation
region.}
\label{fig:regimes}
\end{figure}

Applying Lemma~\ref{lem:smoothfit} at each boundary and
Lemma~\ref{lem:crossing} at each skew point the band actually straddles gives
a single system in $(a,b)$:
\begin{equation}
T_{-\beta_L}(L)\,\vk(a)\;=\;T_{\beta_H}(H)\,\vk(b),
\label{eq:master}
\end{equation}
with the convention that a factor is replaced by the identity when the
corresponding skew point is not interior to $(a,b)$. Since $T_0=I$,
\eqref{eq:master} covers all cases at once. Three geometries are possible;
Fig.~\ref{fig:regimes} shows one instance of each.

\begin{description}
\item[Regime A] $b\le H$: the band lies below resistance; $T_{\beta_H}$ is
replaced by $I$.
\item[Regime B] $b>H$: the band straddles resistance; \eqref{eq:master} holds
as written.
\item[Regime C] $b=H$: the upper boundary pins at resistance; smooth fit fails
at $H$ and is replaced by an inequality.
\end{description}

Which regime obtains is decided by a single object. Consider the auxiliary
problem in which the skew at $H$ is switched off: formally $\beta_H=0$, which
is the same thing as deleting the level $H$ from the model. The two
descriptions agree at the level of the process, since with $\beta_H=0$ the
local-time term at $H$ drops out of \eqref{eq:SDE} altogether; and at the
level of the equation, since $T_0=I$ by \eqref{eq:Tprops}, so crossing $H$
leaves the coefficient pair unchanged. The auxiliary
problem thus has one skew point, and its continuation band is the pair
$(\bar a,\bar b)$ solving the two-equation system \eqref{eq:Asystem} recorded
in Theorem~\ref{thm:Airrelevant} below --- a system in which neither $H$ nor
$\beta_H$ appears. We call $(\bar a,\bar b)$ the \emph{$L$-band}. It is
defined whatever $H$ may be, is computable before anything about resistance is
known, and is in general \emph{not} the continuation band of the problem at
hand. Its role is to select the regime:
\begin{equation}
\text{regime A}\iff\bar b\le H,
\label{eq:regimeA}
\end{equation}
in which case the true band \emph{is} the $L$-band and
$\bar a<L<\bar b\le H$. The three cases are exhaustive. If $\bar b>H$, the
$L$-band contains $H$ but was computed as though the skew there were absent;
since $\beta_H<0$, the real process is pushed \emph{down} each time it touches
$H$, a penalty for a holder waiting to sell high which the $L$-band ignores,
so it is not the true band. Nor can the true upper boundary $b$ lie below $H$.
Suppose it did: a band that never reaches $H$ does not see $\beta_H$, so it
would be the $L$-band itself, giving $b=\bar b$; but $\bar b>H$, contradicting
$b<H$. The true upper boundary is thus at or above $H$, which is regime~B or
regime~C. The $L$-band
remains useful there, and indeed it is precisely when $\bar b>H$, so when
regime~A fails, that it reappears below as the endpoint of the regime-B
branch.

\subsection{Regime A: resistance is invisible}

\begin{theorem}
\label{thm:Airrelevant}
In regime A the pair $(a,b)$ and the value function are independent of
$\beta_H$, and $(a,b)$ solves
\begin{equation}
G(a)=G(b),
\qquad
(1+\beta_L)D(b)=(1-\beta_L)D(a),
\label{eq:Asystem}
\end{equation}
where
\[
\begin{split}
G(m)&:=\kappa_1L^{\alpha_1}m^{1-\alpha_1}+\kappa_2L^{\alpha_2}m^{1-\alpha_2},\\
D(m)&:=\kappa_1\alpha_1L^{\alpha_1}m^{1-\alpha_1}
      +\kappa_2\alpha_2L^{\alpha_2}m^{1-\alpha_2}.
\end{split}
\]
Moreover $G$ is strictly convex on $(0,\infty)$ with $G(0^+)=G(\infty)=\infty$
and its unique minimum at $m=L$, where $G(L)=L$. Consequently, for each
$v>L$ the level set $\{m>0:G(m)=v\}$ consists of exactly two points, one in
$(0,L)$ and one in $(L,\infty)$, and $G$ is injective on each side of $L$
separately. The first equation of \eqref{eq:Asystem} says that $a$ and $b$ lie
on one such level set, so any solution with $a<b$ satisfies
\begin{equation}
0<a<L<b
\label{eq:aLb}
\end{equation}
unless it is the degenerate $a=b=L$ of Corollary~\ref{cor:noband}. Support
therefore lies strictly inside the continuation band as a conclusion, not an
assumption.
\end{theorem}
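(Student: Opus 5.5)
In regime A we have $b\le H$, so the band $(a,b)$ never reaches $H$ in its interior. By the convention stated after \eqref{eq:master}, the factor $T_{\beta_H}(H)$ is then replaced by the identity. The system therefore reduces to
\[
T_{-\beta_L}(L)\,\vk(a)=\vk(b),
\]
in which $\beta_H$ does not appear. The candidate \eqref{eq:candidate} is built from $\vk(a)$ on $(a,L)$ and $\vk(b)$ on $(L,b)$, with $g$ outside the band, so it too is free of $\beta_H$. This gives the independence claim. (That the candidate is the value function is the business of Section~\ref{sec:verify}; here only the algebraic statement is needed.)

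To turn this into \eqref{eq:Asystem}, I would not multiply out $T_{-\beta_L}$. Instead I would impose directly that the function with pair $\vk(a)$ below $L$ and the function with pair $\vk(b)$ above $L$ satisfy continuity at $L$ and the transmission condition there. Evaluating with $e_L$ gives
\[
e_L\vk(m)=\kappa_1L^{\alpha_1}m^{1-\alpha_1}+\kappa_2L^{\alpha_2}m^{1-\alpha_2}=G(m),
\]
so continuity at $L$ is $G(a)=G(b)$. Similarly $e_L'\vk(m)=D(m)$, which is $L$ times the derivative at $L$. The condition $(1+\beta_L)u'(L^+)=(1-\beta_L)u'(L^-)$, with the upper side carrying $\vk(b)$, becomes $(1+\beta_L)D(b)=(1-\beta_L)D(a)$. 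By Lemma~\ref{lem:crossing} these two equations are equivalent to the matrix identity above.

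For the shape of $G$, I would differentiate using the coefficient identities of Lemma~\ref{lem:deriv}:
\[
G'(m)=\kappa_1(1-\alpha_1)L^{\alpha_1}m^{-\alpha_1}+\kappa_2(1-\alpha_2)L^{\alpha_2}m^{-\alpha_2}
=c\bigl[(L/m)^{\alpha_1}-(L/m)^{\alpha_2}\bigr].
\]
Since $\alpha_1<0<1<\alpha_2$ and $c>0$, this is negative for $m<L$, zero at $m=L$ and positive for $m>L$. Differentiating again,
\[
G''(m)=\frac{c}{m}\bigl[-\alpha_1(L/m)^{\alpha_1}+\alpha_2(L/m)^{\alpha_2}\bigr]>0,
\]
so $G$ is strictly convex. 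At the minimum, $G(L)=(\kappa_1+\kappa_2)L=L$. For the limits: as $m\to0^+$ the term $m^{1-\alpha_2}$ blows up because $1-\alpha_2<0$, and as $m\to\infty$ the term $m^{1-\alpha_1}$ blows up because $1-\alpha_1>0$; both coefficients are positive, so $G(0^+)=G(\infty)=\infty$.

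Strict monotonicity on each side of $L$ together with these limits gives exactly two preimages of each $v>L$, one in $(0,L)$ and one in $(L,\infty)$, and injectivity on each side. Finally, suppose $G(a)=G(b)$ with $a<b$. The two points cannot lie on the same side of $L$, by injectivity there. If either equals $L$, then the common value is the minimum $L$, whose only preimage is $L$ itself; this is the degenerate case $a=b=L$. Hence any non-degenerate solution has $a<L<b$, which is \eqref{eq:aLb}.

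The only delicate point is bookkeeping: the orientation of the transmission condition (which side carries $\vk(b)$), and keeping the factor $L$ in $D$ consistent with the factor $z$ used in Lemma~\ref{lem:crossing}. Everything else is direct.
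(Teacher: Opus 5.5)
Your proposal is correct and follows essentially the paper's proof. You read \eqref{eq:Asystem} as value matching and transmission at $L$ via $e_L\vk(m)=G(m)$ and $e_L'\vk(m)=D(m)$, compute $G'$ and $G''$ through Lemma~\ref{lem:deriv}, and obtain \eqref{eq:aLb} from strict monotonicity on either side of $L$, exactly as the paper does. The only cosmetic difference is that you get the irrelevance of $\beta_H$ from the identity-replacement convention in \eqref{eq:master}, whereas the paper phrases it probabilistically: $\ell^H$ does not charge $[0,\tau]$.
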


\begin{proof}
When $b\le H$ the process stops before reaching $H$, so $\ell^H$ contributes
nothing on $[0,\tau]$ and $\beta_H$ enters neither \eqref{eq:master} nor the
candidate. The two displayed equations are the value-matching and transmission
conditions at $L$ written through Lemma~\ref{lem:smoothfit}: $G(m)$ is the
value at $L$ of the piece anchored at a smooth-fit boundary $m$, and $D(m)$ is
$L$ times its derivative there.

For the shape of $G$, write $t:=L/m$ and use
$\kappa_1(1-\alpha_1)=-\kappa_2(1-\alpha_2)=c>0$ from
Lemma~\ref{lem:deriv}. Differentiating,
\[
\frac{dG}{dm}=c\bigl(t^{\alpha_1}-t^{\alpha_2}\bigr),
\qquad
\frac{d^2G}{dm^2}=\frac{c}{m}
   \bigl(-\alpha_1t^{\alpha_1}+\alpha_2t^{\alpha_2}\bigr).
\]
Since $\alpha_1<0<1<\alpha_2$, both terms in $d^2G/dm^2$ are strictly
positive, so $G$ is strictly convex. In $dG/dm$, $t>1$ gives
$t^{\alpha_1}<1<t^{\alpha_2}$ and $t<1$ gives $t^{\alpha_1}>1>t^{\alpha_2}$,
so $dG/dm<0$ on $(0,L)$, $dG/dm>0$ on $(L,\infty)$ and $dG/dm=0$ at $m=L$:
the unique minimum is at $m=L$, with
$G(L)=(\kappa_1+\kappa_2)L=L$ by \eqref{eq:kappa}. The exponent $1-\alpha_2$
is negative and $1-\alpha_1$ positive, so $G(0^+)=G(\infty)=\infty$ and every
level $v>L$ is attained exactly twice, once on each side of $L$.

The ordering \eqref{eq:aLb} now needs nothing further. Since the derivative
of $G$ vanishes only at $L$, $G$ is strictly decreasing on the closed interval
$(0,L]$ and strictly increasing on $[L,\infty)$. Hence $G(a)=G(b)$ with $a<b$
is impossible when $a,b$ both lie in $(0,L]$, where it would force
$G(a)>G(b)$, and impossible when both lie in $[L,\infty)$, where it would
force $G(a)<G(b)$. Only $a<L<b$ and the degenerate $a=b=L$ remain.
\end{proof}

\begin{corollary}
\label{cor:noband}
If $\beta_L=0$ the system \eqref{eq:Asystem} forces $a=b=L$: the continuation
region is empty and the holder sells immediately. A continuation band exists
if and only if $\beta_L>0$.
\end{corollary}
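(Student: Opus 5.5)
The plan is to reduce \eqref{eq:Asystem} to a single equation along the level set of $G$, and then show that this equation forces $a=b$ exactly when $\beta_L=0$.

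First, suppose $\beta_L=0$. The second equation becomes $D(a)=D(b)$, and I combine it with $G(a)=G(b)$. Put $a$-terms and $b$-terms side by side: $p(m):=\kappa_1L^{\alpha_1}m^{1-\alpha_1}$ and $q(m):=\kappa_2L^{\alpha_2}m^{1-\alpha_2}$. Then $G=p+q$ and $D=\alpha_1p+\alpha_2q$. Since $\alpha_1\neq\alpha_2$, the linear map $(p,q)\mapsto(G,D)$ is invertible, so the two equations give $p(a)=p(b)$ and $q(a)=q(b)$. But $p$ is strictly increasing in $m$, because $1-\alpha_1>0$ by Lemma~\ref{lem:exponents}. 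Hence $a=b$. By Theorem~\ref{thm:Airrelevant} the only admissible coincident solution is the degenerate $a=b=L$. Equivalently, with $T_0=I$, \eqref{eq:master} reads $\vk(a)=\vk(b)$, and the first component of $\vk$ is injective. So there is no band and immediate sale is optimal. I would note that the value-function claim itself rests on the verification of Section~\ref{sec:verify}, which shows $V=g$ is then the least excessive majorant: with no skew and $\mu<r$, $g$ is $r$-superharmonic everywhere.

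For the ``if and only if'', I parametrise the level set. For $v>L$, let $a(v)<L<b(v)$ be the two roots of $G=v$ from Theorem~\ref{thm:Airrelevant}. Define the ratio $R(v):=D(b(v))/D(a(v))$, assuming $D(a)>0$, which I must check. A band exists iff $R(v)=\pi(\beta_L)^{-1}=(1-\beta_L)/(1+\beta_L)$ for some $v>L$. As $v\downarrow L$, both roots tend to $L$, so $R\to1$, giving $\beta_L=0$ in the limit.

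The key step is to show that $R(v)<1$ for every $v>L$, i.e. $D(b)<D(a)$ on each level set. Write $t=L/m$, so $D(m)=m\,[\kappa_1\alpha_1t^{\alpha_1}+\kappa_2\alpha_2t^{\alpha_2}]$, which is $L\cdot U'(L)$ for the smooth-fit piece anchored at $m$. The interpretation is this: the piece anchored at $a<L$ hits level $v$ at $L$ while rising, and the piece anchored at $b>L$ descends to $v$ at $L$. Hence $D(a)>0$, and I expect $D(b)<D(a)$; possibly $D(b)$ is even of arbitrary sign, in which case $R<1$ holds trivially when $D(b)\le0$. For the comparison I would use Lemma~\ref{lem:onezero}: the difference of the two pieces is a skew-free solution vanishing at $L$ (same value $v$), hence strictly monotone. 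Its sign near $a$ or near $b$ then fixes the sign of the slope difference at $L$. I expect $U_b-U_a$ to be decreasing, which gives $D(b)<D(a)$.

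Granting $R<1$, the condition $(1+\beta_L)D(b)=(1-\beta_L)D(a)$ forces $(1-\beta_L)/(1+\beta_L)<1$, i.e. $\beta_L>0$ (sign cases for $D(b)$ handled separately). This gives necessity, and it also rules out $\beta_L<0$. For sufficiency, $R$ is continuous with $R(L^+)=1$. As $v\to\infty$, $b\to\infty$ and $a\to0$, and $R$ tends to $-\infty$ or a limit below $\pi(\beta_L)^{-1}$; checking this via the dominant powers is the main obstacle. The intermediate value theorem then gives a root for every $\beta_L\in(0,1)$. The hardest part is therefore the sign and monotonicity of $U_b-U_a$ together with the asymptotics of $R$; I would try to settle both directly from the explicit powers $t^{\alpha_1},t^{\alpha_2}$, using the derivative identity of Lemma~\ref{lem:deriv}.
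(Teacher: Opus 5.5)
Your proposal is correct. For $\beta_L=0$ it is essentially the paper's argument: inverting $(p,q)\mapsto(G,D)$ amounts to reading \eqref{eq:master} with $T_0=I$ as $\vk(a)=\vk(b)$. Strictly, this forces only $a=b$, since every diagonal pair is a trivial root, so ``$a=b=L$'' just names the empty band, in your write-up as in the paper.

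For the ``if and only if'' you take a genuinely different route. The paper argues probabilistically through \eqref{eq:shorttime}. There $\E_L[\ell^L_t]\asymp\sqrt t$ dominates the $O(t)$ drift, so $g$ fails to be $r$-excessive at $L$ exactly when $\beta_L>0$. This places $L$ in the continuation region of the actual stopping problem, whatever $H$ is. For $\beta_L\le0$ (with $\beta_H<0$) the pay-off is $r$-superharmonic and $v=g$, as in Proposition~\ref{prop:bothneg}.

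You instead analyse \eqref{eq:Asystem} along the level sets of $G$ through $R(v)$. Both steps you leave open close at once:
\begin{itemize}
\item The smooth-fit piece anchored at $m$ is $m\chi(x/m)$ with $\chi(t)=\kappa_1t^{\alpha_1}+\kappa_2t^{\alpha_2}$, so $D(m)=L\chi'(L/m)$.
\item Strict convexity of $\chi$ and $\chi'(1)=\kappa_1\alpha_1+\kappa_2\alpha_2=1$ give $D(b)<L<D(a)$ directly. In your one-zero argument the same fact fixes the sign: $d(a)=b\chi(a/b)-a>0$ because $\chi(t)>t$ for $t\ne1$.
\item $\chi'(0^+)=-\infty$ makes $D(b)$ vanish at a finite $b$. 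So $R$ reaches $0$ at a finite $v$, and the intermediate value theorem needs no asymptotics. The limit of $R$ at infinity is in fact $\alpha_1/\alpha_2<0$.
\end{itemize}

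What your route buys is an actual existence proof of a root $a<L<b$ of \eqref{eq:Asystem} for every $\beta_L\in(0,1)$, and a proof that for $\beta_L\le0$ no root with $a<b$ exists. The paper uses the $L$-band throughout Sections~\ref{sec:fb} and~\ref{sec:samesign}. Its short-time argument, however, shows only that the continuation region is non-empty, not that it is an interval solving \eqref{eq:Asystem}.

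What your route does not buy by itself is the statement about the stopping problem. A root is the value function only via Theorem~\ref{thm:verification}, and only when $\bar b\le H$. ``No admissible root'' becomes ``no continuation region'' for $\beta_L<0$ only through the same superharmonicity of $g$ that you invoke at $\beta_L=0$. The absence of roots of this particular system does not by itself exclude other geometries.
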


Corollary~\ref{cor:noband} is the multi-skew form of a mechanism identified by
\citet{AlvarezSalminen17}, and it is worth spelling out. Start the process at
a skew point $z\in\{L,H\}$ and expand for small $t$. By It\^o--Tanaka,
\begin{equation}
\E_z\bigl[e^{-rt}g(S_t)\bigr]-g(z)
=\underbrace{\beta_z\,g'(z)\,\E_z\bigl[\ell^z_t\bigr]}_{\text{local-time push}}
+\underbrace{(\Lop-r)g(z)\,t}_{\text{drift and discounting}}+o(t),
\label{eq:shorttime}
\end{equation}
the two terms being the only contributions that survive to this order. The
point is that they are of \emph{different orders}: the process is at $z$ at
time $0$, so it accumulates local time there immediately, and
$\E_z[\ell^z_t]\asymp\sqrt t$ as $t\downarrow0$, whereas the drift term is
$O(t)$. Since $\sqrt t\gg t$ for small $t$, the first term decides the sign of
\eqref{eq:shorttime} however unfavourable the drift may be.

Hence $g$ fails to be $r$-excessive at $z$ exactly when $\beta_zg'(z)>0$:
there, holding for a short while beats stopping, so $z$ must lie in the
continuation region. With $g(x)=x$ and $\beta_L>0$ this places $L$ strictly
inside the band, which is Corollary~\ref{cor:noband}; at $H$, where
$\beta_H<0$, the push instead reinforces the drift and creates no value. Under
Assumption~\ref{ass:drift} the drift term is negative everywhere, so the
local-time push at $L$ is the only source of value in the problem.

\subsection{Regime C: selling exactly at resistance}

\begin{theorem}
\label{thm:corner}
Suppose the $L$-band has $\bar b>H$. Then the upper boundary pins at
$b=H$, with $a$ the unique root in $(0,L)$ of
\begin{equation}
\bigl[T_{-\beta_L}(L)\vk(a)\bigr]\cdot e_H=H,
\label{eq:Ceq}
\end{equation}
if and only if
\begin{equation}
\boxed{\ \pi(\beta_H)=\det T_{\beta_H}\ \le\ V'(H^-)\ }
\label{eq:criterion}
\end{equation}
where $V'(H^-)$ is computed from \eqref{eq:Ceq}. Equivalently, with
$\beta_H^\star:=\dfrac{V'(H^-)-1}{V'(H^-)+1}$, regime C obtains exactly
when $\beta_H\le\beta_H^\star$.
\end{theorem}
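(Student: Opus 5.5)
The proof splits into three pieces: solve the one-boundary problem \eqref{eq:Ceq}, read off the local excessivity condition at $H$, and translate that condition into the threshold $\beta_H^\star$. Throughout I work with the family $U_a$ of the sensitivity subsection, whose coefficients on $(a,L)$ are $\vk(a)$ and on $(L,H)$ are $T_{-\beta_L}(L)\vk(a)$. Equation \eqref{eq:Ceq} then reads simply $F(a):=U_a(H)=H$.

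\emph{Step 1: existence and uniqueness of the root.} By Lemma~\ref{lem:sensneg}, $a\mapsto U_a(H)$ is strictly decreasing, so there is at most one root. For existence I compare the two ends.
\begin{itemize}
\item At $a=\bar a$, the function $U_{\bar a}$ is the $L$-band candidate. By the single-skew solution it strictly dominates $g$ on $(\bar a,\bar b)$, and $H$ lies in that interval because $\bar b>H$. Hence $F(\bar a)>H$. If dominance for the $L$-band is only proved later in Section~\ref{sec:verify}, I would argue directly instead: $U_{\bar a}-x$ vanishes with zero slope at $\bar b$, and $(\Lop-r)(U-x)=(r-\mu)x>0$, so the comparison on $(L,\bar b)$ gives positivity.
\item At $a=L$, $U_L(x)=x$ on $(0,L]$. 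Transmission gives $U_L'(L^+)=1/\pi(\beta_L)<1$. The function $U_L-x$ vanishes at $L$, has negative slope there, and satisfies $(\Lop-r)(U_L-x)=(r-\mu)x>0$. A maximum-principle argument therefore keeps it negative on $(L,H]$, so $F(L)<H$.
\end{itemize}
Continuity then yields a unique root $a\in(\bar a,L)\subset(0,L)$. Since $U_a-x$ is positive just below $H$ and vanishes at $H$, we also get $V'(H^-)\le 1$.

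\emph{Step 2: the criterion is the local excessivity condition at $H$.} Take the candidate $V=U_a$ on $(a,H)$ and $V=x$ elsewhere. At $H$ the right slope is $V'(H^+)=1$, so the quantity $\Delta_H$ of \eqref{eq:localtimecoef} equals $(1+\beta_H)-(1-\beta_H)V'(H^-)$. Hence $\Delta_H\le0$ is exactly $\pi(\beta_H)\le V'(H^-)$, which is \eqref{eq:criterion}.
\begin{itemize}
\item \emph{Necessity.} If \eqref{eq:criterion} fails, then $\Delta_H>0$. The short-time expansion \eqref{eq:shorttime}, applied to $V$ started at $H$, has its $\sqrt t$ local-time term strictly positive. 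Continuing from $H$ therefore beats stopping, so $H$ lies in the continuation region and $b>H$, which is regime B rather than C. The possibility $b<H$ is already excluded by the argument after \eqref{eq:regimeA}, since then $b=\bar b>H$.
\item \emph{Sufficiency.} Given \eqref{eq:criterion}, $V$ is $r$-superharmonic at $H$. Above $H$ it is superharmonic because $(\Lop-r)x<0$. It is superharmonic on $(a,H)$ and at $L$ by construction, and on $(0,a)$ for the same reason as above $H$. The remaining ingredient is dominance $V\ge g$ on $(a,H)$; I would import it from Section~\ref{sec:verify}, or prove it here by the same comparison as in Step~1. Then $V$ is a nonnegative $r$-excessive majorant attained by the hitting time of $(0,a]\cup[H,\infty)$, so $V=v$ by \citet{DayanikKaratzas03}.
\end{itemize}

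\emph{Step 3: the threshold.} The permeability $\pi(\beta)=(1+\beta)/(1-\beta)$ is strictly increasing on $(-1,1)$, with inverse $p\mapsto(p-1)/(p+1)$. Therefore $\pi(\beta_H)\le V'(H^-)$ if and only if $\beta_H\le\beta_H^\star$. Note that $V'(H^-)$ depends only on $L$, $\beta_L$ and $H$, not on $\beta_H$, so this threshold is well defined.

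\emph{Main obstacle.} I expect the hardest part to be the rigorous necessity direction. One must show that $\Delta_H>0$ genuinely rules out every stopping region containing $H$, not merely this particular candidate. My first attempt would be to compute $\E_H[e^{-r\tau_\varepsilon}V(S_{\tau_\varepsilon})]$ for the exit time $\tau_\varepsilon$ of $(H-\varepsilon,H+\varepsilon)$, using the skew exit probabilities $(1\pm\beta_H)/2$. This gives the value $H+\tfrac12\varepsilon\Delta_H+o(\varepsilon)>H=g(H)$, which shows stopping at $H$ is suboptimal.
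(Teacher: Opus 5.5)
Your handling of the criterion itself matches the paper. With $V'(H^+)=1$ the defect at $H$ is $\Delta_H=(1+\beta_H)-(1-\beta_H)V'(H^-)$, and its sign is \eqref{eq:criterion}. Uniqueness of the root comes from Lemma~\ref{lem:sensneg}, and $\beta_H^\star$ comes from inverting $\pi$. Your exit-time necessity argument is a sound sharpening of the paper's one-line supermartingale remark, once you make explicit that after exiting below $H$ you continue to the exit of $(a,H)$, so that $V$ is realised by an actual stopping time. Your sufficiency step is what the paper defers to Theorem~\ref{thm:verification}. The paper's proof does not address existence of the root; you do, and that is where your argument breaks.

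The bound $F(L)<H$ does not follow from a maximum principle. For $W:=U_L-x$, the inequality $(\Lop-r)W=(r-\mu)x>0$ only excludes a non-negative interior maximum. It does not stop $W$, which leaves $L$ at $0$ with negative slope, from turning at a negative minimum and coming back through zero. With the skew at $H$ switched off it must come back: $U_L$ has positive value and slope at $L^+$, so its $x^{\alpha_2}$-coefficient is positive and $U_L(x)-x\to\infty$. Your argument would thus equally "prove" $U_L<x$ on all of $(L,\infty)$, which is false. A valid argument has to use the hypothesis $\bar b>H$, which your bullet never invokes.

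Here is one repair. Work in the model with the skew at $H$ switched off; this leaves $U_a$ unchanged on $(a,H]$. Put $y=x^{\alpha_2-\alpha_1}$ and $(p,q)=\Phi(a)$. On $(L,\infty)$,
\[
x^{-\alpha_1}\bigl(U_a(x)-x\bigr)=p+qy-y^{\kappa_2},
\]
which is strictly convex in $y$ because $0<\kappa_2<1$, so $U_a-x$ has at most two zeros there.

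\emph{Existence.} Suppose $U_L-x$ vanished at some $x_0\in(L,\bar b]$ besides $L$. Convexity would make it non-negative at $\bar b$, giving $U_L(\bar b)\ge\bar b=U_{\bar a}(\bar b)$, which contradicts Lemma~\ref{lem:sensneg}. Hence $U_L<x$ on $(L,\bar b]$, and $F(L)<H$.

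\emph{Dominance and the slope bound.} Apply the same count at the root $a\in(\bar a,L)$. There $U_a-x$ is positive at $L$, since $G(a)>L$, and negative at $\bar b$, since $a>\bar a$. So $H$ is its only zero in $(L,\bar b)$ and is crossed transversally. This gives $U_a>x$ on $(L,H)$ and $V'(H^-)<1$.

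This also repairs your unsupported aside that $U_a-x$ is positive just below $H$. Importing Proposition~\ref{prop:dominance} is legitimate for sufficiency. But in regime~C that proposition itself rests on $V'(H^-)<1$ from Proposition~\ref{prop:transition}, so it cannot be used to derive that bound.

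A harmless slip: $U_L\ne x$ on $(0,L)$. Its pair there is $\vk(L)$, so it only touches $x$ at $L$; you use only $U_L(L)=L$ and $U_L'(L^-)=1$.
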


\begin{proof}
In regime C the point $H$ is a boundary of the continuation region, so
$V=g$ above it and $V'(H^+)=1$. The candidate is no longer in the domain of
the generator at $H$, so by \eqref{eq:localtimecoef} It\^o--Tanaka produces
there the extra term
$\tfrac12\bigl[(1+\beta_H)V'(H^+)-(1-\beta_H)V'(H^-)\bigr]\,d\ell^H_t$.
For $e^{-rt}V(S_t)$ to be a supermartingale this coefficient must be
non-positive, i.e.
$(1+\beta_H)\le(1-\beta_H)V'(H^-)$, which is \eqref{eq:criterion} after
dividing by $1-\beta_H>0$. Solving the equality for $\beta_H$ gives
$\beta_H^\star$. Uniqueness of the root of \eqref{eq:Ceq} follows because the
left-hand side is continuous and strictly monotone in $a$ on $(0,L)$.
\end{proof}

Theorem~\ref{thm:corner} is the paper's economic conclusion in one line. The
practitioner's rule ``sell at resistance'' is exactly optimal --- not
approximately, and not at some level near $H$ --- precisely when the
permeability of the resistance level falls below the slope of the value
function approaching it from inside. That the determinant of the crossing
matrix, introduced in Lemma~\ref{lem:crossing} as bookkeeping, is the quantity
that decides this is evidence that \eqref{eq:T} is the right parameterisation.

\subsection{Regime B and the transition}

In regime B equation \eqref{eq:master} holds with both factors present and
$\beta_H$ enters genuinely. The statement that the three geometries fit
together is Proposition~\ref{prop:transition} below; we first record the
Jacobian of the system. Using \eqref{eq:vkprime} for the derivatives of $\vk$,
and applying $\det[Pu\,|\,Qv]=\det P\cdot\det[u\,|\,P^{-1}Qv]$, valid for any
invertible $P$ and any $Q$, together with
$T_{-\beta}^{-1}=T_\beta$, the Jacobian of \eqref{eq:master} in $(a,b)$
satisfies
\begin{equation}
\det J=-c^2\,\det T_{-\beta_L}\;
\det\Bigl[\bigl(a^{-\alpha_1},-a^{-\alpha_2}\bigr)^{\!\top}\ \Big|\
T_{\beta_L}T_{\beta_H}\bigl(b^{-\alpha_1},-b^{-\alpha_2}\bigr)^{\!\top}\Bigr].
\label{eq:detJ}
\end{equation}

It is more convenient, however, to reduce the system to a scalar one. The
lower boundary $a$ fixes the coefficient pair below $L$ by smooth fit, and
crossing $L$ carries it to the middle interval; write
\begin{equation}
\Phi(a):=T_{-\beta_L}(L)\,\vk(a)
\label{eq:Phi}
\end{equation}
for the resulting coefficient pair on $(L,H)$, so that $V(x)=\Phi(a)\cdot e_x$
there. Above $H$ no crossing is involved: since
$(H,b)$ contains no skew point, smooth fit at $b$ alone fixes the coefficient
pair there, and it is $\vk(b)$. The four quantities we need are the
evaluation vectors $e_H,e_H'$ of \eqref{eq:evalvectors}, taken at $x=H$,
applied to the coefficient pairs on either side of $H$:
\begin{equation}
\begin{aligned}
A(a)&:=\Phi(a)\cdot e_H=V(H^-),
&\qquad
B(a)&:=\Phi(a)\cdot e_H'=HV'(H^-),\\
G_H(b)&:=\vk(b)\cdot e_H=V(H^+),
&\qquad
D_H(b)&:=\vk(b)\cdot e_H'=HV'(H^+).
\end{aligned}
\label{eq:ABGD}
\end{equation}
Written out,
\[
\begin{split}
G_H(b)&=\kappa_1H^{\alpha_1}b^{1-\alpha_1}
+\kappa_2H^{\alpha_2}b^{1-\alpha_2},\\
D_H(b)&=\kappa_1\alpha_1H^{\alpha_1}b^{1-\alpha_1}
+\kappa_2\alpha_2H^{\alpha_2}b^{1-\alpha_2}.
\end{split}
\]
\begin{corollary}[Sensitivity to the lower boundary]
\label{lem:sens}
With $w_a=\partial U_a/\partial a$ as in Lemma~\ref{lem:sensneg},
\[
\frac{dA}{da}(a)=w_a(H)<0,
\qquad
\frac{dB}{da}(a)=H\,w_a'(H)<0 .
\]
\end{corollary}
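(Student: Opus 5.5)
The plan is to differentiate the definitions in \eqref{eq:ABGD} with respect to $a$ and recognise the resulting objects as the function $w_a$ of Lemma~\ref{lem:sensneg}, evaluated at $H$.

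\emph{Step 1: identify $A$ and $B$ with values of $U_a$.} On $(L,H)$ the coefficient pair of $U_a$ is $\Phi(a)=T_{-\beta_L}(L)\vk(a)$. Indeed, $U_a$ has coefficients $\vk(a)$ on $(a,L)$ by smooth fit (Lemma~\ref{lem:smoothfit}). Crossing $L$ upward uses $T_{\beta_L}^{-1}=T_{-\beta_L}$ from \eqref{eq:Tprops}. Hence $U_a(x)=\Phi(a)\cdot e_x$ and $xU_a'(x)=\Phi(a)\cdot e_x'$ for $x\in(L,H)$. Letting $x\uparrow H$ gives
\[
A(a)=U_a(H^-),\qquad B(a)=H\,U_a'(H^-).
\]
This uses only the coefficients on the middle interval, so it holds whether or not $H$ lies inside the band; no transmission at $H$ is involved.

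\emph{Step 2: differentiate.} The matrix $T_{-\beta_L}(L)$ and the row vectors $e_H,e_H'$ do not depend on $a$. Therefore
\[
\frac{dA}{da}=\frac{d\Phi}{da}\cdot e_H,\qquad
\frac{dB}{da}=\frac{d\Phi}{da}\cdot e_H',\qquad
\frac{d\Phi}{da}=T_{-\beta_L}(L)\,\frac{d\vk}{da}(a).
\]
The pair $d\Phi/da$ is exactly the coefficient pair of $w_a=\partial U_a/\partial a$ on $(L,H)$. This is the observation already made in the proof of Lemma~\ref{lem:sensneg}: differentiation in $a$ commutes with the crossing matrices. So $dA/da=w_a(H^-)$ and $dB/da=H\,w_a'(H^-)$. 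Since $w_a$ is continuous, $w_a(H^-)=w_a(H)$.

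\emph{Step 3: signs.} Lemma~\ref{lem:sensneg} gives $w_a<0$ and $w_a'<0$ on $(a,\infty)$, and $H>L>a$. Hence $dA/da<0$, and $dB/da=Hw_a'(H)<0$ with $H>0$.

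\emph{Main obstacle.} The only delicate point is the meaning of $w_a'(H)$, because $w_a$ has a kink at $H$. The statement must be read with the one-sided derivative $w_a'(H^-)$, consistent with $B=HV'(H^-)$. Strict negativity still holds, since Lemma~\ref{lem:sensneg} (via Lemma~\ref{lem:onezero}) gives $w_a'<0$ for one-sided derivatives at skew points as well. The extension of $w_a$ above $H$ by $T_{-\beta_H}(H)$ is irrelevant here. I would add a sentence making this convention explicit.
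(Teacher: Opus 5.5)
Your proposal is correct and follows the paper's own argument. $A$ and $B$ are the value of $U_a$ at $H^-$ and $H$ times its one-sided derivative there; differentiation in $a$ commutes with $T_{-\beta_L}(L)$ and so replaces $U_a$ by $w_a$, and Lemma~\ref{lem:sensneg} gives both signs. Your added care is a harmless sharpening of the paper's one-line proof: you work with $U_a$ rather than the candidate $V$, so the identification does not depend on the regime or on $(a,b)$ solving the system, and you read $w_a'(H)$ as $w_a'(H^-)$.
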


\begin{proof}
$V=U_a$ on $(a,b)$, and by \eqref{eq:ABGD} the quantities $A$ and $B$ are the
value and scaled derivative of $V$ at $H^-$; differentiating in $a$ replaces
$V$ by $w_a$, and Lemma~\ref{lem:sensneg} signs both.
\end{proof}

Continuity and transmission at $H$ read
\[
A(a)=G_H(b),
\qquad
B(a)=\pi(\beta_H)\,D_H(b).
\]
By Corollary~\ref{lem:sens}, $A$ is strictly decreasing and hence injective,
so the first equation determines $a$ from $b$ uniquely: write
$a(b):=A^{-1}\bigl(G_H(b)\bigr)$, defined for those $b$ with $G_H(b)$ in the
range of $A$. Substituting into the second equation,
\begin{equation}
\pi(\beta_H)=\frac{B\bigl(a(b)\bigr)}{D_H(b)}.
\label{eq:scalar}
\end{equation}
The two-unknown system has therefore collapsed to a single scalar. Choosing
$b$ fixes $a$, and then \eqref{eq:scalar} fixes $\beta_H$; so for each
admissible $b$ there is exactly one pair $(a,\beta_H)$, and the regime-B
branch is the curve
\[
b\ \longmapsto\ \bigl(a(b),\,b,\,\beta_H(b)\bigr).
\]
This is what we mean by a graph over $b$: the branch is parameterised by $b$
alone and cannot double back on itself. Proposition~\ref{prop:transition}
turns that into the statement that the optimal threshold does not jump.

\begin{lemma}[the candidate is positive and increasing]
\label{lem:incr}
Let $(a,b)$ with $a<L<b$ solve the relevant instance of \eqref{eq:master} or
\eqref{eq:Ceq}, and let $V$ be the candidate \eqref{eq:candidate}. Then $V>0$ and $V'>0$
throughout $(a,b)$.
\end{lemma}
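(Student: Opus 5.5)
The plan is to use the monotonicity argument from the proof of Lemma~\ref{lem:onezero}. That argument says: if a single-piece solution $Ax^{\alpha_1}+Bx^{\alpha_2}$ has positive value and non-negative slope at a point $z$, then it is positive and strictly increasing on $(z,\infty)$. We start at the lower boundary $a$ and propagate upward through the skew points lying inside $(a,b)$.

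\emph{Step 1: the lowest piece.} By smooth fit (Lemma~\ref{lem:smoothfit}), $V(a)=a>0$ and $V'(a)=1>0$ on $(a,L)$, with coefficients $\vk(a)$. Both entries of $\vk(a)$ are positive by Lemma~\ref{lem:exponents}, so $V>0$ and $V'>0$ on $(a,L]$ immediately. One may also invoke the first paragraph of the proof of Lemma~\ref{lem:onezero} at $z=a$.

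\emph{Step 2: crossing $L$.} Continuity gives $V(L^+)=V(L^-)>0$. The transmission condition \eqref{eq:transmission} gives $V'(L^+)=V'(L^-)/\pi(\beta_L)>0$, since $\pi(\beta_L)\in(0,\infty)$. Applying the same one-piece lemma at $z=L$ then gives $V>0$ and $V'>0$ on $(L,\min(b,H)]$. In regime~A ($b\le H$) and in regime~C ($b=H$) this already covers $(a,b)$.

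\emph{Step 3: crossing $H$ in regime~B.} In regime~B, \eqref{eq:master} holds with the factor $T_{\beta_H}(H)$, so $V$ is glued at $H$ by transmission. Again $V'(H^+)=V'(H^-)/\pi(\beta_H)>0$, because permeability is positive even when $\beta_H<0$: it only shrinks the slope, it never reverses it. The one-piece lemma at $z=H$ gives positivity and monotonicity on $(H,b)$.

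The only point needing care is the bookkeeping for which crossing factors are present. The convention after \eqref{eq:master}, that a factor is the identity when its skew point is not interior, must be matched to the regimes. In regime~C, $H$ is the endpoint, so no statement is needed there beyond $V'(H^-)>0$, which Step~2 supplies.

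An alternative, regime-free route is worth noting. On $(a,b)$ the candidate equals $U_a$, which is a solution of the equation on $(a,\infty)$. Writing $U_a=A\varphi_r+B\psi_r$, one could argue through the Wronskian. The direct propagation above is simpler, though, and does not need the equations for $b$ at all; the argument uses only smooth fit at $a$ and $\pi>0$.
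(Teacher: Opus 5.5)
Your proof is correct, but it takes a different route from the paper's. The paper gets positivity globally from the one-zero property of Lemma~\ref{lem:onezero}. There $V$ is the restriction of a solution $u$ with $u(a)=a>0$ and $u(b)=b>0$, and an interior zero would make $u$ strictly monotone with that single zero, forcing $u(a)$ and $u(b)$ to have opposite signs. It then gets monotonicity by a maximum-principle argument: at a first zero $x_1$ of $V'$ one would need $V''(x_1)\le0$, whereas the equation gives $\tfrac12\sigma^2x_1^2V''(x_1)=rV(x_1)>0$, and transmission preserves the sign of $V'$ across $L$ and $H$.

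You instead reuse the explicit one-piece coefficient computation from the proof of Lemma~\ref{lem:onezero}, which is the mechanism the paper uses there to show $\psi_r$ is positive and increasing. You propagate it upward from $a$ and obtain both conclusions in a single pass. Your route uses only smooth fit at $a$ and $\pi(\beta_z)>0$, never the condition at $b$. It therefore actually shows $U_a>0$ and $U_a'>0$ on all of $(a,\infty)$, which is the fact the paper has to re-derive inside the proof of Lemma~\ref{lem:damping}. The paper's monotonicity step, by contrast, uses only the sign structure of $(\Lop-r)V=0$ (namely $rV>0$ at a critical point) and not the power form $Ax^{\alpha_1}+Bx^{\alpha_2}$. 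That makes it the more portable argument if the dynamics between the levels were a general diffusion.

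One small imprecision in your Step~1: positivity of both entries of $\vk(a)$ gives $V>0$, but it does not by itself give $V'>0$, since the $x^{\alpha_1}$ term is decreasing. You need $V'(a)=1$ together with either convexity (both coefficients positive and $\alpha_i(\alpha_i-1)>0$) or the one-piece lemma you cite as the alternative.
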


\begin{proof}
For positivity, note that $V$ on $(a,b)$ is the restriction of a solution
$u=c_1\psi_r+c_2\varphi_r\not\equiv0$ of $(\Lop-r)u=0$ on $(0,\infty)$ in the
sense of Lemma~\ref{lem:transmission}, with $u(a)=a>0$ and $u(b)=b>0$. If
$V(x_0)=0$ for some $x_0\in(a,b)$, then by Lemma~\ref{lem:onezero} $u$ is
strictly monotone with $x_0$ as its only zero, so $u(a)$ and $u(b)$ have
opposite signs --- a contradiction. Hence $V>0$. (The skew points need no
separate treatment: Lemma~\ref{lem:onezero} already accounts for them.)

For monotonicity, smooth fit gives $V'(a^+)=1>0$. Suppose $V'$ first vanished
at some $x_1$ in a skew-free subinterval. Approaching $x_1$ from the left with
$V'>0$ and $V'(x_1)=0$ forces $V''(x_1)\le0$; but
$\tfrac12\sigma^2x_1^2V''(x_1)=rV(x_1)>0$ by the previous paragraph, so
$V''(x_1)>0$, a contradiction. The same argument applies one-sidedly if $x_1=z\in\{L,H\}$ with
$V'(z^-)=0$, since $\tfrac12\sigma^2z^2V''(z^-)=rV(z)>0$; and otherwise the
transmission condition gives $V'(z^+)=V'(z^-)/\pi(\beta_z)$ with
$\pi(\beta_z)>0$, which preserves the sign. So $V'>0$ on all of $(a,b)$.
\end{proof}

\begin{proposition}
\label{prop:transition}
Fix $\beta_L,L,H$ with $\bar b>H$, so that regime~A fails by
\eqref{eq:regimeA}, and let $(\bar a,\bar b)$ be the $L$-band. Then
$\bar b$ is the largest upper boundary attainable in regime~B, whence the
notation. Along the regime-B branch,
$\pi(\beta_H)$ is a strictly increasing function of $b$, running from
$V'(H^-)$ at $b=H$ to $1$ at $b=\bar b$; in particular $V'(H^-)<1$. Consequently $b\mapsto\beta_H$ is a strictly
increasing bijection from $(H,\bar b)$ onto $(\beta_H^\star,0)$, the Jacobian
\eqref{eq:detJ} is non-singular throughout, and the transition to regime C at
$\beta_H=\beta_H^\star$ is a transversal crossing of $\{b=H\}$ rather than a
fold: the optimal threshold does not jump.
\end{proposition}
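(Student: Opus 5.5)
The plan is to work entirely with the scalar reduction \eqref{eq:scalar}. Define on $[H,\bar b]$ the function
\[
R(b):=\frac{B\bigl(a(b)\bigr)}{D_H(b)},
\]
so that the regime-B branch is exactly the set of $b$ with $\pi(\beta_H)=R(b)$. Everything in the proposition follows from three facts: the values of $R$ at the two endpoints, the strict monotonicity of $R$, and the fact that $\pi$ is an increasing bijection.

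\emph{Step 1: $a(b)$ is defined on $[H,\bar b]$.} At $b=H$ we have $G_H(H)=(\kappa_1+\kappa_2)H=H$. Hence $a(H)$ is the root of $A(a)=H$, which is exactly \eqref{eq:Ceq}. At $b=\bar b$, take the $L$-band candidate with $\beta_H=0$. Its coefficient pair on $(L,\bar b)$ is $\Phi(\bar a)$, and it also equals $\vk(\bar b)$, because $T_0=I$ and smooth fit holds at $\bar b$. This gives $A(\bar a)=G_H(\bar b)$, so $a(\bar b)=\bar a$.

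Next, the map $b\mapsto G_H(b)$ is continuous and strictly increasing on $[H,\infty)$. The argument copies the shape computation for $G$ in Theorem~\ref{thm:Airrelevant}: with $t=H/b$, one gets $G_H'(b)=c\,(t^{\alpha_1}-t^{\alpha_2})>0$ for $b>H$. Also $A$ is continuous and strictly decreasing by Corollary~\ref{lem:sens}. So by the intermediate value theorem, $G_H(b)$ stays in the range of $A$ for every $b\in[H,\bar b]$. The implicit function theorem then gives $a'(b)=G_H'(b)/A'(a)<0$.

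\emph{Step 2: endpoint values.} At $b=H$, $D_H(H)=(\kappa_1\alpha_1+\kappa_2\alpha_2)H=H$, since smooth fit gives $\kappa_1\alpha_1+\kappa_2\alpha_2=1$. Therefore $R(H)=B(a(H))/H=V'(H^-)$. At $b=\bar b$, the transmission condition with $\beta_H=0$ reads $B(\bar a)=D_H(\bar b)$, so $R(\bar b)=1$.

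\emph{Step 3: strict monotonicity.} The numerator and denominator move in opposite directions.
\begin{itemize}
\item The numerator $B(a(b))$ is strictly increasing, because $B'<0$ (Corollary~\ref{lem:sens}) and $a'(b)<0$.
\item The denominator is strictly decreasing. Using $\kappa_1(1-\alpha_1)=c$ and $\kappa_2(1-\alpha_2)=-c$ from Lemma~\ref{lem:deriv}, one finds $D_H'(b)=c\,(\alpha_1t^{\alpha_1}-\alpha_2t^{\alpha_2})<0$.
\item Both are positive. $B=HV'(H^-)>0$ and $D_H=HV'(H^+)>0$ by Lemma~\ref{lem:incr}. For the denominator one can also argue directly: $D_H(b)$ is $H$ times the slope at $H$ of the smooth-fit piece anchored at $b$. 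The mirror-image argument of Lemma~\ref{lem:onezero} makes that slope positive below $b$.
\end{itemize}
Hence $R'(b)>0$ strictly on $(H,\bar b)$.

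\emph{Step 4: consequences.} Strict monotonicity together with $R(\bar b)=1$ gives $V'(H^-)=R(H)<1$. Any $b>\bar b$ on the branch would need $\pi(\beta_H)=R(b)>1$, which means $\beta_H>0$. That contradicts \eqref{eq:betasigns}, so $\bar b$ is the largest attainable upper boundary. The map $\pi\mapsto\beta=(\pi-1)/(\pi+1)$ is strictly increasing. Composed with $R$, it gives a strictly increasing bijection $b\mapsto\beta_H$ from $(H,\bar b)$ onto $(\beta_H^\star,0)$, with $\beta_H^\star$ as in Theorem~\ref{thm:corner}.

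For the Jacobian, eliminate $a$ through the first equation, which is legitimate since $\partial_aA\ne0$. Then $\det J$ equals, up to the non-zero factors $A'(a)$ and $D_H(b)$, the derivative $R'(b)$, and so it never vanishes. Finally, $R'(H^+)>0$ means that $db/d\beta_H$ is finite and positive at $\beta_H^\star$. The branch therefore meets $\{b=H\}$ transversally, and it glues continuously to the regime-C solution $a(H)$ of \eqref{eq:Ceq}.

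\emph{Main obstacle.} The delicate point is identifying $\det J$ in \eqref{eq:detJ} with a non-zero multiple of $R'(b)$. The plan there is a direct block elimination: write the two component equations as $F_1(a,b)=A(a)-G_H(b)$ and $F_2(a,b)=B(a)-\pi D_H(b)$. Then
\[
\det J=A'(a)\,D_H(b)\,R'(b).
\]
If that bookkeeping in the $(A,B)$ coordinates turns out awkward relative to the column form of \eqref{eq:master}, I would fall back on observing that the change of coordinates from the coefficient pair to the pair (value at $H$, scaled slope at $H$) is invertible. It is given by the nonsingular matrix with rows $e_H$ and $e_H'$. So the two Jacobians differ only by a non-zero factor.
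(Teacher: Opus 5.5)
Your route is the paper's. You reduce to the scalar $R(b)=B(a(b))/D_H(b)$, get the endpoint values from $G_H(H)=D_H(H)=H$ and $\Phi(\bar a)=\vk(\bar b)$, and get strict monotonicity from $B\circ a$ rising while $D_H$ falls. For the Jacobian you change basis by the matrix with rows $e_H,e_H'$, which gives $\det J\propto A'(a)\,D_H(b)\,R'(b)$ exactly as in \eqref{eq:detJpi}. Parameterising by $b$ through the implicit function theorem, rather than by the common level $y=A(a)=G_H(b)$, is the same chain rule. Your Step~4 treatment of $b>\bar b$ is a small gain: Step~3 extends verbatim wherever $D_H>0$, and where $D_H\le0$ no positive $\pi$ exists. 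This makes explicit the ``largest upper boundary'' claim, which the appendix leaves implicit.

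Two steps need repair.

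\emph{Transversality.} Step~3 only gives $R'>0$ on the open interval $(H,\bar b)$. That is compatible with $R'(b)\to0$ as $b\downarrow H$, so your assertion $R'(H^+)>0$ is unproved. Your numerator bullet genuinely fails at the endpoint: $t=1$ gives $G_H'(H)=0$, hence $a'(H)=0$ and $\frac{d}{db}B(a(b))=0$ at $b=H$. Strict positivity must therefore come from the denominator alone. Using $D_H'(H)=-c(\alpha_2-\alpha_1)$ and $B(a(H))=HV'(H^-)$ from your Step~2,
$\lim_{b\downarrow H}R'(b)=-B(a(H))\,D_H'(H)/D_H(H)^2=c(\alpha_2-\alpha_1)V'(H^-)/H>0$.
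This is Step~6 of the paper's proof, and it yields \eqref{eq:transversal}. Finiteness of the limit is automatic, because $a(b)$ is smooth through $b=H$ ($A'(a(H))\neq0$).

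\emph{Positivity of $D_H$.} Your ``direct'' argument is false. The piece with coefficient pair $\vk(b)$ has slope $+1$ at $b$. The mirror argument of Lemma~\ref{lem:onezero} only propagates a \emph{negative} slope leftward. In fact that piece's slope tends to $-\infty$ as $x\downarrow0$, and $D_H(b)<0$ for $b$ large. The correct argument uses your own bullet: $D_H$ is strictly decreasing in $b$ with $D_H(\bar b)=B(\bar a)>0$, so $D_H>0$ on $[H,\bar b]$. This also avoids invoking Lemma~\ref{lem:incr} along a branch whose $\beta_H$ is itself defined through $D_H$.
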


\begin{proof}
See Appendix~\ref{app:proofs}.
\end{proof}

\begin{remark}[what transversality rules out]
\label{rem:nofold}
Write the regime-B branch as the curve
\[
b\ \longmapsto\ \bigl(a(b),\,\beta_H(b)\bigr),\qquad b\in(H,\bar b).
\]
Regime~B is valid where $b>H$, so $\{b=H\}$ bounds its region of validity. Two things could in principle go wrong there, and
Proposition~\ref{prop:transition} excludes both.

The first is a \emph{fold}. If $\beta_H(b)$ had an interior turning point
$b^\ast$, with $d\beta_H/db=0$ there, then $\beta_H$ would fail to be monotone
along the branch: values of $\beta_H$ just inside the turn would admit
\emph{two} regime-B solutions and values just outside would admit
\emph{none}. As $\beta_H$ swept past the fold the optimal $b$ would have to
jump from one branch to the other, and the model would display hysteresis in a
parameter --- the same resistance strength supporting two different optimal
rules, with the selection depending on where one came from. Strict
monotonicity of $\pi(\beta_H)$ in $b$ is exactly the statement that no such
turning point exists.

The second is a \emph{tangential} arrival: the branch could meet the line
$\{b=H\}$ tangentially in the $(b,\beta_H)$-plane, that is with
$db/d\beta_H\to0$, equivalently $d\beta_H/db\to\infty$, as $b\downarrow H$.
Then $b$ would leave $H$ with zero speed as $\beta_H$ rises past
$\beta_H^\star$, so to first order it would stay pinned at $H$ as in regime~C:
the hand-over from C to B would be degenerate, in the sense that the two
regimes could not be told apart at first order near $\beta_H^\star$, and
recovering $\beta_H$ from $b$ would be infinitely ill-conditioned there.
\emph{Transversal} crossing means the branch meets $\{b=H\}$ at a non-zero
angle: by \eqref{eq:transversal}, established in Step~6 of the proof in
Appendix~\ref{app:proofs}, $d\beta_H/db$ has a finite, positive limit
as $b\downarrow H$ (finite rules out tangency; positive is the first-order
form of the no-fold property above). With $\beta_L=0.90$, $L=1$ the limit
\eqref{eq:transversal} equals $0.752$, $0.668$ and $0.522$ at $H=1.30$,
$1.10$ and $1.05$ respectively, matching finite differences along the
branch.

The consequence is the one stated: $a$ and $b$ are continuous in $\beta_H$
across $\beta_H^\ast$. Below it, $b$ pins at $H$ and $a$ continues from the
value it had on the branch. This is also what makes $b\mapsto\beta_H$
invertible, which Theorem~\ref{thm:ident} turns into the identification
statement.
\end{remark}

\section{Verification}
\label{sec:verify}

The hysteresis models require the reader to check that the candidate dominates
the pay-off, and there the condition is not automatic. Here it is a theorem, and the reason is structural: in regimes A and B
there is smooth fit at \emph{both} boundaries, and in regime C the missing
smooth fit at $b=H$ is replaced by the slope bound $V'(H^-)<1$ of
Proposition~\ref{prop:transition}.

\begin{proposition}[Dominance]
\label{prop:dominance}
Let $W:=V-g$ on $(a,b)$, with $V$ the candidate \eqref{eq:candidate}, $g(x)=x$
and $L\in(a,b)$. Under Assumption~\ref{ass:drift}, $W>0$ on $(a,b)$.
\end{proposition}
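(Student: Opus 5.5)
The plan is a one-sided maximum-principle argument on the skew-free pieces of $(a,b)$, run from each end of the band toward $L$. The kinks at the two skew points are handled by the signs of the permeabilities and by Lemma~\ref{lem:incr}. Put $W=V-g$. On any skew-free subinterval $(\Lop-r)V=0$ and $(\Lop-r)g=(\mu-r)x$, so
\[
\tfrac12\sigma^2x^2W''(x)=rW(x)-\mu xW'(x)+(r-\mu)x .
\]
The key observation is this: at a critical point $x_0$ of $W$ in a skew-free interval with $W(x_0)\ge0$, Assumption~\ref{ass:drift} gives $W''(x_0)>0$. Hence $W$ has no local maximum at a nonnegative level away from $\{L,H\}$.

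\emph{Lower piece $(a,L]$.} Smooth fit gives $W(a)=W'(a)=0$, and the display then gives $W''(a^+)=2(r-\mu)/(\sigma^2a)>0$, so $W>0$ and $W'>0$ just above $a$. Let $x_1\in(a,L)$ be the first zero of $W'$. On $(a,x_1)$ we have $W'>0$, so $W(x_1)>0$. Approaching $x_1$ from the left with $W'\downarrow0$ forces $W''(x_1)\le0$, contradicting the observation. Hence $W'>0$ on $(a,L)$, and $W>0$ on $(a,L]$, with $W(L)>0$ by continuity.

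\emph{Upper piece $[L,b)$.} In regimes A and B, smooth fit at $b$ gives $W(b)=W'(b^-)=0$ and $W''(b^-)>0$, so $W>0$ and $W'<0$ just below $b$. Moving leftward, the mirror-image argument shows that the first zero of $W'$ would be a local maximum at a positive level, which is impossible. The one new ingredient is the crossing of $H$ in regime B. The transmission condition gives $V'(H^+)=V'(H^-)/\pi(\beta_H)$, and Lemma~\ref{lem:incr} gives $V'>0$. Since $\pi(\beta_H)<1$, it follows that $V'(H^-)<V'(H^+)$, that is $W'(H^-)<W'(H^+)$. So if $W'(H^+)<0$ then also $W'(H^-)<0$: the kink at $H$ is convex and cannot create a maximum, and the leftward argument continues on $(L,H)$. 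In regime C we have $b=H$, $W(H)=0$ and $W'(H^-)=V'(H^-)-1<0$ by Proposition~\ref{prop:transition}, and the same leftward argument starts directly at $H$. In every regime, therefore, $W'<0$ on $(L,b)$ and $W>W(b)=0$ there.

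The two pieces overlap at $L$, so $W>0$ on $(a,b)$. The main obstacle, and the reason the argument must run from both ends rather than sweep left to right, is the support level. There $\pi(\beta_L)>1$ makes $V'$ drop, the kink of $W$ is concave, and $W$ can have a genuine maximum at $L$. The two-sided sweep avoids ever needing a sign for $W'$ across $L$. The remaining care is in the one-sided derivative bookkeeping at $H$, which rests on $V'>0$ from Lemma~\ref{lem:incr} and, in regime C, on the slope bound $V'(H^-)<1$.
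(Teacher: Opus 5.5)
Your proposal is correct and is essentially the paper's proof: the same ODE for $W$, used to rule out non-negative interior maxima; propagation rightward from $a$ to $L$ and leftward from $b$ to $L$; the kink at $H$ in regime~B controlled by $\pi(\beta_H)<1$; and regime~C started from $W(H)=0$, $W'(H^-)<0$ via Proposition~\ref{prop:transition}. Your appeal to Lemma~\ref{lem:incr} for $V'>0$ at $H$ is harmless but not strictly needed, because $\pi(\beta_H)\in(0,1)$ already sends any $V'(H^+)<1$ to $V'(H^-)=\pi(\beta_H)V'(H^+)<1$.
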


\begin{proof}
On any subinterval of $(a,b)$ free of skew points, $(\Lop-r)V=0$ and
$(\Lop-r)g=(\mu-r)x$, so
\begin{equation}
\tfrac12\sigma^2x^2W''+\mu xW'-rW=(r-\mu)x>0.
\label{eq:Wode}
\end{equation}

\emph{Boundaries.} Smooth fit gives $W(a)=W'(a)=0$, and \eqref{eq:Wode} at
$x=a$ gives $\tfrac12\sigma^2a^2W''(a)=(r-\mu)a>0$; hence $W''(a)>0$ and
$W>0$ immediately above $a$. In regimes A and B, identically, $W(b)=W'(b)=0$,
$W''(b)>0$ and $W>0$ immediately below $b$. In regime C, $b=H$ is not a free
boundary and there is no smooth fit there; instead value matching gives
$W(H)=0$, and Proposition~\ref{prop:transition} (whose hypothesis
$\bar b>H$ is that of Theorem~\ref{thm:corner}) gives $V'(H^-)<1$, that is
$W'(H^-)<0$. So again $W>0$ and $W'<0$ immediately below $b$.

\emph{No interior non-negative critical point.} Let $x_1$ be an interior
critical point of $W$ in a skew-free subinterval with $W'(x_1)=0$ and
$W''(x_1)\le0$. Then \eqref{eq:Wode} gives
\[
rW(x_1)=\tfrac12\sigma^2x_1^2W''(x_1)-(r-\mu)x_1\le-(r-\mu)x_1<0,
\]
so $W(x_1)<0$. Hence $W$ admits no interior local maximum at a point where it
is non-negative.

\emph{Propagation on $(a,L)$.} We have $W>0$ and $W'>0$ just above $a$.
Suppose $W'$ first vanishes at $x_2\in(a,L)$; then $W(x_2)>0$, and since $W'$
is decreasing into $x_2$ we have $W''(x_2)\le0$, giving a non-negative
interior local maximum, a contradiction. So $W'>0$ throughout $(a,L)$ and in
particular $W(L)>0$.

\emph{Propagation on $(L,b)$.} Just below $b$ we have $W>0$ and $W'<0$ by
the boundary step: $W$ decreases to $0$ as $x\uparrow b$. Now
run the argument leftward from $b$. Suppose $W'$ first vanishes, moving left,
at $x_2\in(L,b)$, so $W'<0$ on $(x_2,b)$. Then $W$ is strictly decreasing on $[x_2,b]$ with $W(b)=0$, so $W(x_2)>0$;
and since
$W'$ decreases from $0$ as $x$ increases past $x_2$, $W''(x_2)\le0$ --- again
a non-negative interior local maximum, a contradiction. So $W'<0$ throughout
$(L,b)$, and $W>0$ on $[L,b)$. If $H\in(a,b)$ (regime B) the argument is applied separately on
$(H,b)$ and $(L,H)$; the two are joined by
$V'(H^-)=\pi(\beta_H)V'(H^+)$ with $\pi(\beta_H)\in(0,1)$, which preserves the
sign of $V'$ and keeps $V'\in(0,1)$, so $W'<0$ continues across the kink and
$W(H)>0$ is inherited.

The change of sign of $W'$ at $L$ is carried by a jump in $W'$ itself, not by
$W''$: by \eqref{eq:transmission}, $W'(L^+)+1=\bigl(W'(L^-)+1\bigr)/\pi(\beta_L)$
with $\pi(\beta_L)>1$, and the two propagation steps show
$W'(L^-)\ge0\ge W'(L^+)$. So $W$ has a concave corner at $L$, where it attains
its maximum over $(a,b)$; $W''$ is undefined there, while near $a$ and $b$ it is
positive.
\end{proof}

\begin{theorem}[Verification]
\label{thm:verification}
Let $(a,b)$ solve the relevant instance of \eqref{eq:master} or
\eqref{eq:Ceq}, and let $V$ be the candidate \eqref{eq:candidate}. Then
$V=v$ and $\tau^\ast=\inf\{t:S_t\notin(a,b)\}$ is optimal.
\end{theorem}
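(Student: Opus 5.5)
The plan is to prove the two inequalities $V\le v$ and $V\ge v$ separately, using the fact that $v$ is the least $r$-excessive majorant of $g$ (the characterisation of \citet{DayanikKaratzas03} recorded after Lemma~\ref{lem:transmission}). For $V\le v$ I will show directly that $V(x)=\E_x[e^{-r\tau^\ast}S_{\tau^\ast}]$. For $V\ge v$ I will show that $V$ is a non-negative $r$-excessive majorant of $g$. Together these give $V=v$ and the optimality of $\tau^\ast$.

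\emph{Attainment.} For $x\notin(a,b)$ we have $\tau^\ast=0$ and $V(x)=x$. For $x\in(a,b)$, $V$ is $r$-harmonic on $(a,b)$ in the sense of the Terminology paragraph: transmission holds at every interior skew point by construction through \eqref{eq:master} or \eqref{eq:Ceq}. Hence $e^{-rt\wedge\tau^\ast}V(S_{t\wedge\tau^\ast})$ is a local martingale. By Lemma~\ref{lem:incr} it is bounded by $\max_{[a,b]}V<\infty$, so it is a true martingale. The band is bounded and $S$ is regular, so $\tau^\ast<\infty$ a.s. Letting $t\to\infty$ by bounded convergence, and using continuity of $V$ at $a$ and $b$ (value matching), gives $V(x)=\E_x[e^{-r\tau^\ast}S_{\tau^\ast}]\le v(x)$.

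\emph{Majorant and excessivity.} Non-negativity and $V\ge g$ follow from Proposition~\ref{prop:dominance} on $(a,b)$ and from $V=g$ elsewhere. For $r$-superharmonicity on $(0,\infty)$ I will check the local conditions piece by piece.
\begin{itemize}
\item On $(0,a)$ and on $(b,\infty)\setminus\{H\}$, $(\Lop-r)V=(\mu-r)x<0$ by Assumption~\ref{ass:drift}. On $(a,b)$ off the skew points it is $0$.
\item At the skew points inside $(a,b)$, $\Delta_z=0$.
\item At $a$, and at $b$ in regimes A and B, smooth fit makes $V$ $C^1$, so Itô--Tanaka produces no local-time term. The points themselves are Lebesgue-null, so the occupation-time formula kills their contribution.
\item In regime C the point $b=H$ carries a kink, and $\Delta_H=(1+\beta_H)-(1-\beta_H)V'(H^-)\le0$ exactly by criterion \eqref{eq:criterion} of Theorem~\ref{thm:corner}.
\item In regime A with $b<H$, the level $H$ lies in the stopping region where $V=g$, and there $\Delta_H=(1+\beta_H)-(1-\beta_H)=2\beta_H<0$ because $\beta_H<0$.
\item In regime B, $H\in(a,b)$ and the transmission condition already holds.
\end{itemize}
Then $e^{-rt}V(S_t)$ is a non-negative local supermartingale, hence a supermartingale. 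So $V$ is an $r$-excessive majorant of $g$ and $v\le V$.

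\textbf{Main obstacle.} I expect the delicate step to be applying the Itô--Tanaka--Meyer formula globally to $V$. Lemma~\ref{lem:transmission} assumes kinks only at $L$ and $H$, but $V$ has further points of non-smoothness: at $a$ and $b$ it is $C^1$ but only piecewise $C^2$, and in regime C the points $b$ and $H$ coincide. I would handle this by rerunning the argument in the proof of Lemma~\ref{lem:transmission}: $V$ is a difference of convex functions near each of these points, and Itô--Tanaka--Meyer then gives the local-time coefficient $\tfrac12(V'(z^+)-V'(z^-))$ plus the push term $\tfrac12\beta_z(\cdots)$ only at $z\in\{L,H\}$. Where $V$ is $C^1$ and the point is not a skew point, that coefficient vanishes.

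A secondary point is the boundary case of regime A with $b=H$ exactly. Then smooth fit at $b=H$ gives $V'(H^\pm)=1$ and $\Delta_H=2\beta_H<0$, so this case is covered by the same computation.
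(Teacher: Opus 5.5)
Your proposal is correct and follows the paper's proof essentially step for step. Both apply It\^o--Tanaka to $V$ as a difference of convex functions, sign the $dt$-term by Assumption~\ref{ass:drift}, and check the local-time defects regime by regime: zero at interior skew points, $2\beta_H<0$ in regime~A, and $\le0$ by \eqref{eq:criterion} in regime~C. Both then take dominance from Proposition~\ref{prop:dominance} and obtain equality along $\tau^\ast$. Your separate treatment of the regime-A case $b=H$, where smooth fit gives $V'(H^-)=1$, is a small piece of extra care that the paper's proof passes over silently.
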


\begin{proof}
The candidate $V$ is $C^2$ on each closed piece between consecutive points of
$\{a,b,L,H\}$ --- it is $g$ outside $(a,b)$ and a combination of
$x^{\alpha_1},x^{\alpha_2}$ on each skew-free subinterval inside --- so it is
a difference of convex functions near each of these points exactly as in the
proof of Lemma~\ref{lem:transmission}, and has bounded derivative on compacts.
It\^o--Tanaka and the computation \eqref{eq:localtimecoef} therefore give
\[
d\!\left(e^{-rt}V(S_t)\right)=e^{-rt}(\Lop-r)V(S_t)\,dt
+\tfrac12 e^{-rt}\!\!\sum_{z\in\{L,H\}}\!\!\Delta_z\,d\ell^z_t+dN_t
\]
for a local martingale $N$, with
$\Delta_z=(1+\beta_z)V'(z^+)-(1-\beta_z)V'(z^-)$ as there. We check that both non-martingale terms are
non-positive.

The first term is a $dt$-integral, so it only sees $(\Lop-r)V$ off a
Lebesgue-null set of times: $S$ spends zero time at any single level (its
occupation measure is absolutely continuous), so the finitely many points
$a,b,L,H$, where $V''$ has only one-sided values, do not contribute. On
$(a,b)\setminus\{L,H\}$ the integrand is identically zero, because $V$ solves
$(\Lop-r)V=0$ there by construction. On the stopping region
$V=g$, so $(\Lop-r)V=(\mu-r)x<0$ by Assumption~\ref{ass:drift}. Hence the
first term is non-positive, and strictly negative only while $S$ is outside
$[a,b]$.

For the local-time terms: $L\in(a,b)$ always, so $\Delta_L=0$ by
\eqref{eq:transmission}. For $H$: in regime B, $H\in(a,b)$ and $\Delta_H=0$;
in regime A, $H$ lies in the stopping region where $V=g$, so
$\Delta_H=(1+\beta_H)-(1-\beta_H)=2\beta_H<0$ by \eqref{eq:betasigns}; in
regime C, $\Delta_H\le0$ is exactly the criterion \eqref{eq:criterion}
defining the regime.

Hence $e^{-rt}V(S_t)$ is a supermartingale. Optional sampling and
Proposition~\ref{prop:dominance} give $V\ge v$, and evaluating along
$\tau^\ast$, where all the inequalities are equalities, gives $V=v$.
\end{proof}

\begin{remark}
\label{rem:noresidual}
Theorem~\ref{thm:verification} leaves nothing for the reader to check. This
should be contrasted with problems in which an exit level is fixed by the
model rather than chosen, where the analogue of
Proposition~\ref{prop:dominance} is a hypothesis that has to be checked case
by case and need not hold. The difference lies at the boundaries. At a free boundary, one chosen
optimally, smooth fit supplies $W=W'=0$, which \eqref{eq:Wode} converts into
local positivity; at an exit level fixed by the model rather than chosen, only
value matching $W=0$ is available, and the sign of $W'$ there is not
controlled. In regimes A and B both $a$ and $b$ are free. In regime C the upper
boundary is fixed at $H$, and it is the slope bound $V'(H^-)<1$ that takes the
place of smooth fit.
\end{remark}

\begin{remark}[the sign convention is load-bearing]
\label{rem:signload}
The proof of Theorem~\ref{thm:verification} uses $\beta_H<0$ exactly once, in
regime A, where $\Delta_H=2\beta_H$. Were $\beta_H>0$, regime A would be
inadmissible: a positive skew sitting in the stopping region always breaks the
candidate, since waiting near such a point is valuable. The correct geometry
would then be \emph{two} bands, one around each skew point, with a
three-component stopping region; this is carried out in
Section~\ref{sec:samesign}. The assumption that resistance pushes downwards is
therefore what collapses the problem to a single band.
\end{remark}

\section{Identification}
\label{sec:ident}

\begin{theorem}
\label{thm:ident}
Fix $r,\mu,\sigma,L,H,\beta_L$ and let $\beta_H^\star$ be as in
Theorem~\ref{thm:corner}. Then the optimal rule --- equivalently the pair
$(a,b)$, equivalently the value function --- determines $\beta_H$ if and only
if
\begin{equation}
\beta_H\in(\beta_H^\star,\,0)\quad\text{and the regime-A band satisfies }b>H.
\label{eq:identset}
\end{equation}
Outside this set $\beta_H$ is not identified: in regime A the rule is exactly
independent of $\beta_H$, and in regime C every $\beta_H\le\beta_H^\star$
produces the same rule $b=H$.
\end{theorem}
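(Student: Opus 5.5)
The plan is to partition the parameter line $\beta_H\in(-1,0)$ according to the regime selector, with all other parameters fixed, and to read identification off the structure already established in Section~\ref{sec:fb}. The $L$-band $(\bar a,\bar b)$ does not involve $\beta_H$. So the dichotomy \eqref{eq:regimeA} is decided once and for all, independently of $\beta_H$: either $\bar b\le H$ for every $\beta_H$, or $\bar b>H$ for every $\beta_H$. The phrase ``the regime-A band satisfies $b>H$'' in \eqref{eq:identset} is exactly $\bar b>H$. We also note, via Theorem~\ref{thm:verification}, that the value function, the optimal rule and the pair $(a,b)$ determine one another. The value function gives the stopping region as $\{V=g\}$. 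Conversely, $(a,b)$ gives $V$ by the construction \eqref{eq:candidate}, once one knows whether $H$ lies in the band. That is itself read off from $b$: if $b\le H$ only the $L$-crossing enters, and if $b>H$ both crossings do. The crossings use $\beta_H$ only through $T_{\beta_H}$, which is pinned down by $b$ in the way described next. So it suffices to study the map $\beta_H\mapsto(a,b)$.

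\emph{Case $\bar b\le H$.} By Theorem~\ref{thm:Airrelevant} the optimal pair is $(\bar a,\bar b)$ for every $\beta_H\in(-1,0)$, and the value function is independent of $\beta_H$. The map is constant, so $\beta_H$ is not identified. This is the first exclusion.

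\emph{Case $\bar b>H$.} Here I invoke Theorem~\ref{thm:corner} and Proposition~\ref{prop:transition} together.
\begin{itemize}
\item For $\beta_H\le\beta_H^\star$ the regime is C. Then $b=H$ and $a$ is the unique root of \eqref{eq:Ceq}, which involves $L$, $\beta_L$ and $H$ but not $\beta_H$. So every such $\beta_H$ produces the same pair $(a,H)$, and identification fails on $(-1,\beta_H^\star]$. If $\beta_H^\star\le-1$ this set is empty and nothing needs to be said.
\item For $\beta_H\in(\beta_H^\star,0)$ the regime is B. Proposition~\ref{prop:transition} says $b\mapsto\beta_H$ is a strictly increasing bijection $(H,\bar b)\to(\beta_H^\star,0)$, so its inverse $\beta_H\mapsto b\in(H,\bar b)$ is injective. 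The observed $b$ then recovers $\beta_H$ explicitly through \eqref{eq:scalar}: $\pi(\beta_H)=B(a)/D_H(b)$ with the observed $a$, inverted via $\beta=(\pi-1)/(\pi+1)$.
\end{itemize}

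To finish the ``if'' direction we also need that regime-B observations cannot be confused with regime-C ones. This holds because $b>H$ in regime B and $b=H$ in regime C, so the images are disjoint and the map is injective on the whole identified set. The ``only if'' direction is the two non-identification statements above.

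The main obstacle is making sure the regime-B branch really exhausts $(\beta_H^\star,0)$ and that, for each such $\beta_H$, the optimal rule is that branch's pair and not some other solution of \eqref{eq:master}. Proposition~\ref{prop:selection} warns that solutions of the system need not solve the stopping problem. I would handle this by uniqueness of the value function: Theorem~\ref{thm:verification} identifies $v$ with the candidate built from the branch point, and $v$ is unique. So whichever system solution is optimal, the stopping region $\{v=g\}$, and hence $(a,b)$, is unique. This closes the argument.
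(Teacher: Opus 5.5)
Your proposal is correct and follows the paper's own proof. Regime A is handled by Theorem~\ref{thm:Airrelevant}, regime C by the $\beta_H$-free equation \eqref{eq:Ceq} of Theorem~\ref{thm:corner}, and regime B by the strict monotonicity of $b\mapsto\beta_H$ in Proposition~\ref{prop:transition}; your extra checks are sound elaborations of steps the paper leaves implicit: that regime-B and regime-C rules cannot coincide ($b>H$ versus $b=H$), and that uniqueness of $v$ fixes the optimal pair despite Proposition~\ref{prop:selection}.
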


\begin{proof}
Regime A: Theorem~\ref{thm:Airrelevant}. Regime C:
Theorem~\ref{thm:corner} determines $b=H$ irrespective of the value of
$\beta_H$ below $\beta_H^\star$. Regime B: the map
$\beta_H\mapsto b$ is strictly monotone along the branch by
Proposition~\ref{prop:transition}, hence injective.
\end{proof}

The content of Theorem~\ref{thm:ident} is a limit on what liquidation data can
reveal. In regime C one can recover the fact that the resistance level binds,
and its location, since $b=H$; but not its strength beyond the one-sided
bound $\beta_H\le\beta_H^\star$. A mildly sticky resistance level and an impenetrable one
generate identical behaviour, and the boundary between the identified and
unidentified cases is explicit.

Support behaves differently. In regime A the resistance parameters drop out
entirely (Theorem~\ref{thm:Airrelevant}), and the band then determines the
support parameters uniquely and in closed form.

\begin{theorem}[Inversion]
\label{thm:inversion}
Fix $r>\mu$ and $\sigma>0$ and let $0<a<b$. Define
\begin{equation}
L:=\left[\frac{\kappa_1\bigl(b^{1-\alpha_1}-a^{1-\alpha_1}\bigr)}
              {\kappa_2\bigl(a^{1-\alpha_2}-b^{1-\alpha_2}\bigr)}
   \right]^{1/(\alpha_2-\alpha_1)},
\qquad
\beta_L:=\frac{D(a)-D(b)}{D(a)+D(b)},
\label{eq:inversion}
\end{equation}
with $D$ as in Theorem~\ref{thm:Airrelevant}. Both quantities are well
defined, since $1-\alpha_1>0>1-\alpha_2$ makes both brackets strictly
positive.
\begin{enumerate}
\item[(i)] If $(a,b)$ is the regime-A band of the model \eqref{eq:SDE} with
some support parameters $(L',\beta_L')$, then $(L',\beta_L')=(L,\beta_L)$.
\item[(ii)] Conversely, if $\beta_L\in(0,1)$ and $a<L<b$, then the model
\eqref{eq:SDE} with support parameters $(L,\beta_L)$, any resistance level
$H\ge b$ and any $\beta_H\in(-1,0)$ has continuation region exactly $(a,b)$,
and its optimal rule is the first exit of $(a,b)$.
\end{enumerate}
\end{theorem}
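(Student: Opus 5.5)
The plan is to read the regime-A system \eqref{eq:Asystem} backwards, treating $(a,b)$ as data and $(L,\beta_L)$ as unknowns. The two equations separate cleanly: the first, $G(a)=G(b)$, involves $L$ but not $\beta_L$, and the second then gives $\beta_L$ from $L$.

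For (i), suppose $(a,b)$ is the regime-A band for some $(L',\beta_L')$. Then Theorem~\ref{thm:Airrelevant} gives $G(a)=G(b)$ with $L'$ in place of $L$, and $a<L'<b$. I will write $G(a)-G(b)=0$ as
\[
\kappa_1L'^{\alpha_1}\bigl(a^{1-\alpha_1}-b^{1-\alpha_1}\bigr)+\kappa_2L'^{\alpha_2}\bigl(a^{1-\alpha_2}-b^{1-\alpha_2}\bigr)=0 .
\]
This rearranges to $L'^{\alpha_2-\alpha_1}=\kappa_1(b^{1-\alpha_1}-a^{1-\alpha_1})/\bigl(\kappa_2(a^{1-\alpha_2}-b^{1-\alpha_2})\bigr)$. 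Both brackets are positive because $1-\alpha_1>0>1-\alpha_2$, and $\alpha_2-\alpha_1>0$. The map $L'\mapsto L'^{\alpha_2-\alpha_1}$ is therefore injective on $(0,\infty)$, which forces $L'=L$.

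With $L$ fixed, the second equation $(1+\beta_L')D(b)=(1-\beta_L')D(a)$ is affine in $\beta_L'$. Its solution is $\beta_L'=(D(a)-D(b))/(D(a)+D(b))$, provided $D(a)+D(b)\neq0$. I expect this nondegeneracy to be the only delicate point. To settle it I will note that $D(m)/L$ is the one-sided derivative at $L$ of the smooth-fit piece anchored at $m$, namely $U_m'(L^\mp)$. By Lemma~\ref{lem:incr}, in the actual model $V'(L^\pm)>0$, so $D(a)>0$ and $D(b)>0$. Hence the denominator is positive, and $\beta_L'=\beta_L$.

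For (ii), I will run the same algebra forwards. The definition of $L$ gives $G(a)=G(b)$, and the definition of $\beta_L$ gives the transmission equation. So $(a,b)$ solves \eqref{eq:Asystem} with $a<L<b$, which by hypothesis is the ordering \eqref{eq:aLb}. Because $b\le H$, the band lies below resistance, and $T_{\beta_H}$ is replaced by $I$ in \eqref{eq:master, confirming that this is regime A}. Theorem~\ref{thm:verification} then applies, using dominance from Proposition~\ref{prop:dominance} and $\Delta_H=2\beta_H<0$. It gives $V=v$ and shows that the first exit from $(a,b)$ is optimal for every $H\ge b$ and every $\beta_H\in(-1,0)$.

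The case $H=b$ needs a check, since there $H$ sits on the boundary. At $H=b$ the candidate equals $g$ on $[b,\infty)$, with $V'(H^-)=1$ by smooth fit. So $\Delta_H=(1+\beta_H)-(1-\beta_H)=2\beta_H<0$ still holds, and the verification goes through unchanged.

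Finally, to assert that the continuation region is \emph{exactly} $(a,b)$, I need $v>g$ on $(a,b)$ and $v=g$ outside it. The first follows from Proposition~\ref{prop:dominance} together with $V=v$; the second holds by construction of $V$.
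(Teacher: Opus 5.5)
Your proposal is correct and follows the paper's proof: you solve $G(a)=G(b)$ for $L^{\alpha_2-\alpha_1}$ and the affine transmission equation for $\beta_L$, and in (ii) you run the algebra backwards, with $H\ge b$ placing you in regime~A so that Theorem~\ref{thm:verification} gives optimality. Your extra checks fill in details the paper leaves implicit: $D(a)+D(b)>0$ via Lemma~\ref{lem:incr} (the paper's well-definedness remark concerns only the brackets in $L$, not this denominator), the boundary case $H=b$, and strict dominance for the word ``exactly''.
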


\begin{proof}
A regime-A band satisfies \eqref{eq:Asystem}. Its first equation is
$G(a)=G(b)$, i.e.
$\kappa_1L^{\alpha_1}(a^{1-\alpha_1}-b^{1-\alpha_1})
+\kappa_2L^{\alpha_2}(a^{1-\alpha_2}-b^{1-\alpha_2})=0$; dividing by
$L^{\alpha_1}$ leaves a single positive value of $L^{\alpha_2-\alpha_1}$, the
one displayed. With $L$ fixed, the second equation
$(1+\beta_L)D(b)=(1-\beta_L)D(a)$ is linear in $\beta_L$ with the stated
unique solution. This proves (i). For (ii), the same computation read
backwards shows that $(a,b)$ solves \eqref{eq:Asystem} for $(L,\beta_L)$.
Since $H\ge b$ the band does not reach $H$, so regime A applies and $\beta_H$
is irrelevant by Theorem~\ref{thm:Airrelevant}. Optimality is
Theorem~\ref{thm:verification}, whose hypotheses hold because smooth fit
obtains at both boundaries.
\end{proof}

Together, Theorems~\ref{thm:ident} and~\ref{thm:inversion} describe what
liquidation behaviour reveals about the two levels. Support, whose push is
what creates the continuation band in the first place, is recovered exactly
from a regime-A band. Resistance is recovered only when the band straddles it,
and when the band pins at $H$ only its location and the bound
$\beta_H\le\beta_H^\star$ are revealed.

\section{Two support levels: the same-sign case}
\label{sec:samesign}

Remark~\ref{rem:signload} observed that the assumption $\beta_H<0$ is used
exactly once in the verification, and that dropping it must change the
geometry. We now carry that case out. Throughout this section
\begin{equation}
\beta_L\in(0,1),\qquad \beta_H\in(0,1),
\label{eq:samesign}
\end{equation}
so that $H$ is a second support level rather than a resistance level. Both
$L$ and $H$ then satisfy $\beta g'>0$, so by the mechanism of
Corollary~\ref{cor:noband} both lie in the continuation region, and the
question is whether they share one band or occupy two.

\subsection{Decoupling}

Write $(a_1,b_1)$ for the band produced by Theorem~\ref{thm:Airrelevant} with
skew point $L$ and parameter $\beta_L$, and $(a_2,b_2)$ for the band produced
by the same theorem with $H$ and $\beta_H$. Each solves a single-skew system
involving only its own level.

\begin{proposition}[Decoupling]
\label{prop:decouple}
Assume \eqref{eq:samesign} and $b_1\le a_2$. Then the continuation region is
$(a_1,b_1)\cup(a_2,b_2)$, the value function is the corresponding candidate,
and the stopping region
\[
(0,a_1]\ \cup\ [b_1,a_2]\ \cup\ [b_2,\infty)
\]
has three components. In particular $(a_1,b_1)$ does not depend on
$(H,\beta_H)$ and $(a_2,b_2)$ does not depend on $(L,\beta_L)$.
\end{proposition}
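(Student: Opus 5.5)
The plan is a direct verification: build the candidate and check the hypotheses of the argument of Theorem~\ref{thm:verification}. Set $V:=U^{(1)}$ on $(a_1,b_1)$, $V:=U^{(2)}$ on $(a_2,b_2)$, and $V=g$ elsewhere. Here $U^{(i)}$ is the $r$-harmonic function of the single-skew problem at the $i$-th level, anchored by smooth fit at both ends. By Theorem~\ref{thm:Airrelevant}, applied once with $(L,\beta_L)$ and once with $(H,\beta_H)$, each pair $(a_i,b_i)$ exists, is independent of the other level, and satisfies $a_1<L<b_1$ and $a_2<H<b_2$. The hypothesis $b_1\le a_2$ makes the two bands disjoint. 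Each skew point is therefore interior to exactly one band and absent from the other. On $(a_1,b_1)$ the function $U^{(1)}$ sees only the crossing $T_{\beta_L}(L)$, and on $(a_2,b_2)$ the function $U^{(2)}$ sees only $T_{\beta_H}(H)$. So $V$ is continuous, $C^1$ at $a_1,b_1,a_2,b_2$, and satisfies \eqref{eq:transmission} at both $L$ and $H$. This already gives the last sentence of the statement, once optimality is established.

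Next I would establish dominance, $V\ge g$. On each band separately, this is Proposition~\ref{prop:dominance} applied to a single-skew problem with its skew point inside the band and smooth fit at both ends. That proposition used only \eqref{eq:Wode}, smooth fit at both boundaries, and the transmission jump with $\pi>1$ at the interior skew point; all three hold here verbatim for each band, with $H$ playing the role of $L$ in the second band. So $W>0$ on $(a_1,b_1)\cup(a_2,b_2)$, and $W=0$ on the three stopping components.

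Then I would establish the supermartingale property, running the It\^o--Tanaka decomposition exactly as in the proof of Theorem~\ref{thm:verification}. The $dt$-term vanishes inside the bands and equals $(\mu-r)x<0$ on the stopping region, by Assumption~\ref{ass:drift}. At the free boundaries there is no local-time term, since $V$ is $C^1$ there. At $L$ and $H$ we have $\Delta_L=\Delta_H=0$, because each lies in the interior of its band; this is exactly where the single-band geometry failed in Remark~\ref{rem:signload}, and the two-band candidate repairs it. Hence $e^{-rt}V(S_t)$ is a non-negative local supermartingale, and therefore a supermartingale. Optional sampling then gives $V\ge v$, while $\tau^\ast=\inf\{t:S_t\notin(a_1,b_1)\cup(a_2,b_2)\}$ attains equality, giving $V=v$.

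The main obstacle is the boundary case $b_1=a_2$, where the middle stopping component $[b_1,a_2]$ degenerates to a single point. That point $b_1=a_2$ must still be a stopping point: $V$ meets $g$ there with $V'=1$ on both sides, so there is no kink and no local time, and $W=0$ there. The remaining task is to argue that nothing in the verification uses a positive-length gap. I expect this to hold, since the decomposition only needs $C^1$ at that point and no skew sits there, because $b_1>L$ and $a_2<H$. Finally, I would note that uniqueness of the continuation region follows from $V=v$: the set $\{v>g\}$ is exactly where $W>0$.
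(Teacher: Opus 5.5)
Your proposal is correct and follows essentially the paper's route. The paper also verifies the two-band candidate directly: a majorant via Proposition~\ref{prop:dominance} on each band, vanishing defects at $L$ and $H$, drift $(\mu-r)x<0$ off the bands, then optional sampling in both directions. The only cosmetic difference is that it writes the candidate as $\widetilde V=V_L+V_H-g$, so that $\Delta_L$ and $\Delta_H$ vanish by linearity. The touching case $b_1=a_2$ that you flag needs nothing extra, as you anticipate, since both one-sided slopes there equal $1$.
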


Here $(a_1,b_1)$ is the $L$-band $(\bar a,\bar b)$ of
Section~\ref{sec:fb}, and $(a_2,b_2)$ is its counterpart at $H$, obtained from
\eqref{eq:Asystem} with $(L,\beta_L)$ replaced by $(H,\beta_H)$. In
Section~\ref{sec:fb} the $L$-band was an auxiliary object: it is computed with
the skew at $H$ switched off, and served only to decide which regime applies.
Under \eqref{eq:samesign} and $b_1\le a_2$ it is no longer auxiliary: the
$L$-band and its counterpart at $H$ are themselves the two components of the
continuation region.

\begin{proof}
Write $V_L$ for the candidate of Theorem~\ref{thm:Airrelevant} with skew point
$L$ --- equal to the band value on $(a_1,b_1)$ and to $g$ elsewhere --- and
$V_H$ likewise for $H$, and set
\begin{equation}
\widetilde V:=V_L+V_H-g.
\label{eq:additivecap}
\end{equation}
We show $\widetilde V$ is an $r$-excessive majorant of $g$ and coincides with the
candidate; optimality then follows from the characterisation of $v$ as the
least such majorant.

\emph{Majorant.} $V_L\ge g$ and $V_H\ge g$ by
Proposition~\ref{prop:dominance}, applied to the single-skew problem at $L$
and, with $(L,\beta_L)$ replaced by $(H,\beta_H)$, at $H$; so $\widetilde V-g=(V_L-g)+(V_H-g)\ge0$.

\emph{Local-time defects.} Since $b_1\le a_2$ we have $L<a_2$ and $H>b_1$, so
$V_H=g$ on a neighbourhood of $L$ and $V_L=g$ on a neighbourhood of $H$.
Writing $\Delta_z$ for the defect of Theorem~\ref{thm:verification} at $z$ and
using $\Delta_z(g)=2\beta_z$,
\[
\Delta_L(\widetilde V)=\underbrace{\Delta_L(V_L)}_{=0}
           +\underbrace{\Delta_L(V_H)}_{=\Delta_L(g)=2\beta_L}
           -\;2\beta_L=0,
\]
because $V_L$ satisfies the transmission condition at $L$, that point being
interior to its own band; and symmetrically $\Delta_H(\widetilde V)=0$. Both defects
vanish identically, whatever the signs of $\beta_L,\beta_H$.

\emph{Drift.} Off the skew points $(\Lop-r)V_L$ vanishes on $(a_1,b_1)$ and
equals $(\mu-r)x$ elsewhere, and similarly for $V_H$. Hence
\[
(\Lop-r)\widetilde V=\begin{cases}
0, & x\in(a_1,b_1)\cup(a_2,b_2),\\
(\mu-r)x<0, & x\notin[a_1,b_1]\cup[a_2,b_2],
\end{cases}
\]
using Assumption~\ref{ass:drift} in the second line. The two bands are
disjoint by hypothesis, so these are the only cases and $(\Lop-r)\widetilde V\le0$
throughout. Therefore $\widetilde V$ is $r$-excessive.

\emph{Identification.} On $(a_1,b_1)$ we have $V_H=g$ and so $\widetilde V=V_L$; on
$(a_2,b_2)$, $\widetilde V=V_H$; and on the complement $\widetilde V=g$. Thus $\widetilde V$ is exactly the
candidate of the statement.

\emph{Conclusion.} Here $v$ is the value function \eqref{eq:value}. First,
$v\le \widetilde V$: for any $x$ and $\tau\in\mathcal T$, since $\widetilde V\ge g$ and
$e^{-rt}\widetilde V(S_t)$ is a non-negative supermartingale, optional sampling gives
$\E_x[e^{-r\tau}g(S_\tau)]\le\E_x[e^{-r\tau}\widetilde V(S_\tau)]\le \widetilde V(x)$, with both
integrands read as $0$ on $\{\tau=\infty\}$; taking the supremum over $\tau$
gives $v(x)\le \widetilde V(x)$. (This is the fact that $v$ is the least $r$-excessive
majorant of $g$, used in one direction.) Second,
$v\ge \widetilde V$: let $\tau$ be the first exit time of $(a_1,b_1)\cup(a_2,b_2)$. If
$x$ lies outside both bands then $\tau=0$ and $\widetilde V(x)=g(x)$. If
$x\in(a_i,b_i)$, then $\tau$ is the exit time of that bounded interval, finite
almost surely since $S$ is regular; up to $\tau$ the process
$e^{-rt}\widetilde V(S_t)$ is a local martingale, because $\widetilde V$ is $r$-harmonic on the
band, and it is bounded, because $\widetilde V$ is continuous on $[a_i,b_i]$. Optional
stopping and $\widetilde V=g$ at $a_i$ and $b_i$ give
$\widetilde V(x)=\E_x[e^{-r\tau}g(S_\tau)]$, and this is at most $v(x)$ because $v$ is
the supremum over all stopping times in \eqref{eq:value}. Hence $v=\widetilde V$.
\end{proof}

The hypothesis $b_1\le a_2$ is a condition on the separation of the two
levels, and it is explicit. The system \eqref{eq:Asystem} is homogeneous in
the skew level: $G(m)=L\,G_1(m/L)$ and $D(m)=L\,D_1(m/L)$, where $G_1,D_1$
are $G,D$ with $L=1$. Hence the band at a skew level $z$ with parameter
$\beta$ is $z\,(\hat a(\beta),\hat b(\beta))$, with
$(\hat a(\beta),\hat b(\beta))$ the band at level $1$. In particular
$b_1=L\,\hat b(\beta_L)$ and $a_2=H\,\hat a(\beta_H)$, so $b_1\le a_2$ if and
only if $H\ge H_{\mathrm c}$, where
\begin{equation}
H_{\mathrm c}:=L\,\frac{\hat b(\beta_L)}{\hat a(\beta_H)}
\label{eq:Hcrit}
\end{equation}
is the separation at which the two bands just touch.

\begin{remark}[where the additivity fails]
\label{rem:additivity}
The only case excluded above is an overlap. Write
$\widetilde V=g+(V_L-g)+(V_H-g)$, the pay-off plus two premia. On its own band each
premium satisfies $(\Lop-r)(V_i-g)=0-(\mu-r)x=(r-\mu)x>0$: it exactly offsets
the pay-off's discounted decay $(\Lop-r)g=(\mu-r)x$, which is why $\widetilde V=V_i$ is
$r$-harmonic there. Off its band a premium vanishes. When the bands are
disjoint, at most one premium is active at each point and the offset is made
once. On an overlap both are active, so the decay of the single pay-off is
offset twice:
\[
(\Lop-r)\widetilde V=(\mu-r)x+2(r-\mu)x=(r-\mu)x>0,
\]
and $\widetilde V$ ceases to be excessive. So $H_{\mathrm c}$ of \eqref{eq:Hcrit}
is also the separation below which the additive majorant
\eqref{eq:additivecap} fails.
\end{remark}

\begin{corollary}[No corner]
\label{cor:nocorner}
Under \eqref{eq:samesign} regime~C cannot occur at $H$.
\end{corollary}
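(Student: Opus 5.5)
The plan is to argue by contradiction, using the local-time criterion behind Theorem~\ref{thm:corner} together with the slope information from the continuation band. Regime~C at $H$ means the upper boundary of the component containing $H$ sits exactly at $b=H$. Then $V=g$ on $[H,\infty)$, so $V'(H^+)=1$, and $V$ is $r$-harmonic on $(a,H)$.

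For $e^{-rt}V(S_t)$ to be a supermartingale, the defect at $H$ must satisfy
\[
\Delta_H=(1+\beta_H)V'(H^+)-(1-\beta_H)V'(H^-)\le0,
\]
exactly as in the proof of Theorem~\ref{thm:corner}. This is the condition $\pi(\beta_H)\le V'(H^-)$. Under \eqref{eq:samesign} we have $\pi(\beta_H)>1$, so regime~C would require $V'(H^-)>1$. The whole proof therefore reduces to the slope bound $V'(H^-)\le1$.

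To obtain this bound, set $W:=V-g$ on $(a,H)$. Value matching gives $W(H)=0$, and dominance gives $W\ge0$ on $(a,H)$; this holds because $V=v\ge g$ for any value function, whatever the band looks like. Since $W$ is non-negative on $(a,H)$ and vanishes at $H$, we get $W'(H^-)\le0$, that is $V'(H^-)\le1<\pi(\beta_H)$. Hence $\Delta_H>0$, and the candidate fails to be excessive at $H$.

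This contradiction can be phrased without presupposing optimality, through the short-time expansion \eqref{eq:shorttime} applied to $V$ at $H$. A positive coefficient on $d\ell^H$ is of order $\sqrt t$ and beats the $O(t)$ drift term. So continuing past $H$ strictly improves on $V(H)=H$, which contradicts $H$ being a stopping point. The same conclusion follows directly from the mechanism after Corollary~\ref{cor:noband}: $\beta_Hg'(H)=\beta_H>0$ forces $H$ into the interior of the continuation region, so $H$ can never be a boundary point. I would present this as the main argument and keep the slope computation as a remark explaining why the criterion \eqref{eq:criterion} is violated.

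The main obstacle is justifying that the true value function cannot have a boundary at $H$ without first knowing the shape of the stopping region. The excessivity argument avoids this. The value function $v$ is the least $r$-excessive majorant, so it is in particular $r$-excessive at $H$. If $H$ were in the stopping set, then $v=g$ near $H$ from the right and $v\ge g$ from the left, which gives $v'(H^-)\le1$. The defect would then be at least $2\beta_H>0$, which contradicts excessivity. I would check that $v$ has one-sided derivatives at $H$; if that is delicate, I would instead compare with $\E_H[e^{-r\tau_\epsilon}g(S_{\tau_\epsilon})]$ for the exit time $\tau_\epsilon$ of a small interval around $H$. By the skew-Brownian exit probabilities, $\tfrac{1+\beta_H}{2}$ up and $\tfrac{1-\beta_H}{2}$ down, this expectation equals $H+\beta_H\epsilon+O(\epsilon^2)>H$.
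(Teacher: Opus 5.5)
Your slope-bound step ($v\ge g$ and $v(H)=H$ force $v'(H^-)\le1$, hence $\Delta_H\ge(1+\beta_H)-(1-\beta_H)=2\beta_H>0$) is exactly the paper's proof. The argument you want to lead with is a genuinely different route. Rather than signing the local-time defect of $v$, you exhibit a stopping rule from $H$ that beats immediate sale: either a short holding period via \eqref{eq:shorttime}, or the exit time $\tau_\epsilon$ of $(H-\epsilon,H+\epsilon)$ with $\epsilon<H-L$. From this you conclude $v(H)>g(H)$ directly.

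This route buys two things. First, it needs no regularity of $v$ at $H$. The paper tacitly uses that $v$ is $r$-harmonic up to $H$ from the left, but that is part of the regime-C hypothesis, so your worry about one-sided derivatives is not a real obstacle there. Second, it proves more: $H$ lies in the interior of the continuation region whatever the shape of the stopping set, which also rules out a lower boundary or an isolated stopping point at $H$. The paper's version is shorter and shows directly why \eqref{eq:criterion} fails, namely $\pi(\beta_H)>1\ge V'(H^-)$, so keeping it as a remark is apt.

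One precision for the exit computation. In price coordinates the exit probabilities are $(1\pm\beta_H)/2$ only to leading order. The transmission condition applied to the scale function gives $s'(H^+)/s'(H^-)=(1-\beta_H)/(1+\beta_H)$, and the drift adds an $O(\epsilon)$ correction. The discount costs a further $O(\E\tau_\epsilon)=O(\epsilon^2)$. Both are absorbed in your remainder, so $H+\beta_H\epsilon+O(\epsilon^2)>H$ for small $\epsilon$ stands.
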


\begin{proof}
Suppose it did, so that the upper boundary is $b=H$ and $v=g$ on
$[H,\infty)$, whence $v'(H^+)=1$. Being a value function, $v\ge g$
everywhere, and $v(H)=H$; hence $v-g$ is non-negative on $(a,H)$ and vanishes
at $H$, so $v'(H^-)\le1$. The local-time defect of
Theorem~\ref{thm:verification} at $H$ is then
\[
\Delta_H=(1+\beta_H)\,v'(H^+)-(1-\beta_H)\,v'(H^-)
 \ \ge\ (1+\beta_H)-(1-\beta_H)=2\beta_H>0,
\]
so $e^{-rt}v(S_t)$ fails to be a supermartingale, a contradiction.
\end{proof}

\begin{remark}
\label{rem:notglobal}
The inequality $v'(H^-)\le1$ used above is a property of the corner
configuration, not a general one. It is forced by the vanishing of $v-g$ at
$H$ from inside a continuation region. When $H$ is interior to the
continuation region, as in the merged configuration, $v-g$ does not vanish
there, and transmission gives $v'(H^-)=\pi(\beta_H)\,v'(H^+)$ with
$\pi(\beta_H)>1$, so no such bound is available.
\end{remark}

\subsection{Selection}

Two candidate configurations are available under \eqref{eq:samesign}: the
\emph{decoupled} one of Proposition~\ref{prop:decouple}, with two disjoint
bands $(a_1,b_1)$ and $(a_2,b_2)$ obtained from the two single-skew systems,
and the \emph{merged} one, a single band $(a,b)\ni L,H$ obtained from the
two-skew system \eqref{eq:master} --- the regime-B system of
Section~\ref{sec:fb}, now with $\beta_H>0$ --- with candidate
$V_{\mathrm{merged}}:=U_a$ on $(a,b)$ and $V_{\mathrm{merged}}:=g$ elsewhere.
When $b_1\le a_2$ the decoupled
configuration exists, but \eqref{eq:master} may still have a solution as
well: both systems of equations can be solvable at the same parameters. At
most one of the two candidates can be the value function, and the next result
says which.

\begin{proposition}[Selection]
\label{prop:selection}
Suppose $b_1\le a_2$ and that \eqref{eq:master} also admits a solution
$(a,b)$ with $a<L<H<b$. Then that solution is not the value function, and the
configuration of Proposition~\ref{prop:decouple} is the correct one.
\end{proposition}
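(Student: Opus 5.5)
By Proposition~\ref{prop:decouple}, under $b_1\le a_2$ the value function $v$ equals the decoupled candidate $\widetilde V$ of \eqref{eq:additivecap}, with continuation region $(a_1,b_1)\cup(a_2,b_2)$. The plan is therefore to show that the merged candidate $V_{\mathrm{merged}}$ cannot coincide with $\widetilde V$. Two functions that are equal must have equal stopping sets $\{V=g\}$, so it suffices to compare these sets, which are determined by the boundaries.

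First I would record what the two stopping sets look like. The merged band $(a,b)$ contains $L$ and $H$. By construction $V_{\mathrm{merged}}=g$ only outside $(a,b)$, provided $V_{\mathrm{merged}}>g$ on $(a,b)$. That inequality is where I would be careful, since Proposition~\ref{prop:dominance} was proved for $\beta_H<0$. Fortunately it is not needed. If $V_{\mathrm{merged}}=v$, then $V_{\mathrm{merged}}$ would have to equal $g$ on $[b_1,a_2]$, because $v=\widetilde V=g$ there. So I only need to exhibit one point of $[b_1,a_2]$ at which $V_{\mathrm{merged}}\neq g$.

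Take any $x_0\in[b_1,a_2]$; we have $a<L<b_1\le x_0\le a_2<H<b$. On the interval $(L,H)$, $V_{\mathrm{merged}}=U_a$ is a single $r$-harmonic piece $Ax^{\alpha_1}+Bx^{\alpha_2}$. Suppose $V_{\mathrm{merged}}=g$ on all of $[b_1,a_2]$.
\begin{itemize}
\item If $b_1<a_2$, then $Ax^{\alpha_1}+Bx^{\alpha_2}=x$ on a nondegenerate interval. This is impossible, since $x$, $x^{\alpha_1}$, $x^{\alpha_2}$ are linearly independent because $1\notin\{\alpha_1,\alpha_2\}$ by Lemma~\ref{lem:exponents}.
\item If $b_1=a_2$, a single point of equality is not enough for a contradiction, so the argument has to be sharpened. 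Here I would use the fact that $v=\widetilde V$ is known explicitly near $x_0=b_1$: it is smooth-fit tangent to $g$ from both sides, so $v(x_0)=x_0$ and $v'(x_0)=1$. If $V_{\mathrm{merged}}=v$, the piece $Ax^{\alpha_1}+Bx^{\alpha_2}$ would satisfy smooth fit at $x_0$. Lemma~\ref{lem:smoothfit} then gives $(A,B)=\vk(x_0)$.
\end{itemize}

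In the case $b_1=a_2$, $V_{\mathrm{merged}}$ would coincide on $(L,H)$ with $U_{x_0}$. That function has $U_{x_0}-g$ vanishing to second order at $x_0$, with $W''(x_0)>0$ by \eqref{eq:Wode}. On the other hand, $\widetilde V$ on $(L,x_0)$ is the $L$-band piece, whose coefficient pair is $T_{-\beta_L}\vk(a_1)$. Two $r$-harmonic pieces with the same value and derivative at $x_0$ agree, so this is consistent. The contradiction must instead come from $\widetilde V$ having a kink-free tangency from both sides at $x_0$ while the single merged piece crosses through. Concretely, on $(x_0,H)$ we have $\widetilde V=V_H$, whose piece is $\vk(a_2)=\vk(x_0)$, the same pair. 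So the merged piece equals $\widetilde V$ on all of $(L,H)$, and equality $V_{\mathrm{merged}}=\widetilde V$ is not yet excluded. In this touching case I would therefore argue at the level of the outer boundaries instead. Equality of the two functions forces $a=a_1$ and $b=b_2$. But $V_{\mathrm{merged}}$ is a single solution of the equation on $(a,b)$ and equals $g$ at the interior point $x_0$ with $W'(x_0)=0$ and $W''(x_0)>0$. It is $r$-harmonic on $(a,b)$, so $e^{-rt}V_{\mathrm{merged}}(S_t)$ is a martingale up to exit. Optional stopping at the first hit of $x_0$ from $x\in(a,x_0)$ gives $V_{\mathrm{merged}}(x)=\E_x[e^{-r\tau}g(S_\tau)]$ over exits from $(a,x_0)$. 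This is consistent again, so the tangential case may in fact be genuinely degenerate: the two configurations coincide as functions.

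The main obstacle is therefore this boundary case $b_1=a_2$. My first move would be to check whether it can actually arise: whether \eqref{eq:master} admits a merged solution $a<L<H<b$ with $V_{\mathrm{merged}}$ touching $g$ at an interior point. If it does, the statement should be read with ``not the value function'' meaning that the merged band $(a,b)$ is not the continuation region. That reading holds even when $b_1=a_2$, since the true continuation region omits $x_0$. For $b_1<a_2$ the linear-independence argument above settles it directly.
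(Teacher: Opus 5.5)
Your argument for $b_1<a_2$ is complete, and it takes a different and tighter route than the paper's. The paper argues at the level of configurations. $\widetilde V$ has the disconnected continuation region $(a_1,b_1)\cup(a_2,b_2)$ and the merged candidate has the connected one $(a,b)$, ``so the two functions differ''. That step tacitly identifies the label $(a,b)$ with $\{V_{\mathrm{merged}}>g\}$, which is never established. In the paper's own example at $H=2.2$ the merged candidate actually falls below $g$ inside $(a,b)$. You compare the functions themselves instead. Equality with $v$ would force the single $r$-harmonic piece $\Phi(a)\cdot e_x$ to equal $x$ on the nondegenerate interval $[b_1,a_2]\subset(L,H)$, and linear independence of $x^{\alpha_1},x,x^{\alpha_2}$ rules this out. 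This needs no dominance information, and it is exactly what makes the paper's one-line conclusion true in that case.

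In the touching case $b_1=a_2$, i.e.\ $H=H_{\mathrm c}$, your suspicion is right, and you can settle it rather than leave it open. Proposition~\ref{prop:merger} shows that $(a_1,b_2)$ solves \eqref{eq:master} with $a_1<L<H<b_2$. Its candidate agrees with $V_L$ on $(a_1,b_1]$ and with $V_H$ on $[a_2,b_2)$, so it equals $\widetilde V=v$ identically. Consistently, Corollary~\ref{cor:iff} and the discussion before it make the merged candidate dominating, hence optimal, at $H=H_{\mathrm c}$. So at $b_1=a_2$ the statement, read as a claim about functions, is false. The paper's inference from ``different continuation regions'' to ``different functions'' fails there too.

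Your smooth-fit computation pins down this exception exactly. Equality with $v$ forces the pair $\vk(b_1)$ on $(L,H)$. The $L$- and $H$-band equations, together with invertibility of the crossing matrices, then give $\vk(a)=\vk(a_1)$ and $\vk(b)=\vk(b_2)$. Since $\vk$ is injective, $(a_1,b_2)$ is the only merged root that coincides with $v$. The correct statement is therefore the reading you propose. It holds as written for $b_1<a_2$. For $b_1=a_2$ every other root fails, while the root $(a_1,b_2)$ gives $v$ but its band is not $\{v>g\}$, because it contains $b_1$. Both exit rules are then optimal.

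Finally, drop the remark that the merged piece ``crosses through'' at $x_0$. With $W(x_0)=W'(x_0)=0<W''(x_0)$ it touches $g$ from above.
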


\begin{proof}
Proposition~\ref{prop:decouple} identifies $v$ with $\widetilde V$, whose continuation
region is $(a_1,b_1)\cup(a_2,b_2)$ and is therefore disconnected. The merged
solution has connected continuation region $(a,b)$, so the two functions
differ; since the value function is unique, the merged one is not it.
\end{proof}

No appeal to dominance is needed: the two candidates simply have different
continuation regions. But the example behind
Proposition~\ref{prop:selection} deserves emphasis, because it shows that
\eqref{eq:master} is \emph{necessary and not sufficient}, and that a solver
gives no warning of the fact. At $\beta_L=\beta_H=0.60$, $L=1$, $H=2.2$ and
the parameters of Appendix~\ref{app:num}, the disjoint configuration is
$(0.7313,1.3113)\cup(1.6089,2.8848)$, while \eqref{eq:master} also returns the
root
\[
(a,b)=(0.94180949,\,2.84376715).
\]
This is a genuine root and not a numerical artefact: the residual of
\eqref{eq:master} is $2\times10^{-12}$, and value matching and smooth fit hold
at both endpoints to machine precision, $V(a)=a$ and $V(b)=b$ with
$V'(a)=V'(b)=1.00000000$, the transmission conditions being satisfied at $L$
and at $H$ by construction. Every condition the system encodes is met.
Nevertheless the resulting $V$ falls below the pay-off across the middle of
the band: at the midpoint $x=1.4601$ of the gap $(b_1,a_2)$ it takes the value
$1.3244$, and its worst shortfall is $V-g=-0.1360$ at $x=1.4346$.

To see what goes wrong, compare with the proof of
Proposition~\ref{prop:dominance}, writing $W=V-g$ as there. There $W$
changes monotonicity only
once, at $L$: it increases on $(a,L)$ and decreases on $(L,b)$, and the kink at
a resistance level ($\pi(\beta_H)<1$) preserves the sign of $W'$. A second
support level reverses this. With $\pi(\beta_H)>1$ the transmission condition
$V'(H^-)=\pi(\beta_H)V'(H^+)$ can turn $W'$ from negative to positive as $x$
crosses $H$ to the left, and in the example it does: $W'(H^-)=0.744$ and
$W'(H^+)=-0.563$, just as $W'(L^-)=0.118$ and $W'(L^+)=-0.720$. So $W$ has a
peak at each skew point and therefore a local minimum between them, here at
$x=1.4346$. The inequality \eqref{eq:Wode} rules out a non-negative interior
maximum but says nothing about a minimum, and the only source of value above
the pay-off is local time at a skew point, of which there is none between $L$
and $H$. Whether the minimum stays above zero therefore depends on how far
apart the two levels are, and smooth fit at $a$ and $b$ cannot decide it. This
is why dominance has to be established separately rather than read off the
system: Theorem~\ref{thm:domsame} and Corollary~\ref{cor:iff} show that it
holds exactly when $H\le H_{\mathrm c}$.

\subsection{Merger}
\label{sec:mergersub}

\begin{figure}[t]
\centering
\includegraphics[width=\textwidth]{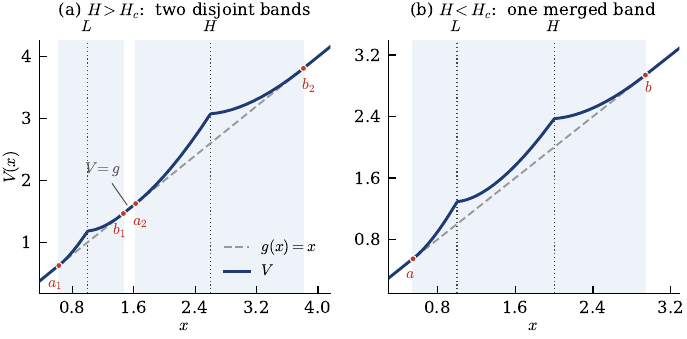}
\caption{Two support levels: $\beta_L=\beta_H=0.9$, other parameters as in
Fig.~\ref{fig:regimes}, so that $H_{\mathrm c}=2.344$ by
\eqref{eq:Hcrit}. (a) $H=2.60>H_{\mathrm c}$: the two single-skew bands
$(0.625,1.465)$ and $(1.626,3.810)$ are disjoint, and $V=g$ on the gap
between them, so the holder sells there and the levels act
independently. (b) $H=2.00<H_{\mathrm c}$: the bands have merged into the
single band $(0.547,2.939)$, which carries both skew points, and selling
between the levels is no longer optimal.}
\label{fig:merger}
\end{figure}

Fig.~\ref{fig:merger} shows the two configurations either side of
$H_{\mathrm c}$.

\begin{proposition}[Merger]
\label{prop:merger}
Fix $L,\beta_L,\beta_H$. At $H=H_{\mathrm c}$ of \eqref{eq:Hcrit}, where
$b_1=a_2$, the pair $(a_1,b_2)$ --- the union of the two touching bands ---
solves the merged system \eqref{eq:master}, and the corresponding candidate
coincides with $V_L$ on $(a_1,b_1]$ and with $V_H$ on $[a_2,b_2)$.
\end{proposition}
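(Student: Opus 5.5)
Set $m:=b_1=a_2$ and consider the function $\widehat V$ equal to $V_L$ on $(a_1,m]$, to $V_H$ on $[m,b_2)$, and to $g$ outside $(a_1,b_2)$. The plan is to show that $\widehat V$ restricted to $(a_1,b_2)$ is a solution of $(\Lop-r)u=0$ in the sense of Lemma~\ref{lem:transmission}, and that it has smooth fit at $a_1$ and at $b_2$. Those two facts are exactly what \eqref{eq:master} encodes for the pair $(a_1,b_2)$.

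The smooth fit comes for free. $V_L$ has smooth fit at $a_1$, so its coefficients on $(a_1,L)$ are $\vk(a_1)$ by Lemma~\ref{lem:smoothfit}. Likewise $V_H$ has smooth fit at $b_2$, so its coefficients on $(H,b_2)$ are $\vk(b_2)$.

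The main step is the gluing at $m$. Both $V_L$ and $V_H$ meet the pay-off at $m$ with smooth fit: $V_L(m)=m$ and $V_L'(m^-)=1$, because $m=b_1$ is the upper free boundary of the $L$-band, and $V_H(m)=m$ and $V_H'(m^+)=1$, because $m=a_2$ is the lower free boundary of the $H$-band. The interval $(L,H)$ contains no skew point. On $(L,m)$ the function $V_L$ is $Ax^{\alpha_1}+Bx^{\alpha_2}$, and on $(m,H)$ the function $V_H$ is $A'x^{\alpha_1}+B'x^{\alpha_2}$. Both pairs satisfy the conditions $u(m)=m$, $u'(m)=1$, so Lemma~\ref{lem:smoothfit} makes both equal to $\vk(m)$.

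Hence $\widehat V$ is a single $r$-harmonic function $\vk(m)\cdot e_x$ on all of $(L,H)$, with no kink at $m$. Its coefficient pair on the middle interval is computed in two ways. From below it is $T_{-\beta_L}(L)\vk(a_1)$, since $V_L$ obeys transmission at $L$. From above it is $T_{\beta_H}(H)\vk(b_2)$, since $V_H$ obeys transmission at $H$ and Lemma~\ref{lem:crossing} carries the pair across $H$ downward. Equating the two expressions gives
\[
T_{-\beta_L}(L)\vk(a_1)=\vk(m)=T_{\beta_H}(H)\vk(b_2),
\]
which is \eqref{eq:master} with both factors present, since $a_1<L<m<H<b_2$ by Theorem~\ref{thm:Airrelevant} applied to each band.

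It remains to identify the candidate. By construction the merged candidate for $(a_1,b_2)$ is $U_{a_1}$. It has coefficients $\vk(a_1)$ on $(a_1,L)$ and is extended upward by the crossing matrices. Since the solution is unique once its coefficients on the first piece are fixed (Lemma~\ref{lem:crossing}), it agrees with $V_L$ on $(a_1,L)$ and on $(L,m]$, hence on $(a_1,b_1]$. Continuing the same solution through $(L,H)$ and across $H$ via the equation above, it agrees with $V_H$ on $[a_2,H)$ and on $(H,b_2)$.

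The only step needing care is the equality of the two middle coefficient pairs at the touching point. It rests on reading smooth fit of $V_L$ at $b_1$ and of $V_H$ at $a_2$ as the same pair of conditions at the same point $m$. Once both are written through Lemma~\ref{lem:smoothfit}, the equality is immediate.
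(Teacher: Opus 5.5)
Your proof is correct and is essentially the paper's argument. The paper chains the two single-skew band equations $T_{-\beta_L}(L)\vk(a_1)=\vk(b_1)$ and $\vk(a_2)=T_{\beta_H}(H)\vk(b_2)$ through $b_1=a_2$. Your step of reading smooth fit of $V_L$ at $b_1$ and of $V_H$ at $a_2$ through Lemma~\ref{lem:smoothfit} is exactly the content of those two identities, so the routes coincide.
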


\begin{proof}
The $L$-band solves the single-skew system at $L$, which is
\eqref{eq:master} with the factor at $H$ replaced by the identity,
\[
T_{-\beta_L}(L)\vk(a_1)=\vk(b_1).
\] Likewise the $H$-band solves
\[
T_{-\beta_H}(H)\vk(a_2)=\vk(b_2),\qquad\text{that is}\qquad
\vk(a_2)=T_{\beta_H}(H)\vk(b_2)
\]
by \eqref{eq:Tprops}. At $H=H_{\mathrm c}$ we have $b_1=a_2$, so
\[
T_{-\beta_L}(L)\vk(a_1)=\vk(b_1)=\vk(a_2)=T_{\beta_H}(H)\vk(b_2),
\]
which is \eqref{eq:master} for $(a,b)=(a_1,b_2)$, with $a_1<L<b_1=a_2<H<b_2$.
The common vector is the coefficient pair of the candidate on $(L,H)$, and it
equals the coefficient pair of $V_L$ above $L$ and of $V_H$ below $H$, which
gives the last assertion.
\end{proof}

For $H<H_{\mathrm c}$, Proposition~\ref{lem:acomp} below shows that a
merged band exists, and Theorem~\ref{thm:domsame} that it is the continuation
region. The scalar reduction and Jacobian analysis of regime~B carry over with
$\beta_H>0$ in place of $\beta_H<0$, as explained next, and the merged band
depends continuously on $H$ (Proposition~\ref{lem:acomp}).

Proposition~\ref{prop:merger} then says that the two configurations meet at
$H_{\mathrm c}$. As $H$ decreases to $H_{\mathrm c}$ the middle component
$[b_1,a_2]$ of the stopping region shrinks to the single point $b_1=a_2$, and
at $H_{\mathrm c}$ the merged band is exactly $(a_1,b_2)$; no boundary
jumps.
Numerically, at $\beta_L=\beta_H=0.60$ and $L=1$ one finds
$H_{\mathrm c}=1.79301900$, where the bands are
$(0.731329,1.311287)$ and $(1.311287,2.351162)$ and the merged system returns
$(0.731329,2.351162)$, agreeing on both endpoints to six decimal places.

Nothing in Section~\ref{sec:fb} used the sign of $\beta_H$ except
Theorem~\ref{thm:corner} and the regime-A defect of
Theorem~\ref{thm:verification}. In particular Lemma~\ref{lem:deriv},
Corollary~\ref{lem:sens} and Proposition~\ref{prop:transition} hold verbatim
under \eqref{eq:samesign}: Corollary~\ref{lem:sens} rests on
Lemma~\ref{lem:sensneg}, whose proof uses only $\pi(\beta_L)>0$ and the
one-zero property, and that of Proposition~\ref{prop:transition} signs
$dG_H/db$ and $dD_H/db$ without reference to $\beta_H$.

Proposition~\ref{prop:dominance} is the exception, and the exception is of a
different kind from Remark~\ref{rem:signload}. There the sign convention is
load-bearing in the strict sense: with $\beta_H>0$ the regime-A conclusion is
false, because $\Delta_H=2\beta_H>0$ destroys the supermartingale property
outright. Here only the \emph{argument} fails. Its propagation step cannot
cross $H$, for the reason given before Section~\ref{sec:mergersub}: with
$\pi(\beta_H)>1$ the sign of $W'$ can flip at $H$, and $W$ then has a local
minimum between $L$ and $H$ that smooth fit does not control. The conclusion
nevertheless survives for $H<H_{\mathrm c}$ (Theorem~\ref{thm:domsame}).
Dominance is also not a side condition that could be bypassed. On one hand,
$V_{\mathrm{merged}}$ is the expected reward of the exit time of $(a,b)$, so
$V_{\mathrm{merged}}\le v$. On the other hand, if $V_{\mathrm{merged}}\ge g$,
the verification of Theorem~\ref{thm:verification} applies (with
$\Delta_H=0$, since $H\in(a,b)$) and gives $V_{\mathrm{merged}}\ge v$. So
$V_{\mathrm{merged}}\ge g$ holds if and only if $V_{\mathrm{merged}}=v$:
dominance is the whole content of optimality here, and it has to be proved
directly. Section~\ref{sec:domsame} does so by comparing with the
single-skew problem at $L$, an argument that never crosses $H$.

\begin{remark}[the additive majorant caps but does not identify]
\label{rem:cap}
Before giving the argument of Section~\ref{sec:domsame} we record how far the additive majorant
\eqref{eq:additivecap} goes, since it yields a computable bound of independent
use. It rests on the characterisation of \citet{DayanikKaratzas03}. Let
$F:=\psi_r/\varphi_r$, which is strictly increasing by
Lemma~\ref{lem:onezero}, and for a function $u$ on $(0,\infty)$ write
$\widehat u:=(u/\varphi_r)\circ F^{-1}$, a function of the new coordinate
$y=F(x)$. For a function $f$ of $y$, let $\mathcal C f$ denote its
\emph{least concave majorant}, the smallest concave function lying above
$f$. Then a non-negative $u$ is $r$-excessive if and only if $\widehat u$ is
concave, and the value function is $v=\varphi_r\cdot(\mathcal C\widehat
g)\circ F$, that is $\widehat v=\mathcal C\widehat g$. Now
$\widehat{\widetilde V}\ge\widehat g$ by Proposition~\ref{prop:decouple}'s first step,
which holds irrespective of overlap; hence $\mathcal C\widehat{\widetilde V}$ is concave
and dominates $\widehat g$, so
\begin{equation}
v\ \le\ \varphi_r\cdot\bigl(\mathcal C\widehat{\widetilde V}\bigr)\circ F
\label{eq:concavecap}
\end{equation}
unconditionally --- a cap even where $\widetilde V$ itself is not excessive. Moreover
$\mathcal C\widehat{\widetilde V}=\widehat v$ if and only if $\widetilde V\le v$, since $\widehat v$
is concave and $\mathcal C\widehat{\widetilde V}\ge\mathcal C\widehat g=\widehat v$
always. That last condition is not automatic. For moderate parameters it holds
and \eqref{eq:concavecap} is exact: at $\beta_L=\beta_H=0.60$, $H=1.60$ and at
$\beta_L=0.60,\beta_H=0.30$, $H=1.20$ we find $\mathcal C\widehat{\widetilde V}=\widehat v$ to machine precision. But $\widetilde V$ can overshoot the value function
when the skews are strong and the levels close. In both examples below
$H<H_{\mathrm c}$, since \eqref{eq:Hcrit} gives $H_{\mathrm c}=2.446$ for
$\beta_L=\beta_H=0.95$ and $H_{\mathrm c}=2.344$ for $\beta_L=\beta_H=0.90$;
so $V_{\mathrm{merged}}=v$ by Theorem~\ref{thm:domsame}, and the overshoot is
$\max_x\bigl(\widetilde V-V_{\mathrm{merged}}\bigr)$, attained near $x=H$: at
$\beta_L=\beta_H=0.95$, $H=1.20$ it is $0.047$ (for instance
$\widetilde V(1.2002)=1.5059$ against $V_{\mathrm{merged}}(1.2002)=1.4586$),
and at $\beta_L=\beta_H=0.90$, $H=1.25$ it is $6.4\times10^{-3}$. In these cases $\widetilde V>v$ somewhere, so by the equivalence
above the bound \eqref{eq:concavecap} lies strictly above $v$. In general,
knowing whether the bound is exact requires checking $\widetilde V\le v$,
which already presupposes knowledge of $v$. So the additive majorant gives a
computable upper bound on $v$, but it cannot by itself identify $v$.
\end{remark}

\subsection{Dominance on the merged band}
\label{sec:domsame}

The obstruction described in Section~\ref{sec:mergersub} is that the
propagation step of Proposition~\ref{prop:dominance} transports information
\emph{across} $H$. The argument below never does. It
compares the merged candidate with the single-skew candidate at $L$, and the
comparison is carried entirely by Lemma~\ref{lem:onezero}: a solution
vanishing once cannot vanish again. Recall from
Section~\ref{sec:prelim} that $U_a$ is the solution with smooth fit at $a$,
that $V=U_{a}$ on the band, and that $w_a=\partial U_a/\partial a<0$ on
$(a,\infty)$ by Lemma~\ref{lem:sensneg}. Write $U^0_a$ for the corresponding
solution built with the skew at $L$ alone, so that $U_a=U^0_a$ on $(a,H)$ and
the two differ above $H$ only.

\begin{lemma}[a positive skew damps the candidate]
\label{lem:damping}
Under \eqref{eq:samesign}, $U_a<U^0_a$ on $(H,\infty)$.
\end{lemma}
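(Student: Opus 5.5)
Consider the difference $D:=U_a-U^0_a$ on $[H,\infty)$. Both functions solve $(\Lop-r)u=0$ on $(H,\infty)$ in the skew-free sense, and they agree on $(a,H)$. The two pieces differ only in how they cross $H$: $U^0_a$ crosses with the identity, $U_a$ with $T_{-\beta_H}(H)$. So on $(H,\infty)$ the function $D$ is the element of $\mathrm{span}\{x^{\alpha_1},x^{\alpha_2}\}$ whose data at $H^+$ we can compute.

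\emph{Data of $D$ at $H^+$.} Continuity of both functions at $H$ gives $D(H^+)=U_a(H)-U^0_a(H)=0$. For the slope, transmission for $U_a$ gives $U_a'(H^+)=U_a'(H^-)/\pi(\beta_H)$. Since $U^0_a$ has no kink at $H$, $(U^0_a)'(H^+)=U_a'(H^-)$. Hence
\[
D'(H^+)=U_a'(H^-)\Bigl(\tfrac{1}{\pi(\beta_H)}-1\Bigr).
\]
Under \eqref{eq:samesign} we have $\pi(\beta_H)>1$, so the bracket is strictly negative.

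\emph{Sign of $U_a'(H^-)$.} This is the main obstacle. Lemma~\ref{lem:incr} gives positivity of the slope only inside a solved band, whereas here $a$ is arbitrary in $(0,L)$ and $U_a$ is built with the skew at $L$ alone on $(a,H)$. I would argue exactly as in the proof of Lemma~\ref{lem:onezero}, first paragraph, starting at $z=a$ with $U_a(a)=a>0$ and $U_a'(a)=1>0$. That paragraph shows that on the skew-free piece $(a,L)$ we get $U_a>0$ and $U_a'>0$. At $L$, transmission gives $U_a'(L^+)=U_a'(L^-)/\pi(\beta_L)>0$, and continuity keeps $U_a(L)>0$. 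Applying the same paragraph at $z=L$ gives $U_a>0$ and $U_a'>0$ on $(L,H]$. In particular $U_a'(H^-)>0$, and therefore $D'(H^+)<0$.

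\emph{Conclusion.} On $[H,\infty)$, $D=Ax^{\alpha_1}+Bx^{\alpha_2}$ with $D(H)=0$ and $D'(H)<0$, and $D\not\equiv0$. By Lemma~\ref{lem:onezero}, applied to the extension of $D$ to $(0,\infty)$ as a skew-free solution, $D$ has $H$ as its only zero and is strictly monotone. Since $D'(H)<0$, $D$ is strictly decreasing, so $D<0$ on $(H,\infty)$. Alternatively, one can see this directly: $A+B$ scaled gives value zero and slope negative, and the bracket argument of Lemma~\ref{lem:onezero} for $\varphi_r$ applies mirror-wise. This is $U_a<U^0_a$ on $(H,\infty)$.
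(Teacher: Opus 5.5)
Your proof is correct and follows the paper's argument: the difference vanishes at $H$, its right slope there is $U_a'(H^-)\bigl(\pi(\beta_H)^{-1}-1\bigr)<0$, and the one-zero property of the skew-free piece above $H$ then keeps it negative on $(H,\infty)$. The only variation is in showing $U_a'(H^-)>0$: you apply the first paragraph of the proof of Lemma~\ref{lem:onezero} piece by piece from $a$ across $L$, whereas the paper first shows $U_a>0$ by the one-zero property and then rules out critical points as in Lemma~\ref{lem:incr}; both routes are valid.
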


\begin{proof}
Define the difference $d:=U_a-U^0_a$ on $(a,\infty)$. The two functions
agree on $(a,H]$, so $d=0$ there and in particular $d(H)=0$. At $H$ the
transmission condition gives $U_a'(H^+)=U_a'(H^-)/\pi(\beta_H)$, whereas
$U^0_a$ has no kink there, so $U^{0\prime}_a(H)=U_a'(H^-)$. Here $U_a>0$ on
$(a,\infty)$, since a zero would force strict monotonicity by
Lemma~\ref{lem:onezero} while $U_a'(a)=1>0$ makes $U_a$ increase from
$U_a(a)=a>0$; and $U_a'>0$ on $(a,H)$ by the argument of
Lemma~\ref{lem:incr}, a critical point of a positive solution satisfying
$\tfrac12\sigma^2x^2U_a''=rU_a>0$ and so being a strict minimum, while the
transmission condition at $L$ rescales the derivative by the positive factor
$\pi(\beta_L)^{-1}$. Hence, with $\pi(\beta_H)>1$ and $U_a'(H^-)>0$,
\[
d'(H^+)=U_a'(H^-)\bigl(\pi(\beta_H)^{-1}-1\bigr)<0.
\]
On $(H,\infty)$ both functions solve $(\Lop-r)u=0$ with no skew point, so $d$
does too; by Lemma~\ref{lem:onezero} its only zero there is at $H$ and it is
strictly monotone, hence $d<0$ on $(H,\infty)$.
\end{proof}

\begin{proposition}[existence of the merged band]
\label{lem:acomp}
Assume \eqref{eq:samesign}, let $H\in(L,H_{\mathrm c})$, and let $a_1$ be
the lower endpoint of the $L$-band. Then \eqref{eq:master} has a solution
$(a,b)=(a(H),b(H))$ with
\[
a(H)<a_1<L<H<b(H)
\qquad\text{and}\qquad
U_{a(H)}\ge g\ \text{ on }[H,\infty).
\]
It is the only solution of \eqref{eq:master} with $U_a\ge g$ on
$[H,\infty)$. At $H=H_{\mathrm c}$ the same holds with
$(a(H_{\mathrm c}),b(H_{\mathrm c}))=(a_1,b_2)$, by
Proposition~\ref{prop:merger}, and $H\mapsto(a(H),b(H))$ is continuous on
$(L,H_{\mathrm c}]$.
\end{proposition}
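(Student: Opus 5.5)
We reduce \eqref{eq:master} to a scalar equation in $a$ alone, by shooting. For $a\in(0,L)$ let $U_a$ be the two-skew solution with smooth fit at $a$, and set $W_a:=U_a-g$ on $(H,\infty)$. A merged band with lower endpoint $a$ is exactly an $a$ for which $W_a$ has a \emph{double} zero $b>H$: $W_a(b)=W_a'(b)=0$, which is smooth fit at $b$ and hence \eqref{eq:master}. Above $H$ there is no skew point, so \eqref{eq:Wode} holds for $W_a$ on $(H,\infty)$. Consequently $W_a$ has no non-negative interior local maximum there. Also, $W_a\to+\infty$ at $\infty$, since the $x^{\alpha_2}$ coefficient on $(H,\infty)$ is positive by the argument of Lemma~\ref{lem:onezero}.

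Define
\[
m(a):=\min_{x\ge H}W_a(x).
\]
By Lemma~\ref{lem:sensneg}, $w_a<0$ on $(a,\infty)$, so $m$ is strictly decreasing and continuous in $a$. The merged solutions are then the zeros of $m$ with minimiser interior to $(H,\infty)$, and these automatically satisfy $U_a\ge g$ on $[H,\infty)$.

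The bracketing proceeds as follows.

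\emph{At $a=a_1$:} $U_{a_1}=U^0_{a_1}$ on $(a_1,H]$. Since $H<H_{\mathrm c}$, we have $H<b_2$ hence $a_2<H\cdot$ scaling; more to the point, $H< H_{\mathrm c}$ means $b_1>a_2$. Lemma~\ref{lem:damping} gives $U_{a_1}<U^0_{a_1}$ on $(H,\infty)$. We claim $m(a_1)<0$, splitting into two cases. If $H\ge b_1$, then $U^0_{a_1}=U^0$ continued beyond $b_1$, which lies below $g$ there (the single-skew solution touches $g$ tangentially at $b_1$ and then, by \eqref{eq:Wode}, falls below it). If $H<b_1$, then $U^0_{a_1}(b_1)=b_1$, so $U_{a_1}(b_1)<b_1$. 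Either way $m(a_1)<0$.

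\emph{As $a\downarrow0$ (or at a suitable small $a$):} we want $m(a)>0$. We will try a comparison showing that $U_a(H^-)$ and $U_a'(H^-)$ blow up as $a\to0$. Indeed, the $\vk(a)$ coefficient on $x^{\alpha_1}$ is $\kappa_1a^{1-\alpha_1}\to0$, but the $x^{\alpha_2}$ coefficient $\kappa_2a^{1-\alpha_2}\to\infty$, and this survives the crossings $T_{-\beta_L},T_{-\beta_H}$, whose $(2,2)$ entries are positive. Then $W_a(H)>0$, and $W_a$ stays positive on $(H,\infty)$ because a positive $x^{\alpha_2}$-dominated solution exceeds $x$.

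This gives a unique $a(H)\in(0,a_1)$ with $m(a(H))=0$. To see the minimiser $b(H)$ is interior, we must rule out the minimum being attained at $H$ with $W_a(H)=0$ and $W_a'(H^+)\ge0$. We handle this with the transmission relation $W_a'(H^+)+1=(W_a'(H^-)+1)/\pi(\beta_H)$ together with Lemma~\ref{lem:damping}. This boundary case is the step I expect to be the main obstacle. The first thing to try is to show $W_a(H)>0$ at $a=a(H)$ directly, by comparing with the single-skew band via $U_a>U_{a_1}=U^0_{a_1}\ge g$ on $(a_1,b_1)\ni H$ when $H<b_1$. If $H\ge b_1$, we would fall back on the continuity argument from $H_{\mathrm c}$ described next.

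For uniqueness among solutions with $U_a\ge g$ on $[H,\infty)$: any such solution has $m(a)=0$, and strict monotonicity of $m$ pins $a$. The double zero $b$ is then unique, since two double zeros of $W_a$ would force a non-negative interior maximum between them, contradicting \eqref{eq:Wode}.

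Continuity in $H$ follows because $m$ depends jointly continuously on $(a,H)$ through the crossing matrices and is strictly monotone in $a$, so the implicit root $a(H)$ is continuous. The minimiser $b(H)$ is continuous since it is a nondegenerate minimum: $W_a''(b)>0$ by \eqref{eq:Wode}. At $H=H_{\mathrm c}$ the value $(a_1,b_2)$ is identified by Proposition~\ref{prop:merger}, which gives continuity up to $H_{\mathrm c}$.
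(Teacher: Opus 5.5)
\emph{Verdict: there is a genuine gap in the bracket $m(a_1)<0$, and a linked gap in the interior-minimiser step; the overall scheme matches the paper.} Your set-up is the paper's: $m(a)=\min_{x\ge H}(U_a(x)-x)$, strictly decreasing by Lemma~\ref{lem:sensneg}; a zero of $m$ with interior minimiser is smooth fit at $b$; uniqueness follows from monotonicity.

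\textbf{The case $b_1\le H<H_{\mathrm c}$ of $m(a_1)<0$.} This range is non-empty, since $H_{\mathrm c}=b_1/\hat a(\beta_H)>b_1$. You claim that the single-skew solution at $L$, continued past $b_1$, falls below $g$. It does the opposite.
\begin{itemize}
\item At $b_1$ one has $W=W'=0$, so \eqref{eq:Wode} gives $W''(b_1)>0$, hence $W>0$ on both sides of $b_1$.
\item Globally, on $(L,\infty)$ the pair of $U^0_{a_1}$ is $\vk(b_1)$, so $U^0_{a_1}(x)=b_1\chi(x/b_1)$ with $\chi(t)=\kappa_1t^{\alpha_1}+\kappa_2t^{\alpha_2}$. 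Here $\chi(t)-t$ is strictly convex with a double zero at $t=1$.
\end{itemize}
Thus $U^0_{a_1}\ge g$ on all of $(a_1,\infty)$. Lemma~\ref{lem:damping}, which gives $U_{a_1}<U^0_{a_1}$, then says nothing about the sign of $U_{a_1}-g$.

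A symptom that something is missing: this case never uses $H<H_{\mathrm c}$. Yet for $H>H_{\mathrm c}$ one actually has $m(a_1)>0$, because there $U_{a_1}>g$ on $[H,\infty)$.

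The missing idea is to compare with the $H$-band rather than the $L$-band.
\begin{itemize}
\item Let $u_m$ be the solution of the $H$-only model whose pair below $H$ is $\vk(m)$.
\item Since $T_{-\beta_L}(L)\vk(a_1)=\vk(b_1)$ and $b_1\le H$, one has $U_{a_1}=u_{b_1}$ on $(L,\infty)$.
\item The $H$-band candidate is $u_{a_2}$, with $u_{a_2}(b_2)=b_2$.
\item The hypothesis $H<H_{\mathrm c}$ is exactly $a_2<b_1$.
\item The argument of Lemma~\ref{lem:sensneg} shows $m\mapsto u_m(x)$ is decreasing for $x>m$. Hence $U_{a_1}(b_2)=u_{b_1}(b_2)<u_{a_2}(b_2)=b_2$, with $b_2>H$.
\end{itemize}

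\textbf{The interior-minimiser step.} The same misconception is what stalls this step for $H\ge b_1$. Your fallback on continuity from $H_{\mathrm c}$ is not a proof. A continuity argument in $H$ cannot by itself stop the minimiser from reaching the endpoint $x=H$, and that is precisely what must be excluded.

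Once you know $U^0_{a_1}\ge g$ on $[b_1,\infty)$ too, your first idea works for every $H$. By Lemma~\ref{lem:sensneg}, $U_{a(H)}(H)=U^0_{a(H)}(H)>U^0_{a_1}(H)\ge H$. The minimum of $U_{a(H)}-g$ over $[H,\infty)$ is $0$, so it is attained at some $b>H$, and there $U_{a(H)}'(b)=1$. This is the paper's argument.

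\textbf{Smaller points.}
\begin{itemize}
\item In the small-$a$ step, positive $(2,2)$ entries of the individual crossing matrices do not sign the $(2,2)$ entry of $K=T_{-\beta_H}(H)T_{-\beta_L}(L)$, which also involves off-diagonal entries. Nor do they control the $x^{\alpha_1}$ coefficient above $H$. Check from \eqref{eq:T} that every entry of $T_{-\beta}$ is positive for $\beta\in(0,1)$. Then both components of $\gamma=K\vk(a)$ are positive, $\gamma_2\ge K_{22}\kappa_2a^{1-\alpha_2}\to\infty$, and $U_a(x)\ge\gamma_2x^{\alpha_2}\ge\gamma_2H^{\alpha_2-1}x$ on $[H,\infty)$.
\item At $H=H_{\mathrm c}$, Proposition~\ref{prop:merger} only says that $(a_1,b_2)$ solves \eqref{eq:master}. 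You still need $U_{a_1}\ge g$ on $[H_{\mathrm c},\infty)$. This follows from dominance of the $H$-band candidate on $[a_2,b_2]$ and from $b_2\chi(x/b_2)\ge x$ beyond $b_2$.
\end{itemize}
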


\begin{proof}
See Appendix~\ref{app:proofs}.
\end{proof}

\begin{theorem}[dominance, same-sign case]
\label{thm:domsame}
Assume \eqref{eq:samesign} and $H<H_{\mathrm c}$, and let
$(a,b)=(a(H),b(H))$ be the solution of \eqref{eq:master} given by
Proposition~\ref{lem:acomp}. Then $V>g$ on $(a,b)$, and consequently $V=v$ and the exit
time of $(a,b)$ is optimal.
\end{theorem}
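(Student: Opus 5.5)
We prove $W:=V-g>0$ on $(a,b)$ by splitting the band at $H$ and never propagating information across $H$. Once $W>0$ is known, optimality follows from the observation made before Remark~\ref{rem:cap}. On one side, $V=V_{\mathrm{merged}}$ is the reward of the exit time of $(a,b)$, so $V\le v$. On the other side, the verification argument of Theorem~\ref{thm:verification} applies verbatim. That argument needs $\Delta_L=\Delta_H=0$, which holds because both skew points lie inside $(a,b)$, and $(\Lop-r)V\le0$ off the band by Assumption~\ref{ass:drift}; together these give $V\ge v$. So dominance is the only content of the theorem.

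\emph{On $(a,H]$.} Here $V=U_a=U^0_a$, the single-skew solution at $L$ with smooth fit at $a$, and by Proposition~\ref{lem:acomp} we have $a<a_1$. Compare $U^0_a$ with $U^0_{a_1}$, which equals $V_L$ on $(a_1,b_1)$. By Lemma~\ref{lem:sensneg} (stated for any $a>0$ and with only the skew at $L$), $a\mapsto U^0_a(x)$ is strictly decreasing, so $U^0_a>U^0_{a_1}$ on $(a_1,\infty)$. Consider the two pieces of $(a,H]$ separately.
\begin{itemize}
\item On $(a_1,\min(b_1,H)]$ we have $U^0_{a_1}=V_L\ge g$ by Proposition~\ref{prop:dominance} for the single-skew problem. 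Hence $W>0$ there.
\item On $(a,a_1]$ the argument of Proposition~\ref{prop:dominance} applies directly. We have $W(a)=W'(a)=0$ and $W''(a)>0$, and \eqref{eq:Wode} excludes a non-negative interior maximum, so $W$ stays increasing up to $L>a_1$. This gives $W>0$ on $(a,L]$.
\end{itemize}
If $H\le b_1$ this covers $(a,H]$. In general $H$ may exceed $b_1$, since $H_{\mathrm c}>b_1$, and then the segment $[b_1,H]$ needs a separate argument.

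\emph{On $[H,b)$.} Proposition~\ref{lem:acomp} already gives $U_a\ge g$ on $[H,\infty)$. To make this strict on $[H,b)$, use the smooth-fit boundary at $b$: $W(b)=W'(b)=0$ and $W''(b)>0$. Propagate leftward on the skew-free interval $(H,b)$ exactly as in the $(L,b)$ step of Proposition~\ref{prop:dominance}. This yields $W'<0$ and $W>0$ on $(H,b)$, provided $W(H)\ge0$, and that is supplied by Proposition~\ref{lem:acomp}. Lemma~\ref{lem:damping}, $U_a<U^0_a$ above $H$, explains why the lower boundary must move below $a_1$, but the bound from Proposition~\ref{lem:acomp} is what we actually use.

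\emph{Main obstacle: the gap $(b_1,H)$ when $H>b_1$.} The $(L,H)$ segment lies entirely below $H$, so \eqref{eq:Wode} can be used there without crossing a kink with $\pi>1$. Suppose $W$ is non-positive somewhere in $(L,H)$. Since $W(L)>0$ and $W(H)\ge0$, $W$ would then have an interior minimum there with $W\le0$. The no-maximum property forbids a non-negative maximum, not a non-positive minimum, so that route fails.

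Instead, use the one-zero property. We know $W>0$ on $(L,b_1]$ from the comparison with $V_L$. Suppose $W(x_3)\le0$ at some $x_3\in(b_1,H)$. Then $U^0_a$ would meet the line $g$ twice after $b_1$. I would argue this is impossible as follows. $U^0_a-U^0_{a_1}=\int_{a_1}^{a}w\,da'$ is a positive solution on $(L,\infty)$. Moreover $U^0_{a_1}-g$ on $(b_1,\infty)$ is the function with $W=W'=0$ at $b_1$ and source $(r-\mu)x$, and it is therefore positive beyond $b_1$: an ODE positivity argument from a double zero gives $W''(b_1)>0$ and excludes any non-negative maximum. Hence $U^0_a-g>U^0_{a_1}-g>0$ on $(b_1,H)$. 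I expect this extension of $U^0_{a_1}$ past $b_1$ to be the step needing care, because it must be checked that the no-maximum argument still runs beyond the smooth-fit point.
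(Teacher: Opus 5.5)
Your proof is correct and follows the paper's own route. You propagate from the smooth-fit boundaries on $(a,L)$ and $(H,b)$, and on $[L,H]$ you use the strict comparison $U_a>U_{a_1}$ from Lemma~\ref{lem:sensneg} (with $a<a_1$ from Proposition~\ref{lem:acomp}) together with $U^0_{a_1}\ge g$. The only variations are minor. First, you justify $U^0_{a_1}\ge g$ beyond $b_1$ by propagating rightward from the double zero at $b_1$ on the skew-free interval $(L,\infty)$. This is sound and amounts to the $\chi(t)-t\ge0$ observation in the proof of Proposition~\ref{lem:acomp}; the paper instead identifies $U_{a_1}$ with the $H$-band candidate at $H_{\mathrm c}$ via Proposition~\ref{prop:merger}. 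Second, you do not need the input $W(H)\ge0$ from Proposition~\ref{lem:acomp}, because leftward propagation from $b$ already gives $W'<0$ on $(H,b)$ and hence $W(H)>0$.
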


\begin{proof}
On $(a,L)$ and on $(H,b)$ the two propagation steps in the proof of
Proposition~\ref{prop:dominance} apply verbatim: they use only \eqref{eq:Wode},
smooth fit at the adjacent free boundary, and Assumption~\ref{ass:drift}, and
nowhere the sign of $\beta_H$. This gives $W>0$ on $(a,L]$ and on $[H,b)$.

It remains to treat $[L,H]$. There $V=U_{a(H)}$, and this depends on
$(a(H),L,\beta_L)$ alone, not on $H$ or $\beta_H$. Since
$H<H_{\mathrm c}$ we have $[L,H]\subset[L,H_{\mathrm c}]$, and
Proposition~\ref{lem:acomp} gives $a(H)<a_1$, so Lemma~\ref{lem:sensneg} yields
\[
U_{a(H)}\ \ge\ U_{a_1}\qquad\text{on }[L,H].
\]
At $H=H_{\mathrm c}$, Proposition~\ref{prop:merger} identifies the merged
solution with the union of the two touching bands, so on $(L,b_1)$ the
function $U_{a_1}$ is the single-skew candidate at $L$, which exceeds $g$ by
Proposition~\ref{prop:dominance}. On $(a_2,H_{\mathrm c})$ it coincides
with the single-skew candidate at $H$: both solve the equation there, and at
the touch point $b_1=a_2$ both take the value $b_1$ with derivative $1$, so
they are equal. Hence $U_{a_1}\ge g$ on $(L,H_{\mathrm c})$, with equality
only at $b_1=a_2$, and therefore $U_{a(H)}>g$ on $[L,H]$ whenever
$a(H)<a_1$, that is whenever $H<H_{\mathrm c}$.

Given $V>g$, the verification of Theorem~\ref{thm:verification} applies without
change: in the merged configuration $H\in(a,b)$, so $\Delta_H=0$ by
\eqref{eq:transmission} and the sign of $\beta_H$ is not used anywhere.
\end{proof}

\begin{corollary}[the criterion is sharp]
\label{cor:iff}
Under \eqref{eq:samesign}, the merged candidate dominates the pay-off if and
only if $H\le H_{\mathrm c}$.
\end{corollary}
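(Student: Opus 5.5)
The plan is to treat the two directions of Corollary~\ref{cor:iff} separately. The direction ``$H\le H_{\mathrm c}$ implies dominance'' is essentially in hand. For $H<H_{\mathrm c}$ it is Theorem~\ref{thm:domsame}, which gives $V_{\mathrm{merged}}>g$ on $(a(H),b(H))$. For $H=H_{\mathrm c}$, Proposition~\ref{prop:merger} identifies the merged candidate with $V_L$ on $(a_1,b_1]$ and with $V_H$ on $[a_2,b_2)$. Each of these dominates $g$ by Proposition~\ref{prop:dominance}, applied to the respective single-skew problem. So $V_{\mathrm{merged}}\ge g$, with equality only at the touch point $b_1=a_2$.

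For the converse, suppose $H>H_{\mathrm c}$, so that $b_1<a_2$. Suppose also that \eqref{eq:master} has a solution $(a,b)$ with $a<L<H<b$ whose candidate satisfies $V_{\mathrm{merged}}\ge g$. If no such solution exists, there is nothing to prove. I would argue by contradiction through the discussion preceding Remark~\ref{rem:cap}. Since $V_{\mathrm{merged}}$ is the expected reward of the exit time of $(a,b)$, we have $V_{\mathrm{merged}}\le v$. Dominance $V_{\mathrm{merged}}\ge g$, together with $\Delta_L=\Delta_H=0$ (both skew points are interior) and $(\Lop-r)V_{\mathrm{merged}}\le0$ off the band, makes the verification of Theorem~\ref{thm:verification} go through. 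That gives $V_{\mathrm{merged}}\ge v$, hence $V_{\mathrm{merged}}=v$. But Proposition~\ref{prop:decouple} applies since $b_1\le a_2$, and it identifies $v$ with $\widetilde V$, whose continuation region $(a_1,b_1)\cup(a_2,b_2)$ is disconnected. In particular $v=g$ on the nonempty gap $[b_1,a_2]$. Meanwhile $V_{\mathrm{merged}}$ has the connected continuation region $(a,b)\supset[L,H]\supset[b_1,a_2]$. This is exactly the argument of Proposition~\ref{prop:selection}.

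The main obstacle is making that last comparison a genuine contradiction rather than a statement about continuation regions. A candidate is defined as $U_a$ on $(a,b)$, and $U_a=g$ at an interior point is not excluded a priori. What I would do is show that equality $V_{\mathrm{merged}}=g$ on the whole interval $[b_1,a_2]$ is impossible. On that interval $V_{\mathrm{merged}}$ solves $(\Lop-r)u=0$, since the interval is skew-free and interior to $(a,b)$. On the other hand $(\Lop-r)g=(\mu-r)x<0$ by Assumption~\ref{ass:drift}. If $b_1<a_2$ the interval has nonempty interior, and the two cannot agree there. Hence $V_{\mathrm{merged}}\ne v$, contradicting $V_{\mathrm{merged}}=v$. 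So $V_{\mathrm{merged}}\not\ge g$ somewhere, which proves the converse.

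I would also remark that the strictness $H>H_{\mathrm c}$ is used precisely to give the gap nonempty interior, matching the boundary case handled by Proposition~\ref{prop:merger}.
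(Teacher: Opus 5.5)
Your proposal is correct and follows the paper's proof: sufficiency comes from Theorem~\ref{thm:domsame}, and necessity from combining $V_{\mathrm{merged}}\le v$ with the verification argument (dominance would force $V_{\mathrm{merged}}=v$) and with Propositions~\ref{prop:decouple} and~\ref{prop:selection}. You add two refinements that tighten points the paper leaves implicit: you treat $H=H_{\mathrm c}$ explicitly via Proposition~\ref{prop:merger}, and you prove $V_{\mathrm{merged}}\ne v$ by noting that an $r$-harmonic function cannot equal $g$ on the open gap $(b_1,a_2)$, rather than appealing to nominal continuation regions.
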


\begin{proof}
Sufficiency is Theorem~\ref{thm:domsame}. For necessity, if
$H>H_{\mathrm c}$ then $b_1<a_2$ and Proposition~\ref{prop:decouple}
produces a candidate with a disconnected continuation region, which by
Proposition~\ref{prop:selection} is the value function; the merged candidate is
then a different function with $V_{\mathrm{merged}}\le v$ and
$V_{\mathrm{merged}}\ne v$, so $V_{\mathrm{merged}}\not\ge g$.
\end{proof}

\subsection{Identification inverts}

\begin{corollary}
\label{cor:identsame}
Under \eqref{eq:samesign} with $b_1\le a_2$, both $\beta_L$ and $\beta_H$ are
identified from the optimal rule: the first from $(a_1,b_1)$ and the second
from $(a_2,b_2)$, by Theorem~\ref{thm:inversion} applied to each band
separately.
\end{corollary}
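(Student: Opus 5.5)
The plan is to reduce the statement to two applications of Theorem~\ref{thm:inversion}(i), one for each component of the continuation region. Under \eqref{eq:samesign} with $b_1\le a_2$, Proposition~\ref{prop:decouple} identifies the value function and states that the continuation region is $(a_1,b_1)\cup(a_2,b_2)$. So the optimal rule, equivalently the value function, determines the two open intervals as the connected components of $\{v>g\}$. The case $b_1=a_2$ needs a word: the two bands then share an endpoint, but $v(b_1)=g(b_1)$, so the point $b_1$ still separates them and both components remain readable. In the first step I would record that the components are recovered unambiguously, ordered as lower and upper, from the rule alone.

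Second, I would observe that each component is a single-skew band. By construction in Proposition~\ref{prop:decouple}, $(a_1,b_1)$ solves \eqref{eq:Asystem} with $(L,\beta_L)$. By the homogeneity remark following that proposition, $(a_2,b_2)$ solves the same system with $(L,\beta_L)$ replaced by $(H,\beta_H)$. Neither system involves the other level. Hence each interval is exactly the regime-A band of a model with the single support parameter pair attached to it, which is the hypothesis of Theorem~\ref{thm:inversion}(i).

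Third, I would apply \eqref{eq:inversion} to $(a_1,b_1)$, which returns $(L,\beta_L)$, and to $(a_2,b_2)$ with $D$ built at the recovered level, which returns $(H,\beta_H)$. Uniqueness in part~(i) gives identification: two parameter vectors producing the same rule produce the same two intervals, hence the same inverted values. The levels $L$ and $H$ come out as a by-product, so nothing beyond $r,\mu,\sigma$ need be fixed.

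The only delicate point, which I would check carefully, is that Theorem~\ref{thm:inversion}(i) is phrased for the band of model \eqref{eq:SDE} under the sign convention \eqref{eq:betasigns}. Its proof, however, uses only that $(a,b)$ solves \eqref{eq:Asystem}: the formula for $L$ comes from $G(a)=G(b)$, and $\beta_L$ is the unique solution of a linear equation. Neither step refers to the other skew or its sign. So the inversion applies verbatim to each decoupled band, and this observation is what the proof must state explicitly.
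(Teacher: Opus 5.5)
Your proposal is correct and follows the paper's own argument: Proposition~\ref{prop:decouple} gives the two bands as the continuation region, and Theorem~\ref{thm:inversion}(i) is then applied to each band. Your added checks are sound and make explicit what the paper leaves implicit: that the point $b_1=a_2$ still separates the two components of $\{v>g\}$, and that the inversion's proof uses only \eqref{eq:Asystem}, so it does not depend on the sign convention \eqref{eq:betasigns}.
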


Against Theorem~\ref{thm:ident} this gives a dichotomy. When the two skews
have opposite signs, the resistance parameter is unidentified on most of its
range --- invisible in regime~A and constant in its effect throughout regime~C
--- and only the location of resistance can be recovered. When they have the
same sign and the bands are disjoint, both parameters are separately and
exactly identified, because the two problems do not interact at all.

\subsection{What the sign convention carries}
\label{sec:signcarries}

Section~\ref{sec:samesign} is mathematically the richer case --- a
three-component stopping region, a merger bifurcation, two separately
identified parameters --- and economically the flatter one. The reason is
Corollary~\ref{cor:nocorner}. With $\beta_H>0$ there is no corner, so the
optimal boundaries $a$ and $b$ sit at no distinguished price: they are
interior points fixed by smooth fit, and a reader of the chart would see
thresholds bearing no relation to either drawn line. The rule a practitioner
would recognise --- sell \emph{exactly} at $H$, the content of
Theorem~\ref{thm:corner} --- requires both
\[
\beta_H<0
\qquad\text{and}\qquad
\pi(\beta_H)\le V'(H^-),
\]
and the first of these is exactly what \eqref{eq:samesign} gives up.

This locates the technical-analysis content of the paper. It is not the
local-time mechanism that produces a level-based rule: local time at $L$
produces a band around $L$, whichever way it pushes. What produces the rule is
the \emph{opposition} of the two signs, which makes one level a place the
holder waits at and the other a place the holder will not cross. Support and
resistance are not two instances of one phenomenon here; they are the two
signs of one parameter, and only their combination yields a rule of the
practitioner's form.

\section{The other sign pairs}
\label{sec:othersigns}

The model has four sign pairs, of which Sections~\ref{sec:fb}--\ref{sec:ident}
treat $\beta_L>0>\beta_H$ and Section~\ref{sec:samesign} treats
$\beta_L,\beta_H>0$. This section disposes of the remaining two.

\begin{proposition}[no band]
\label{prop:bothneg}
If $\beta_L\le0$ and $\beta_H\le0$ then $v=g$ and immediate liquidation is
optimal.
\end{proposition}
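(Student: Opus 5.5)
The plan is to show directly that the pay-off $g(x)=x$ is itself $r$-excessive, so that it is its own least $r$-excessive majorant and hence equals $v$ \citep{DayanikKaratzas03}. Concretely, I would apply the Itô--Tanaka expansion from the Terminology paragraph after Lemma~\ref{lem:transmission} to $u=g$. Since $g$ is $C^2$ on all of $(0,\infty)$, the hypotheses of that computation hold trivially, and I obtain
\[
d\bigl(e^{-rt}S_t\bigr)=e^{-rt}(\mu-r)S_t\,dt
+\tfrac12e^{-rt}\bigl(\Delta_L\,d\ell^L_t+\Delta_H\,d\ell^H_t\bigr)+dN_t,
\qquad \Delta_z=(1+\beta_z)-(1-\beta_z)=2\beta_z .
\]

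The argument then has three steps. First, the $dt$ term is non-positive by Assumption~\ref{ass:drift}, because $\mu<r$ and $S_t>0$. Second, each local-time coefficient equals $\beta_z\le0$ by hypothesis, and $\ell^z$ is non-decreasing, so these terms are non-increasing as well. Hence $g$ is $r$-superharmonic on all of $(0,\infty)$ in the sense of the Terminology paragraph. Third, because $g>0$, the process $e^{-rt}g(S_t)$ is a non-negative local supermartingale and therefore a genuine supermartingale, so $g$ is $r$-excessive.

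To conclude, take any $\tau\in\mathcal T$. Optional sampling for non-negative supermartingales, together with Fatou on $\{\tau=\infty\}$ where the reward is read as $0$, gives
\[
\E_x\bigl[e^{-r\tau}S_\tau\bigr]\le x .
\]
Taking the supremum over $\tau$ yields $v\le g$. The choice $\tau=0$ gives $v\ge g$, so $v=g$ and immediate liquidation is optimal.

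There is no real obstacle here; the only point needing care is the boundary cases. If $\beta_L=0$ or $\beta_H=0$, the corresponding level simply drops out, since $T_0=I$ and the local-time term vanishes, and the sign argument above still applies. I would also remark that this is consistent with the short-time heuristic \eqref{eq:shorttime}. A continuation region needs some skew point with $\beta_z g'(z)>0$, and here no such point exists, so the drift term and both local-time terms all push towards stopping.
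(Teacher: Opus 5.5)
Your proposal is correct and follows the paper's proof essentially verbatim: the defects $\Delta_z(g)=2\beta_z\le0$ and $(\Lop-r)g=(\mu-r)x<0$ make $g$ a non-negative $r$-superharmonic, hence $r$-excessive, majorant of itself, so $v\le g$, and $\tau=0$ gives the reverse inequality. Your explicit optional-sampling step, with the reward read as $0$ on $\{\tau=\infty\}$, simply spells out the direction of the least-excessive-majorant characterisation that the paper invokes in one line.
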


\begin{proof}
With $g(x)=x$ the local-time defect at either level is
$\Delta_z(g)=(1+\beta_z)-(1-\beta_z)=2\beta_z\le0$, and
$(\Lop-r)g=(\mu-r)x<0$ by Assumption~\ref{ass:drift}. So $g$ is
$r$-superharmonic on $(0,\infty)$ and, being non-negative, $r$-excessive.
A non-negative $r$-excessive majorant of $g$ dominates $v$, and $g$ majorises
itself, so $v\le g$; the reverse holds because $\tau=0$ is admissible.
\end{proof}

This is the two-level form of Corollary~\ref{cor:noband}: a continuation
region requires a level that pushes the price in the direction in which the
pay-off increases, and neither level does.

\subsection{Support above, breakdown below}

The remaining pair is $\beta_L<0<\beta_H$: the lower level pushes the price
down and the upper one pushes it up. Now $H$ is the level that creates value
and $L$ the level the holder would rather avoid, so the geometry of
Sections~\ref{sec:fb}--\ref{sec:verify} recurs with the roles of the two
levels exchanged. Write $(a_2,b_2)$ for the single-skew band at $H$, that is
the band of Theorem~\ref{thm:Airrelevant} with $(L,\beta_L)$ replaced by
$(H,\beta_H)$.

\begin{proposition}[support is invisible]
\label{prop:Lirrelevant}
Let $\beta_L<0<\beta_H$ and suppose $a_2\ge L$. Then the continuation region
is $(a_2,b_2)$, the value function is the corresponding candidate, and both
are independent of $\beta_L$.
\end{proposition}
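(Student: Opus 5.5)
The plan is to mirror the regime-A argument of Theorem~\ref{thm:Airrelevant} together with the verification of Theorem~\ref{thm:verification}, with the roles of $L$ and $H$ exchanged. Let $V_H$ denote the single-skew candidate at $H$: it equals the $r$-harmonic band function on $(a_2,b_2)$, glued across $H$ by \eqref{eq:transmission} with parameter $\beta_H$, and it equals $g$ elsewhere. By Theorem~\ref{thm:Airrelevant} with $(L,\beta_L)$ replaced by $(H,\beta_H)$, the band satisfies $a_2<H<b_2$, and it is defined without reference to $L$ or $\beta_L$. The hypothesis $a_2\ge L$ then places $L$ in the stopping region $(0,a_2]$, where $V_H=g$.

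The first step is to confirm that $V_H\ge g$. Proposition~\ref{prop:dominance}, applied to the single-skew problem at $H$, gives $V_H>g$ on $(a_2,b_2)$. Its proof uses only smooth fit at both free boundaries, \eqref{eq:Wode}, and $\pi(\beta_H)>1$ at the interior skew point, and it never refers to $L$.

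The second step is the Itô--Tanaka decomposition of $e^{-rt}V_H(S_t)$ for the full two-skew process, as in the proof of Theorem~\ref{thm:verification}. The $dt$ term vanishes on $(a_2,b_2)\setminus\{H\}$ and equals $(\mu-r)x<0$ on the stopping region, by Assumption~\ref{ass:drift}. The defect at $H$ vanishes because $H$ is interior to the band. The defect at $L$ needs a small case split:
\begin{itemize}
\item if $a_2>L$, then $V_H=g$ on a neighbourhood of $L$, so $\Delta_L=2\beta_L<0$;
\item if $a_2=L$, then $V_H'(L^-)=1$, and smooth fit gives $V_H'(L^+)=1$, so again $\Delta_L=(1+\beta_L)-(1-\beta_L)=2\beta_L\le0$.
\end{itemize}
This is the only place where $\beta_L<0$ enters, exactly as $\beta_H<0$ entered regime A (Remark~\ref{rem:signload}). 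Hence $V_H$ is a non-negative $r$-superharmonic majorant of $g$ on $(0,\infty)$, so $V_H$ is $r$-excessive and $v\le V_H$.

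The reverse inequality comes from optional stopping at the exit time $\tau$ of $(a_2,b_2)$. Before $\tau$ the process never touches $L$, because $L\le a_2$; the local time $\ell^L$ therefore does not charge $[0,\tau)$. Moreover $e^{-rt}V_H(S_t)$ is a bounded local martingale up to $\tau$, so $V_H(x)=\E_x[e^{-r\tau}g(S_\tau)]\le v(x)$. Independence of $\beta_L$ is then immediate, since neither the band nor $V_H$ involves $\beta_L$.

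I expect the main obstacle to be the borderline case $a_2=L$. There the free boundary coincides with a skew point, so one must check that $V_H$ is still a difference of convex functions near $L$ (it is, since it is $C^2$ on each side) and that the one-sided derivatives entering $\Delta_L$ are both $1$. A further subtlety in this case is that $\tau$ is the hitting time of the level $L$ itself, which is harmless because optional stopping only uses the path up to $\tau$.
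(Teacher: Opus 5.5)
Your proposal is correct and follows essentially the same route as the paper. Both arguments take the single-skew candidate at $H$, obtain dominance from Proposition~\ref{prop:dominance} applied at $H$, and run the verification of Theorem~\ref{thm:verification} with $\Delta_H=0$ and $\Delta_L=2\beta_L<0$ in both cases $a_2>L$ and $a_2=L$. The only difference is that you write out the two inequalities $v\le V_H$ and $v\ge V_H$ explicitly, where the paper invokes the verification theorem.
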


\begin{proof}
Let $V$ be that candidate: the single-skew solution at $H$ on $(a_2,b_2)$ and
$g$ elsewhere. Since $a_2\ge L$, the band contains no skew point other than
$H$, so $V$ solves $(\Lop-r)V=0$ there in the sense of
Lemma~\ref{lem:transmission} with the transmission condition at $H$, and
$V>g$ on $(a_2,b_2)$ by Proposition~\ref{prop:dominance} applied to the
single-skew problem at $H$. For the verification of
Theorem~\ref{thm:verification}: $\Delta_H=0$ because $H\in(a_2,b_2)$;
$\Delta_L=2\beta_L<0$ because $V=g$ on a neighbourhood of $L$ when $a_2>L$,
and $\Delta_L=(1+\beta_L)V'(L^+)-(1-\beta_L)=2\beta_L<0$ when $a_2=L$, where
smooth fit gives $V'(L^+)=1$; and $(\Lop-r)V\le0$ off the band as before.
Hence $V=v$.
\end{proof}

\begin{remark}[the remaining two geometries]
\label{rem:mirror}
When $a_2<L$ the band straddles $L$ and $\beta_L$ enters, through
\eqref{eq:master} exactly as $\beta_H$ does in regime~B; and the lower
boundary may pin at $a=L$, the mirror of regime~C, in which case the holder
sells \emph{exactly at $L$} --- a stop-loss at a level that pushes the price
down. The criterion is the mirror of \eqref{eq:criterion}: with $V=g$ below
$L$ the defect there is $\Delta_L=(1+\beta_L)V'(L^+)-(1-\beta_L)$, so the
corner obtains exactly when
\[
\beta_L\ \le\ \beta_L^\star:=\frac{1-V'(L^+)}{1+V'(L^+)},
\]
to be compared with $\beta_H^\star=(V'(H^-)-1)/(V'(H^-)+1)$ of
Theorem~\ref{thm:corner}. The other results of
Sections~\ref{sec:fb}--\ref{sec:ident} have mirrors in the same sense:
Theorem~\ref{thm:corner} becomes the criterion just displayed,
Proposition~\ref{prop:transition} the continuity of the transition at
$\beta_L^\star$, and Theorem~\ref{thm:ident} the statement that $\beta_L$ is
recoverable from the optimal rule only for $\beta_L\in(\beta_L^\star,0)$,
being invisible when the band lies above $L$ and constant in its effect once
the boundary pins there. At $L=1$, $H=1.2$ one finds
$\beta_L^\star=-0.312$ for $\beta_H=0.90$ and $\beta_L^\star=-0.137$ for
$\beta_H=0.60$. The analysis of these two geometries follows
Sections~\ref{sec:fb}--\ref{sec:verify} with the two levels exchanged, and we
do not repeat it.
\end{remark}

The four pairs therefore divide as follows. Opposite signs with support below
and resistance above, $\beta_L>0>\beta_H$, push the price into $(L,H)$ from
both ends and produce the practitioner's rule of Theorem~\ref{thm:corner}.
Opposite signs the other way round, $\beta_L<0<\beta_H$, push it out of
$(L,H)$ and produce a stop-loss rule at $L$. Two positive signs give two
bands, one around each level, merging at $H_{\mathrm c}$. Two negative signs
give no band at all.

\section{Conclusion}
\label{sec:disc}

We have solved the perpetual liquidation problem for a geometric multi-skew
Brownian motion with a support level $L$ and a resistance level $H>L$. The
continuation region is an interval $(a,b)$ containing $L$, and its position
relative to $H$ falls into three regimes separated by the closed-form
criterion \eqref{eq:criterion}. When $b\le H$ the resistance parameter is
exactly irrelevant, and the solution is that of the single-skew problem at
$L$ (Theorem~\ref{thm:Airrelevant}); when the criterion fails the band
straddles $H$, with the upper boundary pinned at $H$ over a closed interval
of resistance strengths before smooth fit resumes above it
(Theorem~\ref{thm:corner}, Proposition~\ref{prop:transition}). The candidate
so constructed is the value function
(Theorem~\ref{thm:verification}). Selling into resistance is therefore a
conclusion rather than an assumption.

The identification consequences are asymmetric
(Theorem~\ref{thm:ident}). Observed exercise determines the support
parameters exactly and in closed form, whereas resistance is recoverable only
up to an interval whose endpoint $\beta_H^\ast$ is explicit, and not at all
when the band lies below $H$: a level the holder arranges not to visit leaves
no trace in the stopping rule. When the two skews share a sign the problem
decouples (Proposition~\ref{prop:decouple}), one band dominating the other
below a critical separation $H_c$ and the two acting independently above it
(Proposition~\ref{prop:merger}, Theorem~\ref{thm:domsame}), and both
parameters are then identified (Corollary~\ref{cor:identsame}).

\paragraph{Where the value comes from.}
Under Assumption~\ref{ass:drift} the discounted price is a supermartingale off
the skew points, so absent local time the holder sells at once
(Corollary~\ref{cor:noband}). Every continuation region in the paper is
generated by the support push alone, and the short-horizon comparison
$\E[\ell_t]\sim\sqrt t$ against $O(t)$ is the reason: a pay-off cannot be
$r$-excessive at a skew point where $\beta g'>0$, whatever the drift. This is
the mechanism of \citet{AlvarezSalminen17}, and it explains uniformly what
would otherwise be a case analysis.

\paragraph{Disconnected stopping.}
The stopping region is disconnected. For a linear pay-off under ordinary
geometric Brownian motion this cannot happen, and for skew Brownian motion it
can look like an artefact of particular parameters. Here it is unavoidable,
and for a structural reason: the only source of value above the pay-off is
local time at a support level, which acts at a point, so a continuation
region is always a bounded interval around such a level. The stopping region
is $(0,a]\cup[b,\infty)$ throughout Sections~\ref{sec:fb}--\ref{sec:ident}
and $(0,a_1]\cup[b_1,a_2]\cup[b_2,\infty)$ when the two skews share a sign
and the bands stay apart.

\paragraph{Estimation.}
Theorem~\ref{thm:ident} constrains what can be learned from exercise decisions,
not from the price path. The threshold-estimator literature for skewed and
oscillating diffusions recovers skew parameters from high-frequency occupation
times, and nothing here obstructs that; the point is that the two data sources
are not interchangeable.

\paragraph{Role reversal.}
The model has no mechanism by which a level changes its role: the character of
each level is fixed by the sign of its skew parameter, and the state is the
price alone. Role reversal at a single level, as in \citet{JackaMaeda20} and
\citet{Henderson26}, requires an auxiliary state recording where the price has
been. The two classes of model are complementary --- flag models describe one
level whose role depends on the path, the present model two levels whose roles
do not --- and a natural extension combines them, with a flag switching the
skew parameters at each of two levels.

\paragraph{Extensions.}
Besides the combination just described, finite maturity seems worthwhile: the
boundaries become curves and the criterion \eqref{eq:criterion} presumably
becomes a time-dependent inequality.

\newcommand{\aideclaration}{In preparing this work the author used
Anthropic's Claude as an assistant for drafting and revising the exposition,
for checking and helping to complete several proofs, for the numerical
verification reported in Appendix~\ref{app:num}, for the figures, and for
\LaTeX\ preparation. All definitions, statements and proofs were checked by
the author, who reviewed and edited the output and takes full responsibility
for the content of the paper.}

\ifarxiv
  \section*{Declaration on the use of generative AI}

  \aideclaration
\else
  \section*{Declarations}

  \paragraph{Funding.} No funding was received for this work.

  \paragraph{Competing interests.} The author is employed by Mizuho
  Securities Co., Ltd. The work was carried out in a personal capacity and
  the views expressed are his own; the author has no other competing
  interests.

  \paragraph{Use of generative AI.} \aideclaration
\fi

\appendix

\section{Two proofs}
\label{app:proofs}

\begin{proof}[Proof of Proposition~\ref{prop:transition}]
The claim is about the single scalar $\pi=B(a(b))/D_H(b)$ of
\eqref{eq:scalar} viewed as a function of $b$ on $(H,\bar b)$. We show it is
strictly increasing, then evaluate it at each end.

\smallskip
\emph{Step 1: derivatives of $G_H$ and $D_H$.}
Substituting $t:=H/b\in(0,1)$ gives $G_H(b)=b\,P(t)$ and $D_H(b)=b\,Q(t)$
with $P(t)=\kappa_1t^{\alpha_1}+\kappa_2t^{\alpha_2}$ and
$Q(t)=\kappa_1\alpha_1t^{\alpha_1}+\kappa_2\alpha_2t^{\alpha_2}$. By
\eqref{eq:vkprime},
\[
\frac{dG_H}{db}=c\bigl(t^{\alpha_1}-t^{\alpha_2}\bigr)>0,
\qquad
\frac{dD_H}{db}=c\bigl(\alpha_1t^{\alpha_1}-\alpha_2t^{\alpha_2}\bigr)<0,
\]
the first because $t<1$ and $\alpha_1<0<\alpha_2$ give
$t^{\alpha_1}>1>t^{\alpha_2}$, the second because both terms are negative.

\smallskip
\emph{Step 2: $\pi$ is strictly increasing in $b$.}
The obstacle to differentiating \eqref{eq:scalar} directly is that its
numerator $B$ is a function of $a$ and its denominator $D_H$ a function of
$b$. Re-parameterise by a variable in which both are functions of the same
argument: continuity at $H$ says $V(H^-)=V(H^+)$, and we call this common
value
\[
y:=A(a)=G_H(b),
\]
the level the candidate attains at the resistance point. Since $dG_H/db>0$ the
map $b\mapsto y$ is a strictly increasing bijection, and by
Corollary~\ref{lem:sens} $a\mapsto y$ is a strictly decreasing one. Both $B$
and $D_H$ are therefore functions of $y$ alone, with
\[
\frac{dB}{dy}=\frac{dB/da}{dA/da}>0,
\qquad
\frac{dD_H}{dy}=\frac{dD_H/db}{dG_H/db}<0 ,
\]
the first by Corollary~\ref{lem:sens} (both factors negative), the second by
Step~1. Moreover $B,D_H>0$, being $H$ times the one-sided derivatives of $V$
at $H$, which are positive by Lemma~\ref{lem:incr}. Applying the quotient
rule to $\pi=B/D_H$ in \eqref{eq:scalar}, now with both $B$ and $D_H$ regarded
as functions of $y$, gives
\[
\frac{d\pi}{dy}=\frac{(dB/dy)\,D_H-B\,(dD_H/dy)}{D_H^2}>0,
\]
since both $(dB/dy)\,D_H$ and $-B\,(dD_H/dy)$ are positive. Finally, the
chain rule gives $d\pi/db=(d\pi/dy)(dy/db)$ with $dy/db=dG_H/db>0$, so
$d\pi/db>0$.

\smallskip
\emph{Step 3: the value of $\pi$ at $b=H$.}
Here $t=1$, so $P(1)=Q(1)=1$ by \eqref{eq:kappa} and
$\kappa_1\alpha_1+\kappa_2\alpha_2=1$, giving $G_H(H)=D_H(H)=H$. Hence
$\pi=B(a)/H=V'(H^-)$, the equality case of \eqref{eq:criterion}.

\smallskip
\emph{Step 4: the value of $\pi$ at $b=\bar b$.}
We claim $\pi=1$ there, for the simple reason that the candidate has no kink
at $H$ at all. (Recall that $\beta_H$ is not given along the branch but
determined by $b$ through \eqref{eq:scalar}; the claim is that the value it
takes at this endpoint is zero.) Take the $L$-band $(\bar a,\bar b)$, for which the skew at $H$
is absent, and compare the coefficient pairs on the two sides of $H$. Below,
on $(L,H)$, the pair is $\Phi(\bar a)$: reaching it from the lower boundary
requires crossing $L$, which is what the matrix in \eqref{eq:Phi} does. Above,
on $(H,\bar b)$, the pair is simply $\vk(\bar b)$, with no matrix at all,
because $(H,\bar b)$ contains no skew point and smooth fit at $\bar b$ already
fixes the coefficients there. Since $T_0=I$ by \eqref{eq:Tprops}, crossing $H$
acts as the identity, so these two pairs are one and the same vector:
\[
\Phi(\bar a)=\vk(\bar b).
\]
Now contract this one identity with each of the two functionals of
\eqref{eq:ABGD} in turn. Against $e_H$ it gives
\[
A(\bar a)=G_H(\bar b),
\]
which is value matching, $V(H^-)=V(H^+)$, at $b=\bar b$. Since $a(b)$ was
defined as $A^{-1}(G_H(b))$ and $A$ is injective, this says exactly that
$a(\bar b)=\bar a$. In terms of the regime-B branch
$b\mapsto(a(b),b,\beta_H(b))$ defined after \eqref{eq:scalar}: that its
$b$-coordinate ends at $\bar b$ holds by construction, but that its
$a$-coordinate ends at $\bar a$ is what this identity adds, so the branch
terminates at the $L$-band $(\bar a,\bar b)$ itself. Against $e_H'$ the same identity gives
\[
B(\bar a)=D_H(\bar b),
\]
that is $HV'(H^-)=HV'(H^+)$, so the two one-sided slopes at $H$ agree.
Therefore, evaluating \eqref{eq:scalar} at $b=\bar b$, substituting
$a(\bar b)=\bar a$, and using that the numerator and denominator are then the
same non-zero number,
\[
\pi=\frac{B\bigl(a(\bar b)\bigr)}{D_H(\bar b)}
 =\frac{B(\bar a)}{D_H(\bar b)}=1,
\]
and $\pi=1$ reads $(1+\beta_H)/(1-\beta_H)=1$, so $\beta_H=0$. The last
conclusion can also be read straight off the transmission condition
\eqref{eq:transmission}: with $V'(H^+)=V'(H^-)\ne0$ --- non-zero by
Lemma~\ref{lem:incr} --- the two sides of
$(1+\beta_H)V'(H^+)=(1-\beta_H)V'(H^-)$ cancel to $1+\beta_H=1-\beta_H$. Note
that only the slope equality enters here; the value equality is doing the
separate work of identifying $a(\bar b)$ with $\bar a$.

\smallskip
\emph{Step 5: conclusion.}
By Step~2 the map $b\mapsto\pi$ is continuous and strictly increasing on
$(H,\bar b)$, and by Steps~3 and~4 its values at the two ends are $V'(H^-)$
and $1$. Strict monotonicity therefore forces
\[
V'(H^-)<1,
\]
and the intermediate value theorem makes $b\mapsto\pi$ a bijection from
$(H,\bar b)$ onto $(V'(H^-),1)$.

To transfer this to $\beta_H$, invert the permeability: solving
$\pi=(1+\beta)/(1-\beta)$ gives $\beta=(\pi-1)/(\pi+1)$, a strictly increasing
bijection of $(0,\infty)$ onto $(-1,1)$. Composing, $b\mapsto\beta_H$ is a
strictly increasing bijection of $(H,\bar b)$ onto the image of
$(V'(H^-),1)$, whose endpoints are
\[
\frac{V'(H^-)-1}{V'(H^-)+1}=\beta_H^\star
\qquad\text{and}\qquad
\frac{1-1}{1+1}=0,
\]
the first being exactly the quantity named in Theorem~\ref{thm:corner}. The
image is thus $(\beta_H^\star,0)$, and $V'(H^-)<1$ shows in passing that
$\beta_H^\star<0$, so this interval is non-empty and lies in $(-1,0)$ as the
sign convention \eqref{eq:betasigns} requires. Finally, the Jacobian. Contracting the system \eqref{eq:master} with $e_H$
and $e_H'$ is a change of basis by the matrix $M$ with those two rows, whose
determinant $H^{\alpha_1+\alpha_2}(\alpha_2-\alpha_1)$ is non-zero, and in the
new basis the system reads $A(a)-G_H(b)=0$, $B(a)-\pi D_H(b)=0$. Its Jacobian
in $(a,b)$ has determinant
$\frac{dA}{da}\bigl(-\pi\frac{dD_H}{db}\bigr)
+\frac{dG_H}{db}\frac{dB}{da}$, and substituting
$\frac{da}{db}=\frac{dG_H}{db}\big/\frac{dA}{da}$ from the first equation into
$\frac{d\pi}{db}=\bigl(\frac{dB}{da}\frac{da}{db}D_H-B\frac{dD_H}{db}\bigr)/D_H^2$,
together with $B=\pi D_H$ on the branch, turns this into
\begin{equation}
\det J=\frac{1}{H^{\alpha_1+\alpha_2}(\alpha_2-\alpha_1)}\,
\frac{d\pi}{db}\,\frac{dA}{da}\,D_H .
\label{eq:detJpi}
\end{equation}
The last three factors are non-zero --- $dA/da<0$ by
Corollary~\ref{lem:sens} and $D_H>0$ by Step~2 --- so $\det J\ne0$ is
equivalent to $d\pi/db\ne0$, which is Step~2. Incidentally \eqref{eq:detJpi}
also fixes the sign: $d\pi/db>0$, $dA/da<0$, $D_H>0$ and $\alpha_2>\alpha_1$
give $\det J<0$ throughout the branch.

\smallskip
\emph{Step 6: transversality at $b=H$.}
It remains to show that $d\beta_H/db$ has a finite, positive limit as
$b\downarrow H$. The functions $A,B$ are smooth in $a$ and $G_H,D_H$ are
smooth in $b$ on a neighbourhood of $H$, so the expression for $d\pi/db$ used
above extends continuously to $b=H$, where $a(b)\to a_H:=A^{-1}(G_H(H))$ with
$dA/da(a_H)<0$ by Corollary~\ref{lem:sens}. At $b=H$ we have $t=1$, so Step~1
gives $dG_H/db=0$ and $dD_H/db=-c(\alpha_2-\alpha_1)$. Hence
$da/db=(dG_H/db)/(dA/da)=0$ there, and therefore
$dB/db=(dB/da)(da/db)=0$: the first term in the numerator of $d\pi/db$
vanishes in the limit, since $dB/da$ is finite at $a_H$. With $D_H(H)=H$ and
$B(a_H)=H\,V'(H^-)$ from Step~3, only the second term survives:
\[
\lim_{b\downarrow H}\frac{d\pi}{db}
=-\frac{B(a_H)}{D_H(H)^2}\,\frac{dD_H}{db}(H)
=\frac{c(\alpha_2-\alpha_1)\,V'(H^-)}{H}.
\]
Since $d\beta/d\pi=2/(1+\pi)^2$ and $\pi=V'(H^-)$ at $b=H$,
\begin{equation}
\lim_{b\downarrow H}\frac{d\beta_H}{db}
=\frac{2c(\alpha_2-\alpha_1)\,V'(H^-)}{H\bigl(1+V'(H^-)\bigr)^2}
\in(0,\infty).
\label{eq:transversal}
\end{equation}
\end{proof}

\begin{proof}[Proof of Proposition~\ref{lem:acomp}]
\emph{Positivity of the crossing matrix.} For $\beta\in(0,1)$ every entry of
$T_{-\beta}(z)$ is positive: replacing $\beta$ by $-\beta$ in \eqref{eq:T},
the denominator is $(\alpha_2-\alpha_1)(1+\beta)>0$ and the entries of the
numerator are $(\alpha_2-\alpha_1)+(\alpha_1+\alpha_2)\beta$,
$2\alpha_2\beta z^{\alpha_2-\alpha_1}$, $-2\alpha_1\beta z^{\alpha_1-\alpha_2}$
and $(1-\beta)\alpha_2-(1+\beta)\alpha_1$, all positive because
$\alpha_1<0<\alpha_2$ and $|\alpha_1+\alpha_2|<\alpha_2-\alpha_1$. Hence
$K:=T_{-\beta_H}(H)T_{-\beta_L}(L)$ has positive entries, and for $0<a<L$ the
coefficient pair of $U_a$ on $(H,\infty)$,
\[
\gamma(a):=K\,\vk(a),
\]
has both components positive. Since $\alpha_2>1$, $U_a(x)-x\to\infty$ as
$x\to\infty$, so
\[
h(a):=\min_{x\ge H}\bigl(U_a(x)-x\bigr)
\]
is attained. By Lemma~\ref{lem:sensneg}, $h$ is continuous and strictly
decreasing: if the minimum for $a'$ is attained at $x'$ and $a>a'$, then
$h(a)\le U_a(x')-x'<U_{a'}(x')-x'=h(a')$.

\emph{$h>0$ for small $a$.} Write $(p,q)^{\!\top}$ for the second column of
$K$, whose entries are positive, and
$\mu_0:=\inf_{x>0}\bigl(px^{\alpha_1-1}+qx^{\alpha_2-1}\bigr)$. The
infimand is continuous and strictly positive on $(0,\infty)$, and it tends to
$\infty$ at both ends --- as $x\downarrow0$ through its first term, since
$\alpha_1-1<0$, and as $x\to\infty$ through its second, since
$\alpha_2-1>0$. It therefore attains its infimum at some interior point,
where its value is strictly positive, so $\mu_0>0$. Now
$\vk(a)$ has components $\kappa_1a^{1-\alpha_1}$ and
$\kappa_2a^{1-\alpha_2}$, so $\gamma(a)=K\vk(a)$ is the combination of the
columns of $K$ with those two weights, whence
\[
\gamma(a)\ \ge\ \kappa_2a^{1-\alpha_2}(p,q)^{\!\top}
\]
componentwise, the discarded term being a positive multiple of the first
column of $K$, which is also positive. Since $x^{\alpha_1},x^{\alpha_2}>0$, this gives
$U_a(x)\ge\kappa_2a^{1-\alpha_2}(px^{\alpha_1}+qx^{\alpha_2})
\ge\kappa_2a^{1-\alpha_2}\mu_0\,x$ on $(H,\infty)$, and this exceeds $x$ once
$\kappa_2a^{1-\alpha_2}\mu_0>1$, which holds for small $a$ because
$1-\alpha_2<0$.

\emph{$h(a_1)<0$.} If $H<b_1$, Lemma~\ref{lem:damping} gives
$U_{a_1}(b_1)<U^0_{a_1}(b_1)=b_1$ with $b_1>H$. If $b_1\le H$, work in the
model with the skew at $H$ alone and let $u_m$ be its solution whose
coefficient pair on $(0,H)$ is $\vk(m)$; the subscript is thus always a
point below $H$, at which $u_m$ has smooth fit, and since $a_2<H<b_2$ the
$H$-band is $u_{a_2}$, not $u_{b_2}$. The argument of
Lemma~\ref{lem:sensneg}, with $m$ in place of $a$, shows that
$m\mapsto u_m(x)$ is strictly decreasing for $x>m$. The function $U_{a_1}$
carries $\vk(a_1)$ on $(a_1,L)$, and crossing $L$ turns that pair into
$\vk(b_1)$ by the $L$-band equation $T_{-\beta_L}(L)\vk(a_1)=\vk(b_1)$;
since $b_1\le H$ this is its pair on $(L,H)$, and crossing $H$ applies
$T_{-\beta_H}(H)$ to it, exactly as for $u_{b_1}$. Hence $U_{a_1}=u_{b_1}$
on $(L,\infty)$, while the single-skew candidate at $H$ is $u_{a_2}$, with
$u_{a_2}(b_2)=b_2$. Now $H<H_{\mathrm c}$ means $a_2<b_1$
by \eqref{eq:Hcrit}, and $b_2>H\ge b_1$, so
$U_{a_1}(b_2)=u_{b_1}(b_2)<u_{a_2}(b_2)=b_2$. In both cases $h(a_1)<0$.

\emph{Conclusion.} By the intermediate value theorem there is a unique
$a^\ast\in(0,a_1)$ with $h(a^\ast)=0$. The minimum is not attained at $x=H$:
by Lemma~\ref{lem:sensneg} applied to $U^0$,
$U_{a^\ast}(H)=U^0_{a^\ast}(H)>U^0_{a_1}(H)\ge H$, the last inequality
because $U^0_{a_1}\ge g$ on $(a_1,\infty)$ --- on $(a_1,b_1)$ by
Proposition~\ref{prop:dominance}, and on $[b_1,\infty)$ because the
function with coefficient pair $\vk(m)$ equals $m\,\chi(x/m)$ with
$\chi(t):=\kappa_1t^{\alpha_1}+\kappa_2t^{\alpha_2}$; here $\chi(t)-t$ is
strictly convex, since $\alpha_1(\alpha_1-1)>0$ and $\alpha_2(\alpha_2-1)>0$,
and vanishes together with its derivative at $t=1$, because
$\kappa_1+\kappa_2=1$ and $\kappa_1\alpha_1+\kappa_2\alpha_2=1$ by
\eqref{eq:kappa}. Hence $\chi(t)-t\ge0$ for all $t>0$, so
$m\,\chi(x/m)\ge x$, which is the assertion. So the minimum is attained at
some $b>H$, where $U_{a^\ast}(b)=b$ and $U_{a^\ast}'(b)=1$. This is smooth
fit at $b$, that is $\gamma(a^\ast)=\vk(b)$, which is \eqref{eq:master} for
$(a^\ast,b)$. Conversely, any solution $(a,b)$ of \eqref{eq:master} with
$U_a\ge g$ on $[H,\infty)$ has $U_a(b)=b$ with $b>H$, so $h(a)=0$ and
$a=a^\ast$; this is the uniqueness.

\emph{Continuity.} On $(H,\infty)$, $U_a(x)-x=\gamma_1x^{\alpha_1}+\gamma_2x^{\alpha_2}-x$
with $\gamma_1,\gamma_2>0$ the components of $\gamma(a)$. As a function of
$x$ this is strictly convex, because $\alpha_1(\alpha_1-1)>0$ and
$\alpha_2(\alpha_2-1)>0$. Hence $b(H)$ is the unique minimiser of
$U_{a(H)}(x)-x$ on $[H,\infty)$. The pair $\gamma(a)$ depends continuously on
$(a,H)$, so $h$ is jointly continuous in $(a,H)$. For each fixed $H$ the map
$a\mapsto h(a,H)$ is strictly decreasing, so its zero $a(H)$ is unique; joint
continuity together with that strict monotonicity makes $H\mapsto a(H)$
continuous. The minimiser $b(H)$ is unique, by the strict convexity just noted, and it is
interior, $b(H)>H$, by the previous step, so that $U_{a(H)}'(b(H))=1$. The
function $x\mapsto U_{a(H)}(x)-x$ is jointly continuous in $(x,H)$ and
coercive, uniformly for $H$ in compact subsets of $(L,H_{\mathrm c}]$, so the
minimisers stay in a compact set and $H\mapsto b(H)$ is continuous as well. The argument applies on $(L,H_{\mathrm c}]$, the endpoint
included: at $H=H_{\mathrm c}$, Proposition~\ref{prop:merger} identifies
$U_{a_1}$ with the single-skew candidate at $H$ on $[a_2,b_2]$, where it
dominates $g$ with equality at $b_2$, while beyond $b_2$ its coefficient pair
is $\vk(b_2)$, so it dominates $g$ there too. Hence $h(a_1)=0$ at
$H=H_{\mathrm c}$, and $(a(H),b(H))\to(a_1,b_2)$ as $H\uparrow H_{\mathrm c}$.
\end{proof}

\section{Numerical validation}
\label{app:num}

Throughout $r=0.05$, $\mu=0.01$, $\sigma^2=0.04$, $L=1$. All closed forms were
checked against an independent assembly of the boundary conditions; the
same scripts produced Figs.~\ref{fig:regimes} and~\ref{fig:merger}, and are
available from the author.

\paragraph{Regime A and the existence of a band.}
Solving \eqref{eq:Asystem} for a single skew at $L$:

\begin{center}
\begin{tabular}{@{}lll@{}}
\toprule
$\beta_L$ & band $(a,b)$ & width\\
\midrule
$-0.30$ & no admissible band & ---\\
$0$     & none: sell immediately & $0$\\
$0.05$  & $(0.975163,\,1.025150)$ & $0.049987$\\
$0.30$  & $(0.856907,\,1.154180)$ & $0.297273$\\
$0.60$  & $(0.731329,\,1.311287)$ & $0.579958$\\
$0.90$  & $(0.625261,\,1.465402)$ & $0.840141$\\
\bottomrule
\end{tabular}
\end{center}

\noindent
confirming Corollary~\ref{cor:noband}. The system is invariant under
$(\beta,a,b)\mapsto(-\beta,b,a)$, which is why the only solution at
$\beta_L=-0.30$ is the swap of the $\beta_L=+0.30$ pair, with $a>b$, and is
therefore inadmissible.

\paragraph{Theorem~\ref{thm:Airrelevant}.}
With $\beta_L=0.30$ and $H=3.0$, the solution is
\[
(a,b)=(0.85690699,\,1.15417976)
\]
for every $\beta_H\in\{0,-0.3,-0.6,-0.9\}$, identical to eight decimal places
and equal to the single-skew value above.

\paragraph{Theorem~\ref{thm:corner} and Proposition~\ref{prop:transition}.}
With $\beta_L=0.90$, tracking the regime-B branch by continuation in
$\beta_H$:

\begin{center}
\begin{tabular}{@{}lllll@{}}
\toprule
$H$ & branch ends at & $b$ there & $a$ there & $a$ from \eqref{eq:Ceq}\\
\midrule
$1.30$ & $\beta_H=-0.1496$ & $1.300254$ & $0.640912$ & $0.640912$\\
$1.10$ & $\beta_H=-0.5146$ & $1.100038$ & $0.740293$ & $0.740293$\\
$1.05$ & $\beta_H=-0.6722$ & $1.050038$ & $0.803918$ & $0.803918$\\
\bottomrule
\end{tabular}
\end{center}

\noindent
The Jacobian determinant of \eqref{eq:master} along the branch equals
$-0.41$, $-0.11$ and $-0.048$ respectively at the endpoints, bounded away from
zero and negative as \eqref{eq:detJpi} predicts. The closed form $\beta_H^\star=(V'(H^-)-1)/(V'(H^-)+1)$ of
Theorem~\ref{thm:corner} gives $-0.1498$, $-0.51463$ and $-0.67222$,
matching the continuation endpoints to four significant figures.

\paragraph{Theorem~\ref{thm:verification}.}
Over $400$ random draws of $(\beta_L,\beta_H,H)$, with $\beta_L$ uniform on
$(0.05,0.95)$, $\beta_H$ uniform on $(-0.95,-0.05)$ and $\log(H/L)$ uniform on
$(0.02,0.9)$, yielding $301$ cases in regime~A, $18$ in~B and $81$ in~C, there
were no violations of $V\ge g$ on $(a,b)$ and no
local-time defect of the wrong sign. The monotonicity established in the proof
of Proposition~\ref{prop:dominance}, namely $W$ increasing on $(a,L)$ and
decreasing on $(L,b)$, held in every case, including in regime B across the
kink at $H$.

\paragraph{Proposition~\ref{prop:transition}.}
The closed form \eqref{eq:detJ} for $\det J$ agrees with finite differences to
a relative error below $10^{-9}$ at every point of the branch tested. The
monotonicity established in the proof is visible directly: with
$\beta_L=0.90$, $L=1$, $H=1.10$, the ratio $\pi$ rises strictly from
$0.32045844$ at $b=H$ to $1$ at $b=\bar b=1.465402$, so that $\beta_H$ sweeps
$(-0.51462548,\,0)$; and $(dB/da)/(dA/da)$ stays in $(0.38,0.45)$ while
$(dD_H/db)/(dG_H/db)$ is everywhere negative.

\paragraph{Section~\ref{sec:samesign}.}
With $\beta_L=\beta_H=0.60$ and $L=1$, the $L$-band is
$(0.731329,1.311287)$ for every $H$ and every $\beta_H$ tested, confirming the
decoupling of Proposition~\ref{prop:decouple}. The critical separation is
$H_{\mathrm c}=1.79301900$, where the bands are
$(0.731329,1.311287)$ and $(1.311287,2.351162)$ and the merged system returns
$(0.731329,2.351162)$, agreeing to six decimals on both endpoints. At $H=2.2$
the merged system still has the solution $(0.9418,2.8438)$, but at $x=1.460$
it takes the value $1.3247$ against a pay-off of $1.4603$, so it is not a
majorant. Sweeping $(\beta_L,\beta_H)\in\{0.3,0.6,0.8,0.9\}^2$ and
$H\in[1.10,1.75]$, the merged candidate dominates the pay-off at every
$H<H_{\mathrm c}$ and fails for $H>H_{\mathrm c}$; the first failure
in our sweep is $\beta_L=0.60$, $\beta_H=0.30$, $H=1.60$, where
$H_{\mathrm c}=1.5303$.

\paragraph{Theorem~\ref{thm:inversion}.}
The inversion \eqref{eq:inversion} round-trips exactly. Applied to the band
\[
(a,b)=(0.856907,\ 1.154180)
\]
generated by $\beta_L=0.30$ and $L=1$, it returns $L=1.000000$ and
$\beta_L=0.300000$. For $400$ random target bands
$(a,b)$ with $\log(b/a)$ uniform on $(0.05,3.5)$ --- ratios from $1.05$ to
$33.1$ --- and $r$ uniform on $(0.01,0.4)$, an admissible $(\sigma^2,\mu)$
giving $\beta_L\in(0,1)$ and $a<L<b$ was found in every case. The residuals
of \eqref{eq:Asystem} at the constructed parameters are at machine precision;
for the wide band $(0.4,12.0)$ with $r=0.09$, the choice
$\sigma^2=0.005$, $\mu/r=0.995$ gives $L=10.522634$,
$\beta_L=0.43295363$ with residuals $1.8\times10^{-15}$ and
$8.9\times10^{-16}$.

\end{document}